\documentclass[11pt]{article}
\usepackage[latin1]{inputenc}
\usepackage[T1]{fontenc}
\usepackage{lmodern}
\usepackage[a4paper]{geometry}
\usepackage{amsmath, amssymb}
\usepackage{amsthm}
\usepackage{dsfont}
\usepackage[english]{babel}
\usepackage{a4wide}
\usepackage{graphicx,color}
\usepackage{amsopn}
\usepackage{array}
\usepackage[nottoc, notlof, notlot]{tocbibind}
\usepackage{subfigure,enumitem}
\usepackage{mathtools}
\usepackage[colorlinks]{hyperref}
\hypersetup{
linkcolor=blue,
citecolor=blue,
}

\newcommand{\thefont}[2]{\fontsize{#1}{#2}\fontshape{n}\selectfont}
\newcommand{\1}{\rlap{\thefont{10pt}{12pt}1}\kern.16em\rlap{\thefont{11pt}{13.2pt}1}\kern.4em}

\title{Data-driven regularization of Wasserstein barycenters with an application to multivariate density registration}
\author{J\'er\'emie Bigot\footnote{J. Bigot is a member of Institut Universitaire de France.}, Elsa Cazelles \& Nicolas Papadakis  \\
\\  Institut de Math\'ematiques de Bordeaux et CNRS  (UMR 5251)   \\ Universit\'e de Bordeaux }

\begin{document}

\theoremstyle{plain}
\newtheorem{thm}{Theorem}[section]
\theoremstyle{plain}
\newtheorem{prop}{Properties}[section]
\theoremstyle{plain}
\newtheorem{hyp}{Assumption}[section]
\theoremstyle{plain}
\newtheorem{proposition}{Proposition}[section]
\theoremstyle{definition}
\newtheorem{defi}[thm]{Definition}
\theoremstyle{plain}
\newtheorem{lemma}[thm]{Lemma}
\newtheorem{cor}[thm]{Corollary}
\newtheorem{rmq}{Remark}[section]
\theoremstyle{definition}
\newtheorem{ex}{Example}[section]
\newtheorem{appli}{Application}[section]

\newcommand{\E}{{\mathbb E}}
\newcommand{\R}{{\mathbb R}}
\newcommand{\N}{{\mathbb N}}
\newcommand{\Z}{{\mathbb Z}}
\renewcommand{\P}{{\mathbb P}}
\newcommand{\G}{{\mathbb G}}
\renewcommand{\L}{\mathbb L}

\newcommand{\PP}{{\mathcal P}}
\newcommand{\HH}{{\mathcal H}}
\newcommand{\DD}{{\mathcal D}}
\newcommand{\BB}{{\mathcal B}}
\newcommand{\MM}{{\mathcal M}}

\newcommand{\bX}{\boldsymbol{X}}
\newcommand{\bY}{\boldsymbol{Y}}
\newcommand{\bnu}{\boldsymbol{\nu}}
\newcommand{\bmu}{\boldsymbol{\mu}}
\newcommand{\boldeta}{\boldsymbol{\eta}}
\newcommand{\bsigma}{\boldsymbol{\sigma}}
\newcommand{\balpha}{\boldsymbol{\alpha}}
\newcommand{\bfun}{\boldsymbol{f}}
\newcommand{\bm}{\boldsymbol{m}}
\newcommand{\bGamma}{\boldsymbol{\Gamma}}
\newcommand{\br}{\boldsymbol{r}}
\newcommand{\bq}{\boldsymbol{q}}
\newcommand{\bG}{\boldsymbol{G}}

\newcommand{\tr}{\tilde{r}}
\newcommand{\tH}{\tilde{H}}
\newcommand{\tg}{\tilde{g}}
\newcommand{\tS}{\tilde{\Sigma}}

\newcommand{\argmin}{argmin}
\newcommand{\diag}{\operatorname{diag}}

\def\argmin{\mathop{\rm arg \; min}\limits}%
\newcommand{\uargmin}[1]{\underset{#1}{\argmin}\;}

\numberwithin{equation}{section} 

\maketitle

\thispagestyle{empty}

\begin{abstract}
We present a framework to simultaneously align and smooth data in the form of  multiple point clouds sampled from unknown densities with support in a $d$-dimensional Euclidean space. This work is motivated by applications in bioinformatics where researchers aim to  automatically homogenize 
large datasets to compare and analyze characteristics within a same cell population. Inconveniently, the information acquired is most certainly noisy
due to mis-alignment caused by technical variations of the environment. To overcome this problem, we propose to register multiple point clouds by using the notion of regularized barycenters (or Fr\'{e}chet mean) of a set of probability measures with respect to the Wasserstein metric.
A first approach consists in penalizing a Wasserstein barycenter with a convex functional as recently proposed in \cite{BCP17}. 
 A second strategy is to transform the Wasserstein metric itself into an entropy regularized transportation cost between probability measures as introduced in \cite{cuturi2013sinkhorn}. The main contribution of this work is to propose data-driven choices for the regularization parameters involved in each approach using the Goldenshluger-Lepski's principle. Simulated data sampled from Gaussian mixtures are used to illustrate each method, and an application to the analysis of flow cytometry data is finally proposed. 
 This way of choosing of the regularization parameter for the Sinkhorn barycenter is also analyzed through the prism of an oracle inequality that relates the error made by such data-driven estimators to the one of an ideal estimator.
\end{abstract}

\section{Introduction}

\subsection{Motivations}

This paper is concerned with the  problem of aligning (or registering) elements of a dataset that can be modeled  as $n$ random densities, or more generally, probability measures supported on $\R^{d}$.  As raw data in the form of densities are generally not directly available, we focus on the setting where one has access to a set of  random vectors  $(X_{i,j})_{1 \leq j \leq p_{i}; \;  1 \leq i \leq n}$ in $\R^{d}$  organized in the form of $n$ subjects (or multiple point clouds), such that $X_{i,1},\ldots,X_{i,p_{i}}$ are iid observations sampled from a random density  $\bfun_{i}$  for each $1 \leq i \leq n$. In  presence of phase variation in the observations due to mis-alignment in the acquisition process, it is necessary to use a registration step to obtain meaningful notions of mean and variance from the analysis of the dataset. In Figure \ref{fig:ex_intro}(a),  we display a simulated example of  $n=2$ random distributions made of observations sampled from random Gaussian mixtures $\bfun_{i}$. 
Certainly, one can estimate a mean density using a preliminary smoothing step (with a kernel $K$ and data-driven choices of the bandwidth parameters $(h_i)_{i=1,\ldots,n}$) followed by standard averaging, that is considering
\begin{equation}
\label{eq:Eucli_mean}
\bar{f}_{n,p}(x) = \frac{1}{n} \sum_{i=1}^{n} \frac{1}{p_{i} h_{i}} \sum_{j=1}^{p_{i}} K \left( \frac{x - X_{i,j}}{h_{i}} \right), \; x \in \R^d.
\end{equation}
Unfortunately this leads to an estimator which is not consistent with the shape of the $\bfun_{i}$'s. Indeed, the estimator $\bar{f}_{n,p}$ (Euclidean mean) has four modes due to mis-alignment of the data from different subjects.

\begin{figure}[h]
\centering
\subfigure[]{\includegraphics[width=0.45 \textwidth,height=0.45\textwidth]{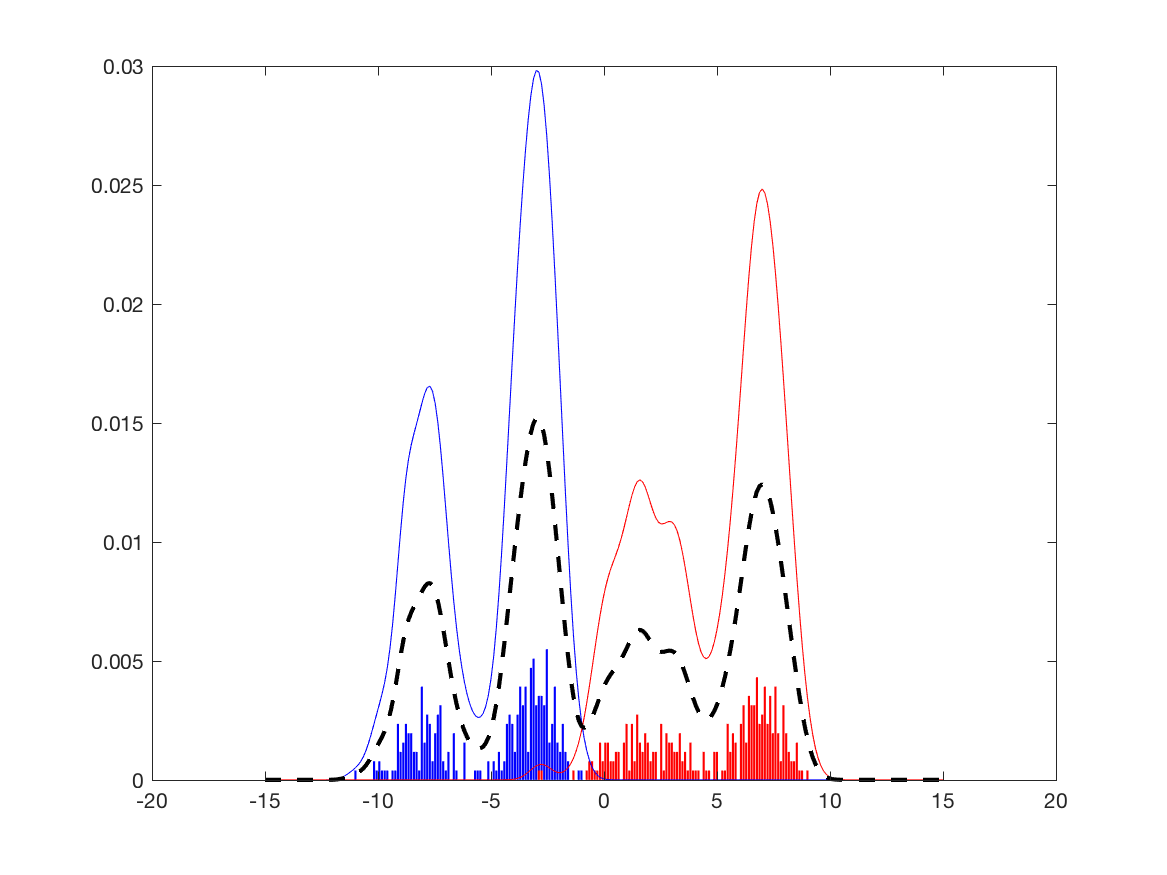}}
\subfigure[]{\includegraphics[width=0.45 \textwidth,height=0.45\textwidth]{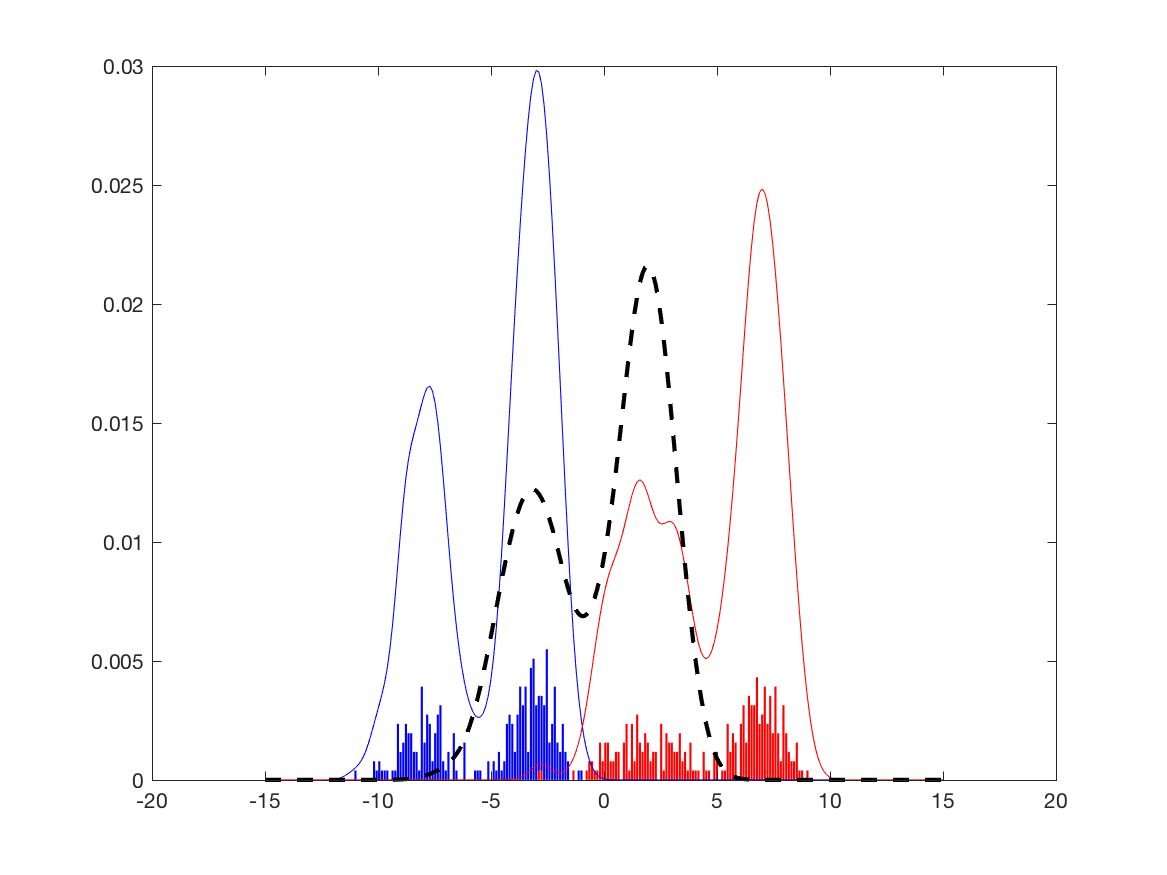}}
\caption{A simulated example of  $n=2$ subjects made of $p_1 = p_2 = 300$ observations sampled from Gaussian mixtures with random means and variances. The red and blue bar graphs are histograms with bins of equal and very small size to display the two sets of observations.  The red and blue curves represent the kernel density estimators associated to each subject with data-driven choices (using cross-validation) of the bandwidths. (a) The dashed black curve is the Euclidean mean $\bar{f}_{n,p}$ of the red and blue densities. (b) The solid black curve is the entropy regularized Wasserstein barycenter $\hat{\br}^{\hat{\varepsilon}}_{n,p}$ (defined in \eqref{eq:defr}) of the raw data using a Sinkhorn divergence and the numerical approach from \cite{CuturiPeyre}, with a data-driven choice for $\hat{\varepsilon}=1.6$.}
\label{fig:ex_intro}
\end{figure}

The need to account for phase variability in the statistical analysis of such datasets is a well-known problem in various scientific fields. For the one-dimensional case ($d=1$), examples can be found in biodemographic and genomics studies \cite{ZhangMuller},  economics  \cite{MR1946423}, and in the analysis of spike trains in neuroscience  \cite{Srivastava} or functional connectivity between brain regions \cite{PetersenMuller}. For $d \geq 2$ the issue of data registration  arises in the statistical analysis of spatial point processes  \cite{Gervini2016474,Pana17} or flow cytometry data \cite{hahne2010per,pyne2014joint}.

\subsection{Related works}

In this work, in order to simultaneously align and smooth multiple point clouds (in the idea of recovering the underlying density function), we average the data using the notion of Wasserstein barycenter (as introduced in the seminal work \cite{agueh2011barycenters}). Surely, this barycenter has been shown to be a relevant tool to account for phase variability in density registration \cite{BGKL18,Pana15,Pana17}. A Wasserstein barycenter is a Fr\'echet mean \cite{fre} in the space $\PP_2(\Omega)$ of probability measures with finite second moment supported on a convex domain $\Omega \subset \R^{d}$. It is endowed with the Wasserstein metric $W_2$ defined as
$$
W_2(\mu,\nu)=\inf_{\pi\in\Gamma(\mu,\nu)}\left(\iint_{\Omega^2}\vert x-y\vert^2d\pi(x,y)\right)^{1/2}, \quad \mbox{ for } \mu,\nu \in \PP_2(\Omega),
$$
where $\Gamma(\mu,\nu)$ is the set of probability measures on the product space $\Omega\times\Omega$ with respective marginals $\mu$ and $\nu$, and $\vert \cdot \vert$ denotes the usual Euclidean norm on $\R^d$.

In the case of a finite space $\Omega_{N}=\{x_1,\ldots,x_N\}\in(\R^d)^N$ of cardinal $N$, a discrete probability distribution $r$ (with fixed support included in $\Omega_{N}$) is identified by a vector in the simplex $\Sigma_N=\{r=(r_1,\ldots,r_N)\in\R^N_{+} \ \mbox{with} \ \sum_{k=1}^Nr_k=1\}$ such that $r=\sum_{k=1}^Nr_k\delta_{x_k}$ where $\delta_x$ is the Dirac distribution at $x$. The Wasserstein distance between two discrete distributions $r$ and $q$ in $\Sigma_N$ then becomes
$$W_2(r,q):=\underset{U\in U(r,q)}{\min}\  \langle C, U\rangle^{1/2},$$
where  the set of couplings is defined as $U(r,q):=\{U\in\R^{N\times N}_{+} \ \mbox{such that} \ U\mathds{1}_N=r,\ U^T\mathds{1}_N=q\}$ with $\mathds{1}_N$ the $N$ dimensional vector with all entries equal to $1$ and $C$ the cost matrix given by $C_{ml}=\vert x_m-x_l\vert^2$, for all $m,l\in\{1,\ldots,N\}$.

In what follows, we consider  two approaches for the computation of a regularized Wasserstein barycenter of $n$ discrete probability measures given by
\begin{equation}
\hat{\bnu}_i^{p_i}=\frac{1}{p_i}\sum_{j=1}^{p_i}\delta_{X_{i,j}} \qquad \mbox{for} \quad 1 \leq i \leq n, \label{eq:data}
\end{equation}
from observations $(X_{i,j})_{1 \leq j \leq p_{i}; \;  1 \leq i \leq n}$.

\subsubsection{Penalized Wasserstein barycenters} 

Adding a convex penalization term to the definition of an empirical Wasserstein barycenter  \cite{agueh2011barycenters}  leads to the estimator
\begin{equation}
\hat{\bmu}^{\gamma}_{n,p} = \uargmin{\mu\in\PP_2(\Omega)}  \frac{1}{n}\sum_{i=1}^nW_2^2(\mu,\hat{\bnu}_i^{p_i})+\gamma E(\mu), \label{eq:defmu}
\end{equation}
where $\gamma > 0$ is a regularization parameter, and  $E : \PP_{2}(\Omega) \to \R_{+}$ is a smooth and convex penalty function which enforces the measure $\hat{\bmu}^{\gamma}_{n,p}$ to be absolutely continuous. Theoretical properties (such as existence and consistency) of the penalized Wasserstein barycenter $ \hat{\bmu}^{\gamma}_{n,p}$ have been considered in \cite{BCP17}. In this paper, we discuss the choice of the penalty function $E$, as well as the numerical computation of $\hat{\bmu}^{\gamma}_{n,p}$ (using an appropriate discretization of $\Omega$ and a binning of the data), and its benefits for statistical data analysis. 
\begin{rmq}
Note that the restriction of the minimization in \eqref{eq:defmu} to the set $\PP_2(\Omega)$ instead of the whole Wasserstein space  $\PP_2(\R^d)$ is inconsequential.
\end{rmq}

\subsubsection{Fr\'echet mean with respect to a Sinkhorn divergence}

Another way to regularize an empirical Wasserstein barycenter is to use the notion of entropy regularized optimal transportation \cite{cuturi2013sinkhorn,CarlierDPS17} leading to the so-called Sinkhorn divergence
\begin{equation}
\label{eq:Sink_div}
W_{2,\varepsilon}^2(\mu,\nu) = \inf_{\pi\in\Gamma(\mu,\nu)} \iint_{\Omega^2}\vert x-y\vert^2d\pi(x,y)  - \varepsilon \textrm{h} (\pi),
\end{equation}
where $\varepsilon > 0$ is a regularization parameter, and $ \textrm{h}$ stands for the (negative) entropy of the transport plan $\pi$ with respect to the Lebesgue measure on $\Omega \times \Omega$. A regularized Wasserstein barycenter \cite{cuturi2013fast,CuturiPeyre} is then obtained by considering the estimator
\begin{equation}
\hat{\br}^{\varepsilon}_{n,p} = \uargmin{\mu\in\PP_2(\Omega)}  \frac{1}{n}\sum_{i=1}^n W_{2,\varepsilon}^2(\mu,\hat{\nu}_i^{p_i}), \label{eq:defr}
\end{equation}
that can be interpreted as a  Fr\'echet mean with respect to a Sinkhorn divergence and that we call Sinkhorn barycenter. %

\subsection{Contributions}
The selection of the regularisation parameters $\gamma$ or $\epsilon$ is the main issue for computing  adequate penalized or Sinkhorn barycenters in practice. In this paper, we rely  on the Goldenshluger-Lepski (GL) principle in order to perform an automatic calibration of such parameters.

\subsubsection{Data-driven choice of the regularizing parameters}

 The main contribution in this paper is to propose a data-driven choice for the regularization parameters $\gamma$ in \eqref{eq:defmu} and $\varepsilon$ in \eqref{eq:defr}  using the  Goldenshluger-Lepski (GL) method (as formulated in \cite{lacour2016minimal}), which leans on a bias-variance trade-off function, described in details in Section \ref{sec:GL}.
The method consists in comparing estimators pairwise, for a given range of regularization parameters, with respect to a given loss function.
It  provides an optimal regularization parameter that minimizes a bias-variance trade-off function. We displayed in Figure \ref{fig:tradeoff_gaussian_intro} this functional for the dataset of Figure \ref{fig:ex_intro}, which leads to an optimal (in the sense of GL's strategy) parameter choice $\hat{\varepsilon}=1.6$. The entropy regularized Wasserstein barycenter in Figure \ref{fig:ex_intro}(b) is thus chosen accordingly.

\begin{figure}
\centering
\includegraphics[width=0.5 \textwidth,height=0.42\textwidth]{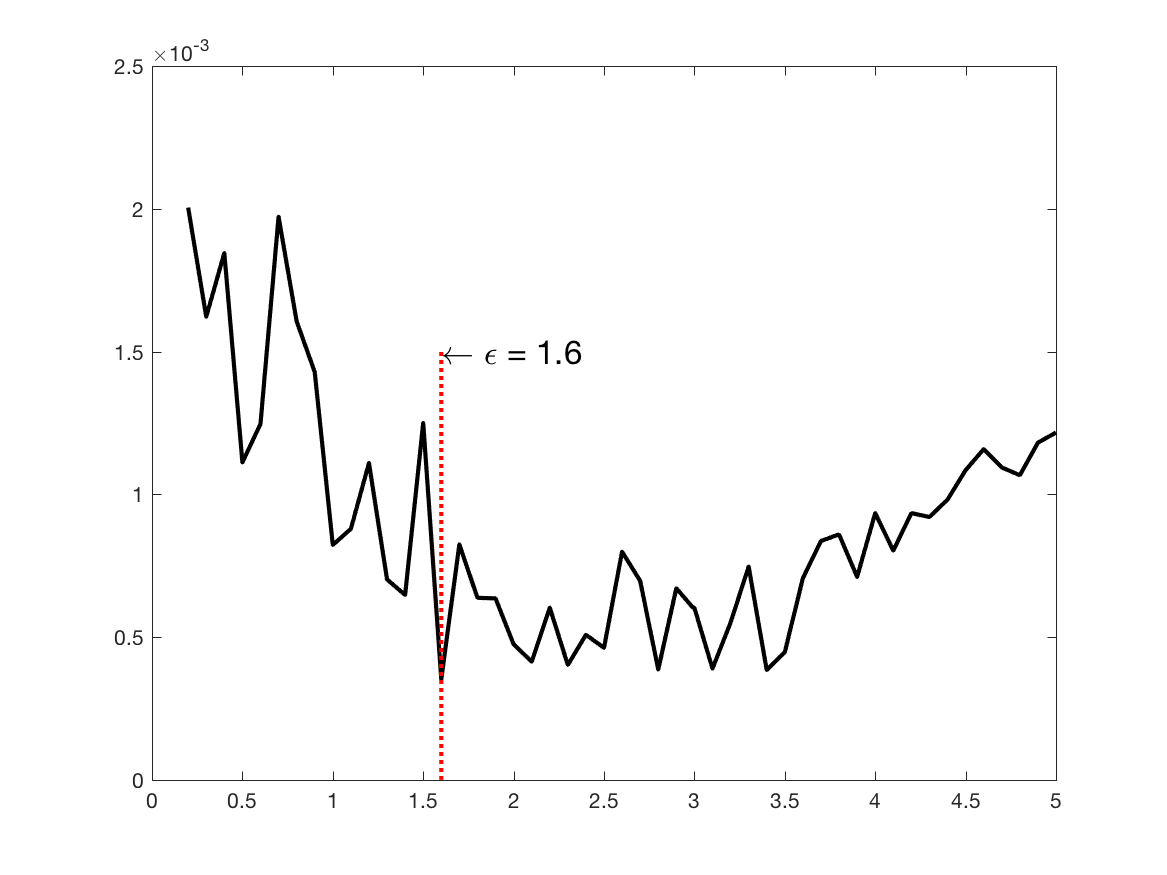}
\caption{The GL's trade-off function associated to the entropy regularized Wassertein barycenters of the dataset in Figure \ref{fig:ex_intro}, for $\varepsilon$ ranging from $0.2$ to $5$}
\label{fig:tradeoff_gaussian_intro}
\end{figure}

 From the results on simulated data displayed in Figure \ref{fig:ex_intro}(b),  it is clear that computing the regularized Wasserstein barycenter $\hat{\br}^{\varepsilon}_{n,p}$ (with an appropriate choice for $\varepsilon$) leads to the estimation of mean density whose shape is consistent with the distribution of the data for each subject. In some sense, the regularization parameters $\gamma$ and $ \varepsilon$ may also be interpreted as the usual bandwidth parameter in kernel density estimation, and their choice greatly influences the shape of the estimators $\hat{\bmu}^{\gamma}_{n,p}$ and $\hat{\br}^{\varepsilon}_{n,p}$ (see Figure  \ref{fig:Sinkhorn_gaussian1D} and Figure \ref{fig:Regbar_gaussian1D}  in Section \ref{sec:expe}).

 To choose the optimal parameter, the GL's strategy  requires 
 an upper bound on the decay to zero of the expected $\L_2(\Omega)$ distance between a regularized empirical barycenter (computed from the data) and its population counterpart. For penalized barycenters \eqref{eq:defmu},   adequate bounds have already been provided in \cite{BCP17}.
 
 \subsubsection{Variance of Sinkhorn estimators}

 To the best of our knowledge, the automatic selection of $\varepsilon$ in the definition of a Sinkhorn divergence has not been considered so far. 
 Another   main contribution of this work then consists in derivating upper bounds on the variance for the estimators 
 $\hat{\br}^{\varepsilon}_{n,p}$ which explicitly depends on 
$\varepsilon$, the number $n$ of  measures, the number $p = \min_{1 \leq i \leq n} p_{i}$ of observations per measures and the size of their support. Such bounds therefore make possible the application of the GL's strategy.

 \subsubsection{Theoretical analysis of the GL's strategy for Sinkhorn barycenters}
 
 For Sinkhorn barycenters, we show that the GL's principle leads to a data-driven choice $\hat{\varepsilon}$ of the regularization parameter which allows to obtain an estimator   $\hat{\br}_{n,p}^{\hat{\varepsilon}}$ that satisfies an oracle inequality implying an optimal trade-off of this estimator between bias and variance terms among a collection of regularization parameters.
 

\subsubsection{Computation issues: binning of the data and discretization of \texorpdfstring{$\Omega$}.}
In our numerical experiments we consider algorithms  for computing  regularized barycenters from a set of discrete measures (or histograms) defined on possibly different grids of points of $\R^{d}$ (or different partitions). 
They are numerical approximations of the regularized Wasserstein barycenters  $\hat{\bmu}_{n,p}^{\gamma}$ and $\hat{\br}_{n,p}^{\varepsilon}$ by a discrete measure of the form $\sum_{k=1}^{N} w_k \delta_{x_k}$  using a fixed grid $\Omega_{N}=\{x_1,\ldots,x_N\}$  of $N$ equally spaced points $x_k\in\mathbb{R}^d$ (bin locations). For simplicity, we adopt a binning of the data \eqref{eq:data} on the same grid, leading to a dataset of discrete measures (with supports included in $\Omega_{N}$) that we denote
\begin{equation}
\label{eq:data_bin}
\tilde{\bq}_i^{p_i}=\frac{1}{p_i}\sum_{j=1}^{p_i}\delta_{\tilde{X}_{i,j}}, \mbox{ where } \tilde{X}_{i,j} = \uargmin{x \in \Omega_{N}} \vert x - X_{i,j} \vert,
\end{equation}
for $1 \leq i \leq n$.
In this paper, we rely on the smooth dual approach proposed in \cite{CuturiPeyre} to compute penalized and Sinkhorn barycenters on a grid of equi-spaced points in $\Omega$ (after a proper binning of the data).

Binning (i.e. choosing the grid $\Omega_{N}$) surely incorporates some sort of additional regularization. A discussion on the influence of the grid size $N$ on the smoothness of the barycenter is proposed in  Section \ref{sec:GL} where we describe the GL's strategy. In our simulations, the choice of $N$ is mainly guided by numerical issues on the computational cost of the algorithms used to approximate  $\hat{\bmu}_{n,p}^{\gamma}$ and $\hat{\br}_{n,p}^{\varepsilon}$.

\subsubsection{Registration of flow cytometry data} \label{sec:flow}
In  biotechnology, flow cytometry is a high-throughput technique that can measure a large number  of surface and intracellular markers of single cell in a biological sample. With this technique, one can assess individual characteristics (in the form of multivariate data) at a cellular level to determine the type of cell, their functionality and the way they differ. At the beginning of flow cytometry, the analysis of such data was performed manually by visually separating regions or gates of interest on a series of sequential bivariate projection of the data, a process known as gating. However, the development of this technology now leads to datasets made of multiple measurements (e.g.\ up to 18) of millions of individuals cells. A significant amount of work has thus been carried out in recent years to propose automatic statistical methods to overcome the limitations of manual gating  (see e.g.\ \cite{hahne2010per,hejblum2017sequential,lee2016modeling,pyne2014joint} and references therein).

When analyzing samples in cytometry measured from different patients, a critical issue is data registration across patients. As carefully explained in \cite{hahne2010per}, the alignment of flow cytometry data is a preprocessing step which aims at removing effects coming from technological issues in the acquisition of the data rather than significant biological differences. In this paper, we use data analyzed in \cite{hahne2010per} that are obtained from a renal transplant retrospective study conducted by the Immune Tolerance Network (ITN). 
This dataset is freely available from the {\tt flowStats} package of Bioconductor \cite{Gentleman2004} that can be downloaded from
\url{http://bioconductor.org/packages/release/bioc/html/flowStats.html}. It consists of samples from $15$ patients.

 After an appropriate scaling trough  an arcsinh transformation and  an initial gating on total lymphocytes to remove artefacts, we focus our analysis on the cell markers FSC (forward-scattered light) and SSC (side-scattered light) which are of interest to measure the volume and morphological complexity of cells. The number of considered cells by patient varies from $88$ to $2185$. The resulting dataset is displayed in Figure \ref{fig:ex_cytometry2D}. It clearly shows a mis-alignment issue between measurements from different patients.

  The last contribution of the paper is thus to demonstrate the usefulness of regularized Wasserstein barycenters to correct mis-alignment effects in the  analysis of data produced by flow cytometers. 

\begin{figure}[h]
\centering
\includegraphics[width=0.85 \textwidth,height=0.65\textwidth]{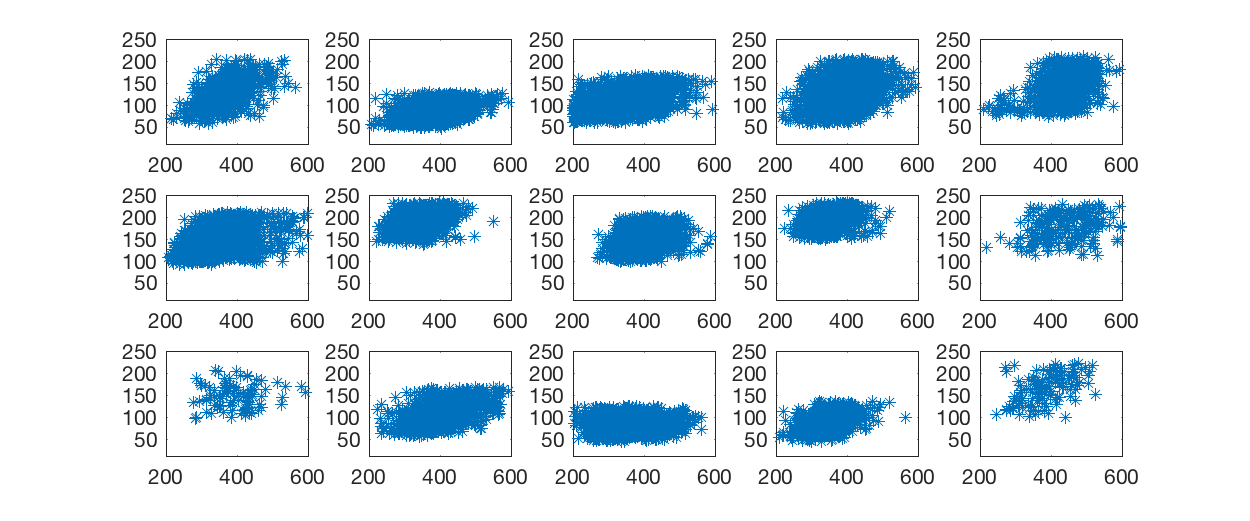} 
\caption{Example of  flow cytometry data measured from $n=15$ patients (restricted to a bivariate projection).  The horizontal axis (resp.\  vertical axis) represent the values of the FSC (resp.\ SSC) cell marker.}
\label{fig:ex_cytometry2D}
\end{figure}

\subsection{Organization of the paper}

The analysis of the variance of the regularized Wasserstein barycenters $\hat{\bmu}^{\gamma}_{n,p}$ and $\hat{\br}^{\varepsilon}_{n,p}$ are detailed in Section \ref{sec:Regbar} and Section \ref{sec:Sinkhorn} respectively. Section \ref{sec:GL} contains a description of the Goldenshluger-Lepski principle to choose the regularization parameters $\gamma$ and $\varepsilon$ as well as an oracle inequality justifying this technique for Sinkhorn barycenters. Section \ref{sec:expe} finally reports the results from  numerical experiments using simulated data and the flow cytometry dataset displayed in Figure \ref{fig:ex_cytometry2D}. 
We conclude the paper in Section \ref{sec:conclusion} by a brief discussion.
Some proofs and technical results are presented in Appendix \ref{sec:strong_convexity_H},  Appendix \ref{sec:dual_bounded}, and Appendix \ref{app:concen}. Algorithmic details on the computation of the estimator $\hat{\bmu}^{\gamma}_{n,p}$ are gathered in  Appendix \ref{sec:algo}.

\section{Penalized Wasserstein barycenters}\label{sec:Regbar}
In this section, we adopt the framework from \cite{BCP17} in which the Wasserstein barycenter is regularized through a convex penalty function as presented in \eqref{eq:defmu}.

\subsection{Minimization problem and variance properties}
Let us remind some of the basic definitions in  \cite{BCP17}. We define $W_2(\PP_2(\Omega))$ as the space of distributions $\P$ supported on $\PP_2(\Omega)$. The penalized Wasserstein barycenter associated to the distribution $\P$ is defined as a solution of the minimization problem
$$ \min_{\mu\in\PP_2(\Omega)}\ \int_{\PP_2(\Omega)}W_2^2(\mu,\nu)d\P(\nu) +\gamma E(\mu)$$
where $\gamma > 0$ is a penalization parameter and the penalty function writes
\begin{equation}
\label{ex_penalty}
E(\mu) = \left\{\begin{array}{ll}
\Vert f\Vert_{H^{k}(\Omega)}^{2}, & \mbox{if}\ f =\frac{d\mu}{dx} \ \mbox{and} \ f\geq \alpha, \\
+\infty & \mbox{otherwise.}
\end{array}\right.
\end{equation}
where $\Vert\cdot\Vert_{H^k(\Omega)}$ denotes the Sobolev norm associated to the $\mathbb{L}^2(\Omega)$ space,  $\alpha>0$ is arbitrarily small and $k>d-1$.
Remark that the  function $E$ is strictly convex on its domain
\begin{equation}\label{eq:domainE}
\mathcal{D}(E)=\{\mu\in\PP_2(\Omega)\ \mbox{such that} \ E(\mu)<+\infty\}.
\end{equation}
This choice is supported by the discussion in Section 5 of \cite{BCP17}, that in particular imposes the barycenter $\mu$ to belong to $\PP_2^{ac}(\Omega)$, the space of measures in $\PP_2(\Omega)$ that are absolutely continuous. It is mainly driven by the need to retrieve an absolutely continuous measure from discrete observations $(X_{ij})$, as it is often done when approximating data through kernel smoothing in density estimation. Others examples of penalty functions are given in \cite{BCP17}, including a class of relative G-functional described in Section 9.4 in \cite{ambrosio2008gradient}.

\begin{defi}\label{def:pen_bar}
Let $\bnu_1,\ldots,\bnu_n \in\PP_2(\Omega)$ be iid measures with distribution $\P$ such that $\P(\PP_2^{ac}(\Omega))>0$. The empirical probability measure defined by $(\bnu_i)_{i=1,\ldots,n}$ writes $\P_n=\frac{1}{n}\sum_{i=1}^n\delta_{\bnu_i}$. Let us then consider the random measures $(\hat{\bnu}_i^{p_i})_{1\leq i\leq n}$ of the form $ \hat{\bnu}_i^{p_i}=\frac{1}{p_i}\sum_{j=1}^{p_i}\delta_{X_{i,j}}$ where $(X_{i,j})_{1\leq j\leq p_i}$ are iid random variables of law $\bnu_i$. From there on we define for $\gamma>0$ the barycenters:
\begin{align}
\mu^{\gamma} &= \uargmin{\mu\in\PP_2(\Omega)} \ \int_{\PP_2(\Omega)}W_2^2(\mu,\nu)d\P(\nu)+\gamma E(\mu)= \E_{\bnu\sim\P}[W_2^2(\mu,\bnu) ]+\gamma E(\mu)\label{def:population_Regbar}\\
\hat{\bmu}^{\gamma}_{n,p} &= \uargmin{\mu\in\PP_2(\Omega)} \ \int_{\PP_2(\Omega)}W_2^2(\mu,\nu)d\P_n(\nu)+\gamma E(\mu)= \frac{1}{n}\sum_{i=1}^nW_2^2(\mu,\hat{\bnu}_i^{p_i})+\gamma E(\mu) \label{def:simulated_Regbar}
\end{align}
called respectively penalized population barycenter \eqref{def:population_Regbar} and penalized empirical barycenter \eqref{def:simulated_Regbar}.
\end{defi}
In \cite{BCP17}, the  existence and uniqueness of these barycenters have been shown for $\gamma>0$. This holds true for measures defined on $\PP_2(\R^d)$. By penalizing the barycenters with function $E$ as in \eqref{ex_penalty}, we enforce them to be absolutely continuous.  Therefore,  let  $\hat{\bfun}^{\gamma}_{n,p}$ and $f^{\gamma}_{\P}$ be  the densities  associated to $\hat{\bmu}^{\gamma}_{n,p}$ and $\mu^{\gamma}_{\P}$.

In order to apply the GL's strategy, 
we now study the expected squared  $\L_2(\Omega)$-distance $\E(\Vert \hat{\bfun}^{\gamma}_{n,p}-f^{\gamma}_{\P}\Vert^2)$ that will be referred to as a ``variance term'' for $\hat{\bmu}^{\gamma}_{n,p}$. 

\begin{thm}[Section 5 in \cite{BCP17}]\label{th:rate_Regbar}
Let $\Omega\subset\R^d$ be compact and $\hat{\bfun}_{n,p}^{\gamma}$ and $f_{\P}^{\gamma}$ be the density functions of $\hat{\bmu}_{n,p}^{\gamma}$ and $\mu_{\P}^{\gamma}$, induced by the choice \eqref{ex_penalty} of the penalty function $E$. Then, provided that $d<4$, there exists a constant $c>0$ depending only on $\Omega$ such that
\begin{equation}\label{eq:rate_Regbar}
\E\left(\Vert \hat{\bfun}_{n,p}^{\gamma}-f_{\P}^{\gamma}\Vert^2_{\mathbb{L}_2(\Omega)}\right)\leq  c \left(\frac{1}{\gamma p^{1/4}}+\frac{1}{\gamma n^{1/2}}\right)
\end{equation}
where $p = \min_{1 \leq i \leq n} p_{i}$.
\end{thm}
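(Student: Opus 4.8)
The plan is to exploit the strong convexity induced by the penalty $E$ and to split the total error into a part due to sampling points from each $\bnu_i$ (which produces the $p^{-1/4}$ term) and a part due to sampling the measures $\bnu_i$ themselves from $\P$ (which produces the $n^{-1/2}$ term). First I would record the strong convexity. Since $\mu\mapsto W_2^2(\mu,\nu)$ is convex along linear interpolations of densities (take the convex combination of the two optimal plans), the data-fitting term $\mu\mapsto\int W_2^2(\mu,\nu)\,dQ(\nu)$ is convex for every distribution $Q$ on $\PP_2(\Omega)$, while $\gamma E(f)=\gamma\|f\|_{H^k(\Omega)}^2$ obeys the parallelogram identity and is therefore $2\gamma$-strongly convex in the $H^k$-norm. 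Consequently both objectives in Definition \ref{def:pen_bar} are $2\gamma$-strongly convex; writing the strong-convexity lower bound at the two minimizers $f_\P^\gamma$ and $\hat\bfun_{n,p}^\gamma$ and adding the two inequalities gives
\begin{equation*}
2\gamma\,\|\hat\bfun_{n,p}^\gamma-f_\P^\gamma\|_{\L_2(\Omega)}^2\le 2\gamma\,\|\hat\bfun_{n,p}^\gamma-f_\P^\gamma\|_{H^k(\Omega)}^2\le H_n(f_\P^\gamma)-H_n(\hat\bfun_{n,p}^\gamma),
\end{equation*}
where $H_n(\mu)=\frac1n\sum_{i=1}^n W_2^2(\mu,\hat\bnu_i^{p_i})-\E_{\bnu\sim\P}[W_2^2(\mu,\bnu)]$ and the penalty terms have cancelled. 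Everything then reduces to controlling the increment $H_n(f_\P^\gamma)-H_n(\hat\bfun_{n,p}^\gamma)$ in expectation.

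Next I would split $H_n=A_n+G_n$, with $A_n(\mu)=\frac1n\sum_i\big(W_2^2(\mu,\hat\bnu_i^{p_i})-W_2^2(\mu,\bnu_i)\big)$ isolating the point-sampling error and $G_n(\mu)=\frac1n\sum_i W_2^2(\mu,\bnu_i)-\E_{\bnu\sim\P}[W_2^2(\mu,\bnu)]$ the centered empirical process over the random measures. For $A_n$, the factorization $a^2-b^2=(a-b)(a+b)$ together with the triangle inequality $|W_2(\mu,\bnu_i)-W_2(\mu,\hat\bnu_i^{p_i})|\le W_2(\bnu_i,\hat\bnu_i^{p_i})$ and the bound $W_2(\mu,\cdot)\le\operatorname{diam}(\Omega)$ (valid since $\Omega$ is compact) give $\sup_\mu|A_n(\mu)|\le \frac{2\operatorname{diam}(\Omega)}{n}\sum_iW_2(\bnu_i,\hat\bnu_i^{p_i})$, a bound uniform in $\mu$. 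Taking expectations and invoking the rate of convergence of empirical measures in Wasserstein distance — which for $d<4$ yields $\E[W_2^2(\hat\bnu_i^{p_i},\bnu_i)\,|\,\bnu_i]\le C p_i^{-1/2}$ with a constant depending only on $d$ and $\operatorname{diam}(\Omega)$, hence $\E[W_2(\hat\bnu_i^{p_i},\bnu_i)]\le C'p_i^{-1/4}$ by Jensen — controls $\E|A_n(f_\P^\gamma)-A_n(\hat\bfun_{n,p}^\gamma)|$ by a multiple of $p^{-1/4}$, with $p=\min_ip_i$. This is precisely where the hypothesis $d<4$ enters.

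The remaining and hardest step is to bound $\E|G_n(f_\P^\gamma)-G_n(\hat\bfun_{n,p}^\gamma)|$ by a multiple of $n^{-1/2}$. The difficulty is that $\hat\bfun_{n,p}^\gamma$ is data-dependent, so one cannot take expectations term by term; instead I would treat $G_n$ as a localized empirical process. The indexing functions $\mu\mapsto W_2^2(\mu,\nu)$ are uniformly bounded by $\operatorname{diam}(\Omega)^2$ and have increments controlled by $2\operatorname{diam}(\Omega)\,W_2(\cdot,\cdot)$, and both minimizers have controlled penalty (by minimality, $\gamma E(\hat\bfun_{n,p}^\gamma)\le\gamma E(f_\P^\gamma)+\operatorname{diam}(\Omega)^2$), so they lie in a fixed $H^k$-ball; the assumption $k>d-1$ guarantees this ball embeds compactly in $\L_2(\Omega)$ with finite metric entropy. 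A symmetrization/chaining estimate — or a Bernstein-type concentration combined with a peeling argument that absorbs the random radius $\|\hat\bfun_{n,p}^\gamma-f_\P^\gamma\|$ into the left-hand strong-convexity term — then produces the $n^{-1/2}$ rate with a constant depending only on $\Omega$.

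Combining the two estimates and dividing by $\gamma$ yields $\E\big(\|\hat\bfun_{n,p}^\gamma-f_\P^\gamma\|_{\L_2(\Omega)}^2\big)\le c\big(\gamma^{-1}p^{-1/4}+\gamma^{-1}n^{-1/2}\big)$, as claimed. I expect the empirical-process control of $G_n$ to be the main technical obstacle, the $A_n$ estimate being comparatively routine once the Wasserstein empirical rate is invoked.
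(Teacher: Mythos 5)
Your first two steps are correct and essentially parallel the argument that the paper imports (note that this paper contains no proof of Theorem \ref{th:rate_Regbar}; it is quoted from Section 5 of \cite{BCP17}). The strong-convexity reduction $2\gamma\Vert \hat{\bfun}^{\gamma}_{n,p}-f^{\gamma}_{\P}\Vert^2_{\L_2(\Omega)}\leq H_n(f^{\gamma}_{\P})-H_n(\hat{\bfun}^{\gamma}_{n,p})$ is valid (both objectives are $2\gamma$-strongly convex on the convex domain $\mathcal{D}(E)$, since $W_2^2(\cdot,\nu)$ is convex along linear interpolation of densities), your uniform bound on $A_n$ is a correct stability-type estimate playing the role of the stability theorem of \cite{BCP17}, and invoking $\E[W_2^2(\hat{\bnu}_i^{p_i},\bnu_i)\mid\bnu_i]\leq C p_i^{-1/2}$ for $d<4$ together with Jensen is exactly how the $p^{-1/4}$ term and the dimension restriction arise in the cited proof.

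The genuine gap is in the $G_n$ step, which you only sketch, and the sketch as stated would fail. Your localization claim that both minimizers ``lie in a fixed $H^k$-ball'' is false: minimality only yields $\gamma E(\hat{\bfun}^{\gamma}_{n,p})\leq \gamma E(f^{\gamma}_{\P})+\operatorname{diam}(\Omega)^2$, and $E(f^{\gamma}_{\P})$ is itself only bounded by $E(\mu_0)+\operatorname{diam}(\Omega)^2/\gamma$ for any fixed absolutely continuous $\mu_0$, so the available $H^k$-radius grows like $\gamma^{-1/2}$, precisely in the regime $\gamma\to 0$ where the theorem matters. Carrying this random, $\gamma$-dependent radius through a symmetrization/chaining bound contaminates the estimate with powers of $1/\gamma$ through the entropy (for an $H^k$-ball of radius $R$ one picks up a factor of order $R^{d/(2k)}\sim\gamma^{-d/(4k)}$), which is incompatible with the claimed bound $c/(\gamma\sqrt{n})$ with $c$ depending only on $\Omega$. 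Two further problems: convergence of the Dudley integral requires $k>d/2$, which the standing assumption $k>d-1$ does not guarantee when $d=1$; and the Lipschitz control of $W_2(\mu_f,\mu_g)$ by $\Vert f-g\Vert_{\L_2(\Omega)}$ that the chaining needs relies on the lower bound $f\geq\alpha$ in \eqref{ex_penalty}, injecting $\alpha$ and $k$ into the constants. The alternative ``Bernstein-type concentration plus peeling'' is merely named, not executed, and faces the same random-radius issue. By contrast, the proof in \cite{BCP17} avoids a uniform empirical-process bound over a $\gamma$-dependent class: it works through a stability inequality and the optimality conditions of the penalized barycenter, in which the stochastic fluctuation is paired against Kantorovich potentials that are uniformly bounded and Lipschitz on the compact set $\Omega$, so the $n^{-1/2}$ term comes with a fixed, $\Omega$-dependent envelope. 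To repair your route you would either have to carry out such a potential-based argument, or make the localization and chaining fully quantitative and accept constants depending on $\gamma$, $k$ and $\alpha$ rather than only on $\Omega$.
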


Thanks to this result, we will be able to automatically calibrate the parameter $\gamma>0$ by following  the GL's parameter selection strategy described in Section \ref{sec:GL}. 
\begin{rmq}
 As already mentioned above, in practice, we discretize $\Omega$ into a sufficiently fine and fixed grid on which we compute a discretized version the penalized Wasserstein barycenter $\hat{\bfun}_{n,p}^{\gamma}$. Therefore, the upper bound \eqref{eq:rate_Regbar} should be considered with some caution as it is not exactly a control of the variance of the discretized estimator that is used in our numerical experiments. Moreover, the upper bound \eqref{eq:rate_Regbar} involves a constant $c >0$ whose derivation is guided by theoretical arguments. However, a good calibration of $c$ is of primary importance as discussed in the numerical experiments reported in Section \ref{sec:expe}. 
 \end{rmq}

\subsection{Numerical computation} As a practical complement to \cite{BCP17}, we provide in  Appendix \ref{sec:algo}  efficient minimization algorithms   for the computation of $\hat{\bfun}_{n,p}^{\gamma}$, after a binning of the data on a fixed grid $\Omega_N$. For $\Omega$ included in the  real line, a simple subgradient descent is considered. When data are histograms supported on $\R^d$, $d\geq 2$, we rely on a smooth dual approach  based on the work of \cite{CuturiPeyre}.
\section{Sinkhorn barycenters via entropy regularized optimal transport}\label{sec:Sinkhorn}

In this section, we analyze the variance of the Sinkhorn barycenter  defined in \eqref{eq:defr}.

\subsection{Variance properties of the Sinkhorn barycenters}

As before we consider a binning of the  data on a fixed and finite discrete grid $\Omega_N$. 
For two discrete measures $r,q\in\Sigma_N$, the Sinkhorn divergence \eqref{eq:Sink_div} reads for  $\varepsilon>0$
\begin{equation}\label{def:primal_Sinkhorn}
W_{2,\varepsilon}^2(r,q):=\underset{U\in U(r,q)}{\min}\  \langle C, U\rangle -\varepsilon \textrm{h}(U).
\end{equation}
where the discrete (negative) entropy for a given coupling $U\in U(r,q)$ is given by $ \textrm{h}(U):=-\sum_{m, \ell}U_{m \ell}\log U_{m \ell}$.  We shall then use two key properties to analyze the variance of Sinkhorn barycenters which are  the strong convexity (see Theorem \ref{th:strong_convexity_H} below) and the Lipschitz continuity (see Lemma \ref{lemma:Lipschitz} below) of the mapping $r \mapsto W_{2,\varepsilon}^2(r,q)$ (for a given $q \in \Sigma_N$).

However, to guarantee the  Lipschitz continuity of this mapping, it is necessary to restrict the analysis to discrete measures $r$  belonging to the convex set
$$
\Sigma_N^{\rho} = \left\{ r \in \Sigma_N \; : \; \min_{1 \leq \ell \leq N} r_{\ell} \geq \rho \right\},
$$
where $0 < \rho < 1$ is an arbitrarily small constant. This means that our theoretical results on the variance  of the Sinkhorn barycenters  hold for discrete measures with non-vanishing entries. Nevertheless, we obtain upper bounds on these variances which depend explicitly on the constant $\rho$, allowing to discuss its choice.

Then, as it has been done for the penalized barycenters in Definition \ref{def:pen_bar}, we introduce the definitions of empirical and population Sinkhorn barycenters. 

\begin{defi}\label{def:Sinkhorn_bar}
Let $0 < \rho < 1/N$, and $\P$ be a probability distribution on $\Sigma_N^{\rho}$. Let $\bq_1,\ldots,\bq_n\in\Sigma_N^{\rho}$ be an iid sample drawn from the distribution $\P$. For each $1 \leq i \leq n$, we assume that $(\tilde{X}_{i,j})_{1\leq j\leq p_i}$ are iid random variables sampled from $\bq_i$.  For each $1 \leq i \leq n$, let us define the following discrete measures
$$
\tilde{\bq}_i^{p_i} = \frac{1}{p_i}\sum_{j=1}^{p_i}\delta_{\tilde{X}_{i,j}} \quad \mbox{and} \quad \hat{\bq}_i^{p_i}  = (1 - \rho N)\tilde{\bq}_i^{p_i} +  \rho \1_{N},
$$
where $\1_{N}$ is the vector of $\R^N$ with all entries equal to one. Thanks to the condition $0 < \rho < 1/N$, it follows  that $\hat{\bq}_i^{p_i} \in \Sigma_N^{\rho}$ for  all $1 \leq i \leq n$, which may not be the case for some $\tilde{\bq}_i^{p_i},\ 1=1,\ldots,n$. Then, we define
\begin{align}
r^{\varepsilon} &=\uargmin{r\in\Sigma_N^{\rho}}\E_{\bq\sim\P}[W_{2,\varepsilon}^2(r,\bq)] & \qquad \mbox{the population Sinkhorn  barycenter}\label{def:pop_emp_Sinkhorn}\\
\hat{\br}^{\varepsilon}_{n,p} &=\uargmin{r\in\Sigma_N^{\rho}}\frac{1}{n}\sum_{i=1}^nW_{2,\varepsilon}^2(r,\hat{\bq}_i^{p_i}) & \qquad \mbox{the empirical Sinkhorn  barycenter}\label{def:simulated_Sinkhorn}
\end{align}
\end{defi}
In the optimisation problem \eqref{def:simulated_Sinkhorn}, we choose to use the discrete measures $\hat{\bq}_i^{p_i}$  instead of the empirical measures $\tilde{\bq}_i^{p_i}$ to guarantee the use of discrete measures belonging to $\Sigma_N^{\rho}$ in the definition of  the empirical Sinkhorn barycenter $\hat{\br}^{\varepsilon}_{n,p}$.

\begin{rmq}
The population and empirical Sinkhorn  barycenters in Definition \ref{def:Sinkhorn_bar} are constrained to belong to the set $\Sigma_N^{\rho}$
 so that the Lipschitz continuity of $r \mapsto W_{2,\varepsilon}^2(r,q)$  holds true.
\end{rmq}

The following theorem is the main result of this section which gives an upper bound on 
$\E(\vert r^{\varepsilon}-\hat{r}^{\varepsilon}_{n,p}\vert^2)$  which will be referred to as the variance of $\hat{\br}^{\varepsilon}_{n,p}$ in what follows.
 In particular, the asymptotic regime in which we are interested in is the number $n$ of measures defining the barycenter as well as the number $p$ of observations per measures.
\begin{thm}\label{th:Sinkhorn_rate}
Recall that $p=\min_{1 \leq i \leq n} p_i$ and let $\varepsilon >0$. Then, one has that
\begin{equation}\label{eq:Sinkhorn_rate}
\E(\vert r^{\varepsilon}-\hat{r}^{\varepsilon}_{n,p}\vert^2) \leq  \frac{16 L_{\rho,\varepsilon}^2}{\varepsilon^2n}  +  \frac{2 L_{\rho,\varepsilon}}{\varepsilon} \left(   \sqrt{\frac{N}{ p}} + 2 \rho( N + \sqrt{N} ) \right),
\end{equation}
with
\begin{equation} 
L_{\rho,\varepsilon}= \left( \sum_{1 \leq m \leq N} \left(2 \varepsilon \log(N) +\adjustlimits\inf_{1 \leq k \leq N} \sup_{1 \leq \ell\leq N}|C_{m\ell} - C_{k \ell}|    -  \varepsilon \log(\rho) \right)^2  \right)^{1/2}. \label{eq:defL}
\end{equation}
\end{thm}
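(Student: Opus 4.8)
The plan is to lift the two ingredients announced before the statement---the strong convexity of $r\mapsto W_{2,\varepsilon}^2(r,q)$ (Theorem \ref{th:strong_convexity_H}, with strong convexity constant of order $\varepsilon$) and its Lipschitz continuity with constant $L_{\rho,\varepsilon}$ (Lemma \ref{lemma:Lipschitz})---from the individual divergences to the Fr\'echet functionals
$$ J(r)=\E_{\bq\sim\P}[W_{2,\varepsilon}^2(r,\bq)],\qquad \hat{J}(r)=\tfrac1n\sum_{i=1}^n W_{2,\varepsilon}^2(r,\hat{\bq}_i^{p_i}), $$
minimized over $\Sigma_N^{\rho}$ by $r^{\varepsilon}$ and $\hat{\br}^{\varepsilon}_{n,p}$ respectively. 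Since each summand is $\sigma$-strongly convex with $\sigma$ of order $\varepsilon$, so are $J$, $\hat{J}$ and the auxiliary functional $\tilde{J}(r)=\tfrac1n\sum_{i=1}^n W_{2,\varepsilon}^2(r,\bq_i)$ built from the \emph{true} (unsampled) measures, whose minimizer I denote $\bar{\br}_n^{\varepsilon}$. I would split the error along $r^{\varepsilon}\to\bar{\br}_n^{\varepsilon}\to\hat{\br}^{\varepsilon}_{n,p}$, the first gap capturing the sampling of the $n$ measures and the second the within-measure sampling of $p$ points together with the $\rho$-perturbation.

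For the first gap I would avoid any supremum over the random minimizer by exploiting the variational inequalities at the two optima. Writing $\langle\nabla J(r^{\varepsilon}),r-r^{\varepsilon}\rangle\ge0$ and $\langle\nabla\tilde{J}(\bar{\br}_n^{\varepsilon}),r-\bar{\br}_n^{\varepsilon}\rangle\ge0$ for all $r\in\Sigma_N^{\rho}$, and combining them with $\sigma$-strong monotonicity of $\nabla\tilde{J}$, a short chain of inequalities collapses to $\sigma\,|r^{\varepsilon}-\bar{\br}_n^{\varepsilon}|\le|\nabla\tilde{J}(r^{\varepsilon})-\nabla J(r^{\varepsilon})|$, the crucial point being that the gradient discrepancy is evaluated at the \emph{deterministic} point $r^{\varepsilon}$. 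That discrepancy is the average of the $n$ i.i.d. centered vectors $\nabla_r W_{2,\varepsilon}^2(r^{\varepsilon},\bq_i)-\E_{\bq}\nabla_r W_{2,\varepsilon}^2(r^{\varepsilon},\bq)$, each of norm at most $2L_{\rho,\varepsilon}$ because the Lipschitz bound of Lemma \ref{lemma:Lipschitz} controls the gradient norm; taking expectations gives $\E|r^{\varepsilon}-\bar{\br}_n^{\varepsilon}|^2\le \sigma^{-2}\,4L_{\rho,\varepsilon}^2/n$, which after substituting the value of $\sigma$ supplied by Theorem \ref{th:strong_convexity_H} produces the first term $16L_{\rho,\varepsilon}^2/(\varepsilon^2 n)$.

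For the second gap I would switch to a comparison of function \emph{values} rather than gradients, so as to keep a first-order (non-squared) dependence on the perturbation. Using optimality of $\hat{\br}^{\varepsilon}_{n,p}$ for $\hat{J}$ and of $\bar{\br}_n^{\varepsilon}$ for $\tilde{J}$ one gets $\tilde{J}(\hat{\br}^{\varepsilon}_{n,p})-\tilde{J}(\bar{\br}_n^{\varepsilon})\le[\tilde{J}-\hat{J}](\hat{\br}^{\varepsilon}_{n,p})+[\hat{J}-\tilde{J}](\bar{\br}_n^{\varepsilon})$, and each bracket is a $\tfrac1n$-average of differences $W_{2,\varepsilon}^2(\cdot,\bq_i)-W_{2,\varepsilon}^2(\cdot,\hat{\bq}_i^{p_i})$. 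Here I would invoke the symmetry $W_{2,\varepsilon}^2(r,q)=W_{2,\varepsilon}^2(q,r)$ so that Lemma \ref{lemma:Lipschitz} applies in the second argument---legitimate because both $\bq_i$ and $\hat{\bq}_i^{p_i}$ lie in $\Sigma_N^{\rho}$, which is precisely why $\hat{\bq}_i^{p_i}$ rather than $\tilde{\bq}_i^{p_i}$ is used---bounding the two brackets by $\tfrac{2L_{\rho,\varepsilon}}{n}\sum_i|\bq_i-\hat{\bq}_i^{p_i}|$. Strong convexity of $\tilde{J}$ then turns this into a bound on $|\bar{\br}_n^{\varepsilon}-\hat{\br}^{\varepsilon}_{n,p}|^2$, and it remains to control $\E|\bq_i-\hat{\bq}_i^{p_i}|$: splitting $\hat{\bq}_i^{p_i}-\bq_i=(1-\rho N)(\tilde{\bq}_i^{p_i}-\bq_i)+\rho(\1_{N}-N\bq_i)$, the multinomial variance gives $\E|\tilde{\bq}_i^{p_i}-\bq_i|\le\sqrt{N/p}$ while $|\1_{N}-N\bq_i|\le\sqrt{N}+N$ handles the deterministic part, yielding the factor $\sqrt{N/p}+2\rho(N+\sqrt{N})$ and the single power $\varepsilon^{-1}$ in the second term.

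The main obstacle, and the reason the two gaps are treated by different devices, is the data-dependence of the random minimizer $\hat{\br}^{\varepsilon}_{n,p}$: a naive excess-risk bound would leave the centered empirical process $J-\tilde{J}$ to be controlled at the random argument $\hat{\br}^{\varepsilon}_{n,p}$, forcing a supremum over $\Sigma_N^{\rho}$ and importing spurious dimensional ($N$) factors into the $1/n$ term. The variational-inequality argument sidesteps this by freezing the evaluation point at the deterministic $r^{\varepsilon}$, while the value-comparison of the second step is what preserves the \emph{linear} (rather than squared) dependence on the sampling and $\rho$ errors. The remaining work is to recombine the two gaps (via the triangle inequality, absorbing the resulting numerical constants) and to keep the $\varepsilon$ scaling (from strong convexity) and the $\rho$ scaling (entering both $L_{\rho,\varepsilon}$ and the perturbation $\1_{N}-N\bq_i$) disentangled, so that the final powers $\varepsilon^{-2}$ and $\varepsilon^{-1}$ emerge with the stated constants.
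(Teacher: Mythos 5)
Your proposal is correct in substance, and its second half coincides with the paper's own proof: the paper also introduces the intermediate barycenter $\br_n^{\varepsilon}$ of the unsampled measures $\bq_1,\ldots,\bq_n$ (your $\bar{\br}_n^{\varepsilon}$, their \eqref{eq:defrneps}), splits the error by the triangle inequality, and treats the gap $\vert \br_n^{\varepsilon}-\hat{\br}^{\varepsilon}_{n,p}\vert$ exactly as you propose --- optimality plus $\varepsilon$-strong convexity at the minimizers, symmetry of the Sinkhorn divergence so that Lemma \ref{lemma:Lipschitz} applies in the $q$-argument, the decomposition of $\hat{\bq}_i^{p_i}-\bq_i$ into multinomial fluctuation and $\rho$-perturbation, and the multinomial moment bound. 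Where you genuinely diverge is the first gap: the paper does not run your variational-inequality argument but instead invokes the stability argument of Theorem 6 in \cite{shalev2009stochastic} (a replace-one-measure bounded-difference computation combining the Lipschitz and strong-convexity constants), yielding $\E(\vert r^{\varepsilon}-\br_n^{\varepsilon}\vert^2)\leq 8L_{\rho,\varepsilon}^2/(\varepsilon^2 n)$; that same stability computation is recycled in Proposition \ref{prop:concentration1} to obtain the concentration inequality powering the oracle result of Theorem \ref{theo:oracle}, a side benefit your route does not automatically supply. Your reduction $\varepsilon\vert r^{\varepsilon}-\br_n^{\varepsilon}\vert \leq \vert \nabla\tilde{J}(r^{\varepsilon})-\nabla J(r^{\varepsilon})\vert$, followed by $\E\vert \frac1n\sum_i Z_i\vert^2=\E\vert Z_1\vert^2/n\leq 4L_{\rho,\varepsilon}^2/n$ for the iid centered gradient fluctuations, is valid and in fact sharper ($4$ versus $8$ before doubling), with one point you should make explicit: Lemma \ref{lemma:Lipschitz} controls only the component of $\nabla H_q$ along the tangent space $\{v:\langle v,\mathds{1}_N\rangle=0\}$ of the simplex (the dual potentials are defined up to additive constants), but since every difference of points of $\Sigma_N^{\rho}$ lies in that subspace, all the inner products in your chain see only the projected gradients, so the bound stands.

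Two bookkeeping slips, both fixable, mean that as written your constants overshoot the statement. In the second gap you use optimality of $\hat{\br}^{\varepsilon}_{n,p}$ once and strong convexity of $\tilde{J}$ alone, which gives $\frac{\varepsilon}{2}\vert\cdot\vert^2\leq \frac{2L_{\rho,\varepsilon}}{n}\sum_i\vert \bq_i-\hat{\bq}_i^{p_i}\vert$, and you bound $\E\vert\tilde{\bq}_i^{p_i}-\bq_i\vert\leq\sqrt{N/p}$; after the triangle-inequality doubling this yields $\frac{8L_{\rho,\varepsilon}}{\varepsilon}\bigl(\sqrt{N/p}+\rho(N+\sqrt{N})\bigr)$, which exceeds the stated $\frac{2L_{\rho,\varepsilon}}{\varepsilon}\sqrt{N/p}+\frac{4L_{\rho,\varepsilon}\rho}{\varepsilon}(N+\sqrt{N})$. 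The paper lands exactly on \eqref{eq:Sinkhorn_rate} by (i) writing the strong-convexity inequality at \emph{both} minimizers (the first-order terms are nonnegative at each constrained minimizer, cf.\ its appeal to Theorem 3.1 in \cite{CuturiPeyre}) and summing, so the left-hand side carries $\varepsilon\vert\cdot\vert^2$ rather than $\frac{\varepsilon}{2}\vert\cdot\vert^2$, and (ii) using Jensen together with the variance bound $\bq_{i,k}(1-\bq_{i,k})/p_i\leq 1/(4p)$, i.e.\ $\E\vert\tilde{\bq}_i^{p_i}-\bq_i\vert\leq\frac12\sqrt{N/p}$. With those two adjustments your scheme reproduces \eqref{eq:Sinkhorn_rate} exactly, indeed with $8$ in place of $16$ in the first term.
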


A few remarks can be made about the above result.  
The bound in the right-hand side of \eqref{eq:Sinkhorn_rate}  explicitly depends  on the size $N$ of the grid. This  will  be taken into account for the choice of the optimal parameter $\hat{\varepsilon}$ (see Section \ref{sec:GL}). Moreover, it can be used to discuss the choice of $\rho$. First, if one takes  $\rho = \epsilon^{\kappa}$, the Lipschitz constant (Lemma \ref{lemma:Lipschitz}) 
$L_{\rho,\varepsilon}=L_{\varepsilon}$ becomes
$$
L_{\varepsilon} = \left( \sum_{1 \leq m \leq N} \left( \varepsilon (2\log(N)-\kappa \log(\epsilon)) + \adjustlimits\inf_{1 \leq k \leq N} \sup_{1 \leq \ell\leq N} |C_{m\ell} - C_{k \ell}| \right)^2  \right)^{1/2}, \label{eq:defLeps}
$$
which is a constant (not depending on $\rho$) such that
$$
\lim_{\epsilon \to 0} L_{\varepsilon}  = \left( \sum_{1 \leq m \leq N} \left(  \adjustlimits\inf_{1 \leq k \leq N} \sup_{1 \leq \ell\leq N} |C_{m\ell} - C_{k \ell}| \right)^2  \right)^{1/2}.
$$
If we further assume that $\rho = \epsilon^{\kappa} < \min(1/N,1/p)$ we obtain the upper bound
\begin{equation}\label{eq:Sinkhorn_rate2}
\E(\vert r^{\varepsilon}-\hat{r}^{\varepsilon}_{n,p}\vert^2) \leq  \frac{16 L_{\varepsilon}^2}{\varepsilon^2n}  +  \frac{2 L_{\varepsilon}}{\varepsilon} \left(   \sqrt{\frac{N}{ p}} + 2 \left( \frac{N}{p} + \sqrt{\frac{N}{p^2}} \right) \right).
\end{equation}
Therefore, choosing $\varepsilon = \varepsilon(n,p)\underset{\min(n,p) \to\infty}{\longrightarrow} 0$ such that $1/\varepsilon^2n\to\infty$ and $1/\varepsilon p\to \infty$, we have that $\E(\vert r^{\varepsilon}-\hat{r}^{\varepsilon}_{n,p}\vert^2)\underset{n,p\to\infty}{\longrightarrow}$ (take for example $\varepsilon=1/n^{\alpha}p^{\beta}$ with $0<\alpha<1/2$ and $0<\beta<1$).

Finally, it should be remarked that Theorem \ref{th:Sinkhorn_rate} holds for general cost matrices $C$ that are symmetric and non-negative.

\subsection{Proof of Theorem \ref{th:Sinkhorn_rate}}\label{sec:StrongConvexity}

The proof of the upper bound \eqref{eq:Sinkhorn_rate} relies on   strong convexity of the functional $r\mapsto W_{2,\varepsilon}^2(r,q)$ for $q\in\Sigma_N$, without constraint on its entries. 
This property can be derived by studying the Legendre transform of $r\mapsto W_{2,\varepsilon}^2(r,q)$. For a fixed distribution $q\in\Sigma_N$, using the notation in \cite{CuturiPeyre}, we define the function
$$H_q(r) :=W_{2,\varepsilon}^2(r,q), \quad \mbox{for all} \ r\in\Sigma_N.$$
Its Legendre transform is given for $g\in\R^N$ by $H_q^{\ast}(g)=\underset{r\in\Sigma_N}{\max} \ \langle g,r \rangle - H_q(r)$ and its differentiation properties are presented in the following theorem.
\begin{thm}[Theorem 2.4 in \cite{CuturiPeyre}]
\label{th:diffHq*}
For $\varepsilon>0$, the Legendre dual function $H_q^{\ast}$ is $C^{\infty}$. Its gradient function $\nabla H_q^{\ast}$ is $1/\varepsilon$-Lipschitz. Its value, gradient and Hessian at $g\in\R^N$ are, writing $\alpha=\exp(g/\varepsilon)$ and $K=\exp(-C/\varepsilon)$,
\begin{align*}
H_q^{\ast}(g)&=\varepsilon(E(q)+\langle q,\log (K\alpha)\rangle), \ \nabla H_q^{\ast}(g)=\diag(\alpha)K\frac{q}{K\alpha}\in\Sigma_N\\
\nabla^2H_q^{\ast}(g) &=\frac{1}{\varepsilon}\left(\diag\left(\diag(\alpha)K\frac{q}{K\alpha}\right)\right)-\frac{1}{\varepsilon}\diag(\alpha)K\diag\left(\frac{q}{(K\alpha)^2}\right)K\diag(\alpha),
\end{align*}
where the notation $\frac{q}{r}$  stands for the component-wise division of the entries of $q$ and $r$.
\end{thm}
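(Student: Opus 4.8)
The plan is to obtain an explicit closed form for $H_q^{\ast}$ by rewriting its Legendre transform as a single entropy-regularized maximization over transport plans constrained only through their column marginal, solving this strictly concave problem exactly, and then reading off smoothness, the gradient/Hessian formulas, and the Lipschitz bound by direct differentiation and elementary linear algebra. First I would reformulate the conjugate. Starting from $H_q^{\ast}(g)=\max_{r\in\Sigma_N}\langle g,r\rangle-H_q(r)$ and unfolding the primal definition \eqref{def:primal_Sinkhorn}, and using that for $U\in U(r,q)$ one has $\langle g,r\rangle=\langle g,U\mathds{1}_N\rangle=\sum_{m,\ell}g_mU_{m\ell}$, the nested optimizations over $r$ and over $U$ collapse into one maximization over the pair $(r,U)$. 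Since any nonnegative $U$ with $U^{T}\mathds{1}_N=q$ automatically satisfies $U\mathds{1}_N\in\Sigma_N$ (its entries are nonnegative and sum to $1$), the row-marginal constraint and the variable $r$ can both be eliminated, yielding
\[
H_q^{\ast}(g)=\max_{U\geq 0,\;U^{T}\mathds{1}_N=q}\;\sum_{m,\ell}(g_m-C_{m\ell})U_{m\ell}+\varepsilon\,\textrm{h}(U),
\]
an objective that is strictly concave and separates across the columns of $U$, each column coupled only through $\sum_m U_{m\ell}=q_\ell$.

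Next I would solve this problem exactly. Because $K_{m\ell}=\exp(-C_{m\ell}/\varepsilon)>0$ and $\alpha_m=\exp(g_m/\varepsilon)>0$, the maximizer is interior, so the nonnegativity constraints are inactive and Lagrangian stationarity (one multiplier $\lambda_\ell$ per column) characterizes the optimum. Differentiating in $U_{m\ell}$ gives $U_{m\ell}^{\ast}=\alpha_m K_{m\ell}c_\ell$, and the column constraint fixes $c_\ell=q_\ell/(K\alpha)_\ell$ (using the symmetry of $C$, hence of $K$), so that $U^{\ast}=\diag(\alpha)K\diag\!\big(q/(K\alpha)\big)$. Substituting $U^{\ast}$ back, the terms $(g_m-C_{m\ell})-\varepsilon\log U_{m\ell}^{\ast}$ telescope to $\varepsilon\big(\log(K\alpha)_\ell-\log q_\ell\big)$, and summing over $m$ with $\sum_m U_{m\ell}^{\ast}=q_\ell$ yields $H_q^{\ast}(g)=\varepsilon\big(E(q)+\langle q,\log(K\alpha)\rangle\big)$, with $E(q)=-\langle q,\log q\rangle$ the entropy in the notation of \cite{CuturiPeyre}. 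Since $(K\alpha)_\ell>0$ for every $g$, this expression is a composition of exponentials, matrix products and logarithms of strictly positive quantities, hence $C^{\infty}$; differentiating once in $g_k$ (or, equivalently, invoking the envelope theorem with the unique maximizer $U^{\ast}$) gives $\nabla H_q^{\ast}(g)=U^{\ast}\mathds{1}_N=\diag(\alpha)K\,q/(K\alpha)\in\Sigma_N$, and a second differentiation—using $\partial\alpha_m/\partial g_k=\varepsilon^{-1}\alpha_m\delta_{mk}$ and $\partial(K\alpha)_\ell/\partial g_k=\varepsilon^{-1}K_{\ell k}\alpha_k$—reproduces the stated Hessian.

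Finally I would establish that $\nabla H_q^{\ast}$ is $1/\varepsilon$-Lipschitz, which amounts to $\|\nabla^2 H_q^{\ast}(g)\|_{\mathrm{op}}\leq 1/\varepsilon$ for all $g$. Writing $P=\nabla H_q^{\ast}(g)$, the Hessian reads $\varepsilon\nabla^2 H_q^{\ast}=\diag(P)-M$ with $M=\diag(\alpha)K\diag\!\big(q/(K\alpha)^2\big)K\diag(\alpha)$. The matrix $M$ is a Gram matrix (its $(m,k)$ entry is $\sum_\ell v^{(m)}_\ell v^{(k)}_\ell$ with $v^{(m)}_\ell=\alpha_m K_{m\ell}\sqrt{q_\ell}/(K\alpha)_\ell$), hence $M\succeq 0$ and $\varepsilon\nabla^2 H_q^{\ast}\preceq\diag(P)$; since $P\in\Sigma_N$ has entries in $[0,1]$, $\diag(P)\preceq I$. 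Because $H_q^{\ast}$ is a convex conjugate it is convex, so $\nabla^2 H_q^{\ast}\succeq 0$; combining these inequalities gives $0\preceq\varepsilon\nabla^2 H_q^{\ast}\preceq I$, i.e.\ the eigenvalues lie in $[0,1/\varepsilon]$, which is exactly the claimed Lipschitz bound.

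I expect the main obstacle to be the reformulation step: one must carefully justify that the inner constrained minimization defining $H_q(r)$ merges with the outer maximization over $r$, that the row-marginal constraint becomes redundant once $r$ ranges freely over $\Sigma_N$, and that the strictly interior maximizer legitimizes the unconstrained Lagrangian stationarity and the envelope differentiation. Once these points are secured, the telescoping that produces the value formula and the semidefinite ordering that produces the Lipschitz constant are routine.
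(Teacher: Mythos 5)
Your proposal addresses a statement this paper does not itself prove: Theorem \ref{th:diffHq*} is imported verbatim, with attribution, as Theorem 2.4 of \cite{CuturiPeyre}, so there is no internal proof to compare against line by line. Judged on its own merits, your derivation is correct and is essentially the standard computation underlying the cited result. Collapsing the conjugate into a single column-constrained entropic maximization is legitimate: the row marginal $U\mathds{1}_N$ of any feasible $U$ automatically lies in $\Sigma_N$, and since the inner minimum enters with a minus sign no minimax interchange is needed. The stationarity computation correctly gives $U^{\ast}=\diag(\alpha)K\diag\bigl(q/(K\alpha)\bigr)$ (the symmetry of $C$, hence of $K$, is indeed what turns $K^{T}\alpha$ into $K\alpha$, and the cost matrix of the paper is symmetric), the telescoping yields the value formula with $E(q)=-\langle q,\log q\rangle$ --- you are right that this $E$ is the entropy in the notation of \cite{CuturiPeyre}, not the penalty $E$ of \eqref{ex_penalty} --- and direct differentiation reproduces the stated gradient and Hessian. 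Two small patches are needed. First, interiority of the maximizer does not follow from $K_{m\ell}>0$ and $\alpha_m>0$ alone; justify it either by the infinite slope of $u\mapsto -u\log u$ at $0^{+}$, or more simply by verifying that your strictly positive candidate $U^{\ast}$ satisfies the first-order conditions of a strictly concave problem and is therefore the unique global maximizer. Second, if $q_\ell=0$ for some $\ell$, the corresponding column of $U$ is forced to vanish and the optimum is not interior; either assume $q>0$ or treat such columns separately with the convention $0\log 0=0$ (the closed-form expressions remain valid since $K\alpha>0$ everywhere). Finally, it is worth noting that your Lipschitz argument differs from the spectral analysis this paper does carry out in Appendix \ref{sec:strong_convexity_H}: there, Proposition \ref{prop:trace} bounds every eigenvalue of $\nabla^2 H_q^{\ast}$ by $1/\varepsilon$ via the trace (discarding the nonnegative Gram term), combined with nonnegativity of the spectrum from Propositions \ref{prop:eigenvector_vn} and \ref{prop:eigenvalues}, whereas your Loewner-order sandwich $0\preceq \varepsilon\nabla^2 H_q^{\ast}\preceq \diag(P)\preceq I$ with $P=\nabla H_q^{\ast}(g)\in\Sigma_N$ reaches the same conclusion slightly more directly; both are valid.
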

From this result, we can deduce the strong convexity of the dual functional $H_q$ as stated below.
\begin{thm}\label{th:strong_convexity_H}
Let $\varepsilon>0$. Then, for any $q\in\Sigma_N$, the function $H_q$ is $\varepsilon$-strongly convex for the Euclidean 2-norm.
\end{thm}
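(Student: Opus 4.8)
The plan is to exploit the classical duality between strong convexity of a function and Lipschitz continuity of the gradient of its Legendre transform. For the Euclidean $2$-norm, which is self-dual, this duality takes a clean form: a closed proper convex function $f$ on $\R^N$ is $\varepsilon$-strongly convex if and only if its conjugate $f^{\ast}$ is differentiable on all of $\R^N$ with a $(1/\varepsilon)$-Lipschitz gradient. Theorem \ref{th:diffHq*} already supplies exactly the smoothness side of this equivalence, since it asserts that $H_q^{\ast}$ is $C^{\infty}$ and that $\nabla H_q^{\ast}$ is $(1/\varepsilon)$-Lipschitz. It therefore suffices to set up the convex-analytic framework so that the converse direction of the duality can be invoked.

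First I would extend $H_q$, which is a priori only defined on the simplex $\Sigma_N$, by $+\infty$ outside $\Sigma_N$, and check that the resulting function is closed, proper and convex. Convexity follows from the fact that $(r,U)\mapsto \langle C,U\rangle - \varepsilon\,\textrm{h}(U)$ is jointly convex (the negative entropy $-\textrm{h}$ being convex) and that the coupling constraints $U\mathds{1}_N=r$, $U^T\mathds{1}_N=q$ are affine in $r$, so that $H_q$ arises by partial minimization of a convex function over a parameter-affine set. Properness and closedness are immediate since $H_q$ is finite and continuous on the compact set $\Sigma_N$. With this extension, the Legendre transform in the statement, $H_q^{\ast}(g)=\max_{r\in\Sigma_N}\langle g,r\rangle-H_q(r)$, coincides with the conjugate of the extended function, and the Fenchel--Moreau theorem yields $H_q^{\ast\ast}=H_q$.

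Having established that $H_q$ is closed proper convex with a $(1/\varepsilon)$-smooth conjugate, I would conclude by the converse duality statement that $H_q=H_q^{\ast\ast}$ is $\varepsilon$-strongly convex on $\Sigma_N$, i.e. that $r\mapsto H_q(r)-\tfrac{\varepsilon}{2}|r|^2$ is convex. For a self-contained argument avoiding a direct citation of the duality lemma, this last implication can be obtained from the Baillon--Haddad cocoercivity inequality $\langle \nabla H_q^{\ast}(g)-\nabla H_q^{\ast}(g'),\,g-g'\rangle\ge \varepsilon\,|\nabla H_q^{\ast}(g)-\nabla H_q^{\ast}(g')|^2$, valid for the convex $(1/\varepsilon)$-smooth function $H_q^{\ast}$. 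Writing $r=\nabla H_q^{\ast}(g)$ and $r'=\nabla H_q^{\ast}(g')$ and using the inversion rule $g\in\partial H_q(r)\Leftrightarrow r=\nabla H_q^{\ast}(g)$, this gives $\langle g-g',\,r-r'\rangle\ge \varepsilon\,|r-r'|^2$, which is precisely the $\varepsilon$-strong monotonicity of $\partial H_q$ characterizing $\varepsilon$-strong convexity of $H_q$.

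I expect the main subtlety to lie in the handling of the affine constraint defining $\Sigma_N$: strong convexity is claimed only relative to the simplex, so one must ensure that conjugation is carried out over $\Sigma_N$ (equivalently, over the extension by $+\infty$) for the converse duality to apply verbatim, and that the surjectivity of $g\mapsto \nabla H_q^{\ast}(g)$ onto (the relative interior of) $\Sigma_N$ lets the monotonicity inequality be read off for all relevant pairs $r,r'$. One should also verify that restricting the maximization defining $H_q^{\ast}$ to $\Sigma_N$ does not weaken the Lipschitz bound on $\nabla H_q^{\ast}$, which is exactly what Theorem \ref{th:diffHq*} guarantees. The remaining steps are routine convex duality.
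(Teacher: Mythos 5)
Your proof is correct, but it takes a genuinely different route from the paper's. The paper argues at second order: it computes the spectrum of the explicit Hessian $\nabla^2 H_q^{\ast}(g)$ given in Theorem \ref{th:diffHq*}, showing that its kernel is exactly $\mathrm{span}(\mathds{1}_N)$, that the remaining eigenvalues are strictly positive, and (via a trace bound) at most $1/\varepsilon$; it then parametrizes the hyperplane orthogonal to $\mathds{1}_N$, inverts the reduced Hessian through the relation $\nabla H_q = (\nabla H_q^{\ast})^{-1}$ and the global inversion theorem to obtain $\nabla^2 \tilde{H}_q \succeq \varepsilon I$, and finally translates back to $\Sigma_N$ using that $r_1 - r_0$ is orthogonal to $\mathds{1}_N$. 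You instead stay at first order and use only the $1/\varepsilon$-Lipschitz continuity of $\nabla H_q^{\ast}$ asserted in Theorem \ref{th:diffHq*}: extending $H_q$ by $+\infty$ outside $\Sigma_N$, invoking Fenchel--Moreau, and combining Baillon--Haddad cocoercivity with the inversion rule $g \in \partial H_q(r) \Leftrightarrow r = \nabla H_q^{\ast}(g)$ gives $\varepsilon$-strong monotonicity of the subdifferential, hence $\varepsilon$-strong convexity; equivalently, you could directly cite the standard equivalence, for closed proper convex $f$, between $\varepsilon$-strong convexity of $f$ and differentiability of $f^{\ast}$ with $(1/\varepsilon)$-Lipschitz gradient (Rockafellar--Wets, \emph{Variational Analysis}, Prop.~12.60). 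What your route buys: it is shorter, it dispenses entirely with the eigenvalue analysis (in particular the strict positivity of the nonzero eigenvalues, which the paper needs only to make the reduced Hessian invertible --- you never invert anything), and it absorbs the affine constraint into the indicator of $\Sigma_N$, so no reduced parametrization is needed. The one point you rightly flag --- strong monotonicity is obtained only on the range of $\nabla H_q^{\ast}$, i.e.\ where $\partial(H_q+\iota_{\Sigma_N})$ is nonempty, and subgradients can fail to exist on the boundary of the simplex, where the dual potentials blow up like $\varepsilon\log r_m$ --- is indeed routine: it is covered by the cited equivalence theorem, or by noting that $H_q$ is continuous on $\Sigma_N$ so the strong convexity inequality on the relative interior extends to the boundary by a limiting argument. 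What the paper's computation buys in exchange is explicit structural information (rank $N-1$ of the Hessian, kernel spanned by $\mathds{1}_N$, the trace bound $\mathrm{Tr}(\nabla^2 H_q^{\ast}) \leq 1/\varepsilon$), derived self-containedly from the Hessian formula rather than taking the Lipschitz claim of \cite{CuturiPeyre} as a black box.
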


The proof of Theorem \ref{th:strong_convexity_H} is deferred to Appendix \ref{sec:strong_convexity_H}. 
We can also ensure the Lipschitz continuity of $H_q(r)$, when restricting our analysis to the set $r\in\Sigma_N^{\rho}$.
\begin{lemma}\label{lemma:Lipschitz}
Let $q \in \Sigma_N$ and $0 < \rho < 1$. Then, one has that $r \mapsto H_q(r)$ is $L_{\rho,\varepsilon}$-Lipschitz on $\Sigma_N^{\rho}$ with $L_{\rho,\varepsilon}$ defined in \eqref{eq:defL}.
\end{lemma}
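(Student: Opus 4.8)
The plan is to deduce the Lipschitz estimate from a uniform bound on the Euclidean norm of the gradient of $H_q$ over $\Sigma_N^{\rho}$, and to control this gradient by identifying it with the optimal dual (Kantorovich) potential. First I would record that, by Theorem \ref{th:strong_convexity_H} together with the smoothness of $H_q^{\ast}$ in Theorem \ref{th:diffHq*}, the map $r\mapsto H_q(r)$ is convex and differentiable on the interior of $\Sigma_N$. Hence for $r,r'\in\Sigma_N^{\rho}$ convexity gives $|H_q(r)-H_q(r')|\leq\max\big(|\langle\nabla H_q(r),r-r'\rangle|,|\langle\nabla H_q(r'),r-r'\rangle|\big)$, and since $r-r'$ is orthogonal to $\mathds{1}_N$ (both marginals sum to one), the inner products are unchanged if $\nabla H_q$ is replaced by any representative $\nabla H_q(\cdot)+c\,\mathds{1}_N$. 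By Cauchy--Schwarz it therefore suffices to exhibit, for every $r\in\Sigma_N^{\rho}$, a representative $g$ of $\nabla H_q(r)$ with $|g|\leq L_{\rho,\varepsilon}$.

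Next I would identify $g$ explicitly. By Legendre duality and Theorem \ref{th:diffHq*}, the gradient $g=\nabla H_q(r)$ is characterised by $\nabla H_q^{\ast}(g)=r$, that is $r=\diag(\alpha)K\big(q/(K\alpha)\big)$ with $\alpha=\exp(g/\varepsilon)$ and $K=\exp(-C/\varepsilon)$. Writing $\beta:=q/(K\alpha)$, the $m$-th marginal equation reads $r_m=\alpha_m(K\beta)_m$, whence $g_m=\varepsilon\log r_m-\varepsilon\log(K\beta)_m$, and symmetrically the conjugate potential $h=\varepsilon\log\beta$ satisfies $h_\ell=\varepsilon\log q_\ell-\varepsilon\log(K\alpha)_\ell$. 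The estimate of $|g|$ then splits into two entrywise ingredients. For the \emph{spread}, for any indices $m,k$ the ratio $(K\beta)_k/(K\beta)_m$ is a convex combination of the numbers $e^{(C_{m\ell}-C_{k\ell})/\varepsilon}$, so that $|\varepsilon\log(K\beta)_m-\varepsilon\log(K\beta)_k|\leq\sup_{\ell}|C_{m\ell}-C_{k\ell}|$; combining this with $\rho\leq r_m\leq1$ for $r\in\Sigma_N^{\rho}$, which yields $|\varepsilon\log r_m-\varepsilon\log r_k|\leq-\varepsilon\log\rho$, gives the key bound $|g_m-g_k|\leq-\varepsilon\log\rho+\sup_{\ell}|C_{m\ell}-C_{k\ell}|$. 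For the \emph{anchor}, a suitable normalisation of the additive constant together with the elementary log-sum-exp versus max comparison (each application costing a factor $\varepsilon\log N$, one for the marginal update defining $g$ and one for the conjugate update defining $h$) controls a reference entry by $2\varepsilon\log N-\varepsilon\log\rho$. Combining the anchor with the spread bound and optimising the reference index $k$ separately for each coordinate $m$ then produces $|g_m|\leq2\varepsilon\log N-\varepsilon\log\rho+\inf_{k}\sup_{\ell}|C_{m\ell}-C_{k\ell}|$. Squaring and summing over $m$ gives exactly $|g|\leq L_{\rho,\varepsilon}$ with $L_{\rho,\varepsilon}$ as in \eqref{eq:defL}, which closes the argument through the reduction above.

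I expect the anchor step to be the main obstacle. The two potentials $g$ and $h$ are coupled through the pair of marginal fixed-point equations and are determined only up to the additive constant $g\mapsto g+c\,\mathds{1}_N$, $h\mapsto h-c\,\mathds{1}_N$; the difficulty is to choose this constant and to disentangle the coupling so that the resulting entrywise bound involves only the cost \emph{differences} $|C_{m\ell}-C_{k\ell}|$ rather than the absolute costs $C_{m\ell}$ (a naive one-term lower bound on $(K\beta)_m$ reintroduces an absolute cost), while keeping the correct $\log N$ dependence. By contrast, the spread estimate and the reduction of Lipschitz continuity to a gradient-norm bound are routine once the identification of $\nabla H_q(r)$ with the dual potential from Theorem \ref{th:diffHq*} is available.
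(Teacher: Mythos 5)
Your overall architecture genuinely parallels the paper's proof: the paper also reduces the Lipschitz estimate to a Euclidean norm bound on an optimal dual potential, working with the maximizers $(\alpha^{q,r},\beta^{q,r})$ of \eqref{eq:dual_Sinkhorn} and the inequality \eqref{eq:Lip}, and your $g=\nabla H_q(r)$ is exactly such an $\alpha^{q,r}$ up to an additive constant, so your convexity reduction and the identification via $\nabla H_q^{\ast}(g)=r$ are sound. Your spread estimate is also correct, and in fact sharper than the corresponding step in the paper: the convex-combination argument for the ratio $(K\beta)_k/(K\beta)_m$ gives $|g_m-g_k|\leq \sup_{\ell}|C_{m\ell}-C_{k\ell}|-\varepsilon\log\rho$ with no $\log N$ term at all. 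The genuine gap is exactly where you predicted: the anchor, and it is not a technical detail you could fill in as described --- the step fails structurally. First, no absolute bound on a reference entry of $g$ can be extracted from the fixed-point equations, because the pair $(g,h)$ is invariant under $(g+c\mathds{1}_N,\,h-c\mathds{1}_N)$, so the equations cannot pin down any $|g_{k_0}|$; and the conjugate relation $h_{\ell}=\varepsilon\log q_{\ell}-\varepsilon\log (K\alpha)_{\ell}$ gives nothing since the lemma only assumes $q\in\Sigma_N$, whose entries may vanish --- so your accounting of ``one $\varepsilon\log N$ for the update defining $g$ and one for the update defining $h$'' cannot be implemented on the $h$ side. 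Second, the arithmetic does not close even if an anchor were granted: $|g_{k_0}|\leq 2\varepsilon\log N-\varepsilon\log\rho$ combined with your spread bound yields $|g_m|\leq 2\varepsilon\log N-2\varepsilon\log\rho+\sup_{\ell}|C_{m\ell}-C_{k_0\ell}|$, not the per-coordinate bound you assert; and ``optimising the reference index $k$ separately for each coordinate $m$'' is not available, since a norm bound requires a single additive constant $c$ serving all coordinates simultaneously.

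The paper escapes the obstacle by never anchoring an absolute entry. Introducing $\tilde{\alpha}^{q,r}_m=\min_{\ell}\{C_{m\ell}+\varepsilon\log r_m-\beta^{q,r}_{\ell}\}$ as in \eqref{eq:ctransform}, it bounds only \emph{differences}: both $\varepsilon\log N$ factors come from comparing the log-sum-exp to the min at the two coordinates $m$ and $k$ of the \emph{same} potential (inequality \eqref{eq:softmax3} applied twice), the c-transform modulus of continuity \eqref{eq:modulus} supplies the cost differences, and this yields $|\alpha^{q,r}_m-\alpha^{q,r}_k|\leq 2\varepsilon\log N+\sup_{\ell}|C_{m\ell}-C_{k\ell}|+\varepsilon|\log(r_m/r_k)|$; the additive-constant freedom is then spent once, to set the reference entry exactly to zero, so the anchor is free. (The paper's very last step, choosing the minimizing $k$ after setting $\alpha^{q,r}_k=0$, is itself delicate on the same per-$m$ point, but its scheme at least never needs an absolute anchor.) If you want to salvage your tighter spread estimate, the legitimate conclusion from a single fixed reference index is $|g|\leq \inf_{k}\bigl(\sum_{m}\bigl(\sup_{\ell}|C_{m\ell}-C_{k\ell}|-\varepsilon\log\rho\bigr)^2\bigr)^{1/2}$, which is a valid Lipschitz constant but is not comparable coordinate-by-coordinate to \eqref{eq:defL}; so as written your proposal does not establish the lemma with the stated constant $L_{\rho,\varepsilon}$.
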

The proof of this Lemma is given in Appendix \ref{sec:dual_bounded}.

We can now proceed to the proof of Theorem \ref{th:Sinkhorn_rate}. Let us introduce the following Sinkhorn barycenter
\begin{equation}
\br_n^{\varepsilon} =\uargmin{r\in\Sigma_N^{\rho}}\frac{1}{n}\sum_{i=1}^n W_{2,\varepsilon}^2(r,{\bq}_i) = \uargmin{r\in\Sigma_N^{\rho}}\frac{1}{n}\sum_{i=1}^n H_{{\bq}_i}(r) , \label{eq:defrneps}
\end{equation}
of the iid random measures $\bq_1,\ldots,\bq_n$
sampled from the distribution $\P$ supported on  $\Sigma_N^{\rho}$.
By the triangle inequality, we have that
\begin{equation}\label{ineq:rate_Sinkhorn}
\E(\vert r^{\varepsilon}-\hat{\br}^{\varepsilon}_{n,p}\vert^2)\leq 2\E(\vert {r}^{\varepsilon}-{\br}_n^{\varepsilon}\vert^2)+2\E(\vert {\br}_n^{\varepsilon}-\hat{{\br}}^{\varepsilon}_{n,p}\vert^2).
\end{equation}
To control the first term of the right hand side of the above inequality, we use that (for any $q \in \Sigma_N$) $r \mapsto H_q(r)$ is $\varepsilon$-strongly convex by Theorem \ref{th:strong_convexity_H} and $L_{\rho,\varepsilon}$-Lipschitz on $\Sigma_N^{\rho}$ by Lemma \ref{lemma:Lipschitz} where $L_{\rho,\varepsilon}$ is the constant defined by equation \eqref{eq:defL}. Under these assumptions, it follows from the arguments in the proof of Theorem 6 in \cite{shalev2009stochastic} that
\begin{equation}\label{conv:Sinkhorn}
\E(\vert r^{\varepsilon}-{\br}_n^{\varepsilon}\vert^2)\leq \frac{8 L^2_{\rho,\varepsilon}}{\varepsilon^2n}.
\end{equation}
The strong convexity of $H_q$ has a major role here as it brings out the distance between the empirical minimizer ${\br}_n^{\varepsilon}$ and any other point in $\Sigma_N$.
For the second term in the right hand side of \eqref{ineq:rate_Sinkhorn}, we obtain by the strong convexity of $H_q$ that
$$\frac{1}{n}\sum_{i=1}^n H_{{\bq}_i}(\hat{{\br}}^{\varepsilon}_{n,p})\geq \frac{1}{n}\sum_{i=1}^n H_{{\bq}_i}({\br}_n^{\varepsilon})+\frac{1}{n}\sum_{i=1}^n \nabla H_{{\bq}_i}({\br}_n^{\varepsilon})^T(\hat{\br}^{\varepsilon}_{n,p}-{\br}_n^{\varepsilon})+\frac{\varepsilon}{2}\vert {\br}_n^{\varepsilon}-\hat{{\br}}^{\varepsilon}_{n,p}\vert^2.$$
Theorem 3.1 in \cite{CuturiPeyre} ensures that $\frac{1}{n}\sum_i\nabla H_{q_i}({\br}_n^{\varepsilon})=0$. The same inequality also holds for the terms $H_{\hat{{\bq}}_i^{p_i}}$, and we therefore have
\begin{align*}
& \frac{1}{n}\sum_{i=1}^n H_{{\bq}_i}(\hat{{\br}}^{\varepsilon}_{n,p})\geq \frac{1}{n} \sum_{i=1}^n H_{{\bq}_i}({\br}_n^{\varepsilon})+\frac{\varepsilon}{2}\vert {\br}_n^{\varepsilon}-\hat{{\br}}^{\varepsilon}_{n,p}\vert^2,\\
& \frac{1}{n}\sum_{i=1}^n H_{\hat{{\bq}}_i^{p_i}}({\br}_n^{\varepsilon})\geq \frac{1}{n}\sum_{i=1}^n H_{\hat{{\bq}}_i^{p_i}}(\hat{{\br}}^{\varepsilon}_{n,p})+\frac{\varepsilon}{2}\vert {\br}_n^{\varepsilon}-\hat{{\br}}^{\varepsilon}_{n,p}\vert^2.
\end{align*}
Using the symmetry of the Sinkhorn divergence, Lemma \ref{lemma:Lipschitz} also implies that the mapping  $q \mapsto H_q(r)$ is $L_{\rho,\varepsilon}$-Lipschitz on $\Sigma_N^{\rho}$ for any discrete distribution $r$. Hence,  by summing the two above inequalities, and by taking the expectation on both sides, we obtain that
$$
\varepsilon\E(\vert {\br}_n^{\varepsilon}-\hat{{\br}}^{\varepsilon}_{n,p}\vert^2)  \leq \frac{2L_{\rho,\varepsilon}}{n}\sum_{i=1}^n\E(\vert \bq_i-\hat{\bq}_i^{p_i}\vert).
$$
Using the inequalities 
$$
\vert \bq_i-\hat{\bq}_i^{p_i}\vert \leq \vert \bq_i-\tilde{\bq}_i^{p_i}\vert  + \rho N \vert \tilde{\bq}_i^{p_i} \vert  + \rho  \vert \1_{N}\vert \leq  \vert \bq_i-\tilde{\bq}_i^{p_i}\vert  +  \rho( N + \sqrt{N}),
$$
we finally have that
\begin{align}\label{relation1}
\varepsilon\E(\vert {\br}_n^{\varepsilon}-\hat{{\br}}^{\varepsilon}_{n,p}\vert^2) & \leq 2L_{\rho,\varepsilon} \left(  \frac{1}{n}\sum_{i=1}^n\sqrt{\E(\vert \bq_i-\tilde{\bq}_i^{p_i}\vert^2)} + \rho( N + \sqrt{N} ) \right).
\end{align}
Conditionally on $\bq_i$, one has that $p_i \tilde{\bq}_i^{p_i}$ is a  random vector following a multinomial distribution $\mathcal{M}(p_i,\bq_i)$. Hence, for each $1 \leq k \leq N$, denoting $\bq_{i,k}$ (resp. $\tilde{\bq}_{i,k}^{p_i}$) the $k$-th coordinate of $\bq_{i}$ (resp. $\tilde{\bq}_{i}^{p_i}$), one has that 
$$
\E \left(  \tilde{\bq}_{i,k}^{p_i}  | \bq_i \right)= \bq_{i,k} \quad \mbox{and} \quad \E \left[ \left(  \tilde{\bq}_{i,k}^{p_i} - \bq_{i,k} \right)^2  | \bq_i  \right]=\frac{ \bq_{i,k} (1- \bq_{i,k})}{ p_i } \leq \frac{ 1}{ 4 p_i }.
$$
Thus, we have 
\begin{align}\label{relation2}
\E(\vert \bq_i-\tilde{\bq}_i^{p_i}\vert^2) = \sum_{k=1}^{N}  \E ( \bq_{i,k}-\tilde{\bq}_{i,k}^{p_i})^2 \leq \frac{1}{4} \sum_{k=1}^{N}  p_{i}^{-1} \leq \frac{N}{4 p}
\end{align}
and we obtain  from \eqref{relation1} and \eqref{relation2} that
\begin{equation} \label{conv:Sinkhorn2}
\E(\vert {\br}_n^{\varepsilon}-\hat{{\br}}^{\varepsilon}_{n,p}\vert^2)     \leq \frac{L_{\rho,\varepsilon}}{\varepsilon} \left(   \sqrt{\frac{N}{ p}} + 2 \rho( N + \sqrt{N} ) \right).
\end{equation}
Combining inequalities \eqref{ineq:rate_Sinkhorn}, \eqref{conv:Sinkhorn}, and \eqref{conv:Sinkhorn2} concludes the proof of Theorem \ref{sec:StrongConvexity}.

\section{Goldenshluger-Lepski method and oracle inequality} \label{sec:GL}

In this section, we present a method to choose, in a data-driven way, the parameters $\gamma$ in \eqref{eq:defmu} and $\varepsilon$ in \eqref{eq:defr}. By analogy with the work in \cite{lacour2016minimal} based on the Goldenshluger-Lepski (GL) principle \cite{goldenshluger2008universal}, we propose to compute a bias-variance trade-off functional which will provide an automatic selection method for the regularization parameters within a  finite set for either penalized or Sinkhorn barycenters. The method consists in comparing estimators pairwise, for a given range of regularization parameters, with respect to a loss function.

Since the formulation of the GL's principle is similar for both estimators,  we  only present its principle for the Sinkhorn barycenter described in Section \ref{sec:Sinkhorn}.  We also show that the formulation of the GL's principle proposed in this paper for the Sinkhorn barycenter leads to a data-driven estimator satisfying an oracle inequality which sheds some light on its theoretical properties.

The key point in the GL method is the definition of a data-driven trade-off functional that is composed of a term measuring the disparity  between two estimators and of a penalty term that is chosen according to the upper bounds on the variance of the Sinkhorn barycenter given in Section \ref{sec:Sinkhorn}. More precisely, we assume that we have at our disposal a collection of estimators $(\hat{\br}_{n,p}^{\varepsilon})_{\varepsilon}$ for $\varepsilon$ ranging in a finite set $\Lambda \subset \R_+$ depending on the data at hand. The GL method consists in choosing a value $\hat{\varepsilon}$ which minimizes the following bias-variance trade-off function:
\begin{equation}
\label{eq:GL_method}
\hat{\varepsilon}= \uargmin{\varepsilon\in\Lambda}\ B(\varepsilon)+3V(\varepsilon)
\end{equation}
for which we set the ``bias term'' as
\begin{equation}
\label{eq:B_tradeoff}
B(\varepsilon)=\sup_{\tilde{\varepsilon}\leq\varepsilon} \ \left[\vert \hat{\br}_{n,p}^{\varepsilon}-\hat{\br}_{n,p}^{\tilde{\varepsilon}}\vert^2 - 3 V(\tilde{\varepsilon})\right]_{+}
\end{equation}
where $x_{+}=\max(x,0)$ denotes the positive part,   
and the ``variance term'' $V$ is chosen accordingly to the oracle inequality in the Theorem \eqref{theo:oracle} below as follows
\begin{equation}
\label{eq:var_bound}
V(\varepsilon) = V_{b_1,b_2}(\varepsilon) :=  b_1 \frac{8 L_{\rho,\varepsilon}^2}{\varepsilon^2n}  +  \frac{2 L_{\rho,\varepsilon}}{\varepsilon} \left(   \sqrt{b_2 \frac{N}{ p}} +  \rho( N + \sqrt{N} ) \right),
\end{equation}
where $b_1 > 0$ and $b_2 >0$ are constants whose choice is discussed below. Note that, in definition  \eqref{eq:B_tradeoff} of the bias term, it is implicitly understood that the supremum is restricted to the regularization parameters $\tilde{\varepsilon}\leq\varepsilon$ such that $\tilde{\varepsilon} \in \Lambda$. To stress the dependence of the variance term on $b_1$ and $b_2$, we sometimes write $V(\varepsilon) = V_{b_1,b_2}(\varepsilon)$.

Following \cite{lacour2016minimal}, we propose to show that, under an appropriate choice of the constants $b_1$ and $b_2$, the selected estimator $\hat{\br}_{n,p}^{\hat{\varepsilon}}$ satisfies  an oracle inequality which represents an optimal bias-variance tradeoff (depending on the set $\Lambda$) for its risk $\vert \hat{\br}_{n,p}^{\hat{\varepsilon}}-r^{0}\vert^2$, where
\begin{equation}\label{def:r0}
r^{0} \in \uargmin{r\in\Sigma_N^{\rho}}\E_{\bq\sim\P}[W_{2,0}^2(r,\bq)] \quad \mbox{with} \quad
W_{2,0}^2(r,q):=\underset{U\in U(r,q)}{\min}\  \langle C, U\rangle.
\end{equation}

\begin{rmq}
It should be remarked that we chose to refer to $\E(\vert r^{\varepsilon}-\hat{r}^{\varepsilon}_{n,p}\vert^2)$  as a ``variance term''. This is somewhat imprecise as the estimator $\hat{r}^{\varepsilon}_{n,p}$ is certainly such that $\E(\hat{r}^{\varepsilon}_{n,p}) \neq r^{\varepsilon}$. Therefore, $\E(\vert r^{\varepsilon}-\hat{r}^{\varepsilon}_{n,p}\vert^2)$ is not the usual statistical notion of variance for $\hat{r}^{\varepsilon}_{n,p}$. Similarly, we have chosen to implicitly referred to $\vert r^{\varepsilon} - r^{0}\vert$ as a bias term which is rather an approximation error term. Nevertheless, we prefer to keep this terminology of bias and variance as it is consistent with the one used to present the GL's principle in \cite{lacour2016minimal} for the classical problem of kernel density estimation for which the standard notions of a bias term and an approximation error coincide.
\end{rmq}

Now, as in \cite{lacour2016minimal}, we introduce the following generalized approximation error 
$$
D(\varepsilon) := \max \left(  \sup_{\tilde{\varepsilon}\leq\varepsilon} \ \vert r^{\tilde{\varepsilon}}-r^{\varepsilon}\vert  , \vert r^{\varepsilon}-r^{0}\vert \right)  
$$
which satisfies $D(\varepsilon)  \leq 2  \sup_{\tilde{\varepsilon}\leq\varepsilon} \vert r^{\tilde{\varepsilon}}-r^{0}\vert$. The following result shows that, under an appropriate choice of $b_1$ and $b_2$, the data-driven choice $\hat{\varepsilon}$ of the regularization parameter by the GL method leads to a Sinkhorn barycenter $\hat{\br}_{n,p}^{\hat{\varepsilon}}$ satisfying an oracle inequality leading to an optimal bias-variance tradeoff within the collection $\Lambda$ of regularization parameters. 

\begin{thm} \label{theo:oracle} 
Assume that the constants $b_1$ and $b_2$ in the calibration of the variance term  $V(\varepsilon) = V_{b_1,b_2}(\varepsilon)$ are such that
\begin{equation}
\label{def:b1b2}
b_1 > (1 + \sqrt{\log(|\Lambda|/2)})^2 \quad \mbox{and} \quad  b_2 >  \log(2)+ \frac{\log(n)+\sqrt{\log(|\Lambda|/2)}}{N}.
\end{equation}
where $|\Lambda|$ denotes the cardinal of $\Lambda$. Then, one has that
\begin{equation}
\vert \hat{\br}_{n,p}^{\hat{\varepsilon}}-r^{0}\vert \leq (1 + 2 \sqrt{3} ) \inf_{\varepsilon \in \Lambda} \left\{ D(\varepsilon) +  \sqrt{V_{b_1,b_2}(\varepsilon)} \right\} , \label{eq:oracle}
\end{equation}
with probability larger than $1 - |\Lambda| \left( e^{-(\sqrt{b_1} - 1)^2} + e^{N (\log(2)-b_2) + \log(n)} \right)$.
\end{thm}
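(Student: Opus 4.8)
The plan is to follow the standard Goldenshluger--Lepski scheme of \cite{lacour2016minimal}: isolate a single ``good event'' on which the penalty $V$ genuinely dominates the stochastic fluctuation of every estimator in the collection, run a purely deterministic chaining argument on that event to obtain \eqref{eq:oracle}, and finally bound the probability of the good event by a union bound over $\Lambda$. The whole deterministic part takes place on
$$
\mathcal{A} = \Big\{\, \vert \hat{\br}_{n,p}^{\varepsilon} - r^{\varepsilon}\vert^2 \leq V_{b_1,b_2}(\varepsilon)\ \text{ for every } \varepsilon \in \Lambda \,\Big\}.
$$
The first deterministic step is to control the data-driven bias term $B$ by the approximation error $D$ and the penalty $V$. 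For $\tilde{\varepsilon}\leq\varepsilon$, the triangle inequality together with the definition of $\mathcal{A}$ and of $D$ gives $\vert \hat{\br}_{n,p}^{\varepsilon}-\hat{\br}_{n,p}^{\tilde{\varepsilon}}\vert \leq \sqrt{V(\varepsilon)}+D(\varepsilon)+\sqrt{V(\tilde{\varepsilon})}$; squaring, using $(a+b+c)^2\leq 3(a^2+b^2+c^2)$, subtracting $3V(\tilde{\varepsilon})$ and taking the supremum over $\tilde{\varepsilon}\leq\varepsilon$ yields
$$
B(\varepsilon)\leq 3D(\varepsilon)^2+3V(\varepsilon)\qquad\text{for all }\varepsilon\in\Lambda,
$$
which is precisely where the shape of \eqref{eq:B_tradeoff} (the subtraction of $3V(\tilde{\varepsilon})$) is used.

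Next I would compare the selected estimator with an arbitrary competitor. Fixing $\varepsilon\in\Lambda$ and distinguishing the cases $\hat{\varepsilon}\leq\varepsilon$ and $\hat{\varepsilon}\geq\varepsilon$, the definition \eqref{eq:B_tradeoff} of $B$ applied at $\max(\varepsilon,\hat{\varepsilon})$ gives in both cases $\vert \hat{\br}_{n,p}^{\hat{\varepsilon}}-\hat{\br}_{n,p}^{\varepsilon}\vert^2\leq B(\varepsilon)+B(\hat{\varepsilon})+3V(\varepsilon)+3V(\hat{\varepsilon})$. The minimality of $\hat{\varepsilon}$ in \eqref{eq:GL_method}, i.e.\ $B(\hat{\varepsilon})+3V(\hat{\varepsilon})\leq B(\varepsilon)+3V(\varepsilon)$, bounds this by $2(B(\varepsilon)+3V(\varepsilon))$, and inserting the previous display gives $\vert \hat{\br}_{n,p}^{\hat{\varepsilon}}-\hat{\br}_{n,p}^{\varepsilon}\vert^2\leq 6D(\varepsilon)^2+12V(\varepsilon)$. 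A final triangle inequality $\vert \hat{\br}_{n,p}^{\hat{\varepsilon}}-r^0\vert\leq\vert \hat{\br}_{n,p}^{\hat{\varepsilon}}-\hat{\br}_{n,p}^{\varepsilon}\vert+\vert \hat{\br}_{n,p}^{\varepsilon}-r^{\varepsilon}\vert+\vert r^{\varepsilon}-r^0\vert$, together with $\sqrt{6D^2+12V}\leq\sqrt{6}D+2\sqrt{3}\sqrt{V}$, the bound $\vert \hat{\br}_{n,p}^{\varepsilon}-r^{\varepsilon}\vert\leq\sqrt{V(\varepsilon)}$ valid on $\mathcal{A}$, and $\vert r^{\varepsilon}-r^0\vert\leq D(\varepsilon)$, yields $(1+\sqrt{6})D(\varepsilon)+(1+2\sqrt{3})\sqrt{V(\varepsilon)}\leq(1+2\sqrt{3})\big(D(\varepsilon)+\sqrt{V(\varepsilon)}\big)$; taking the infimum over $\varepsilon\in\Lambda$ gives \eqref{eq:oracle}.

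The remaining and genuinely hard step is to prove that $\P(\mathcal{A}^c)$ is at most the stated quantity, and this is where the calibration \eqref{def:b1b2} of $b_1,b_2$ enters. By a union bound over $\Lambda$ it suffices to bound, for each fixed $\varepsilon$, the probability that $\vert \hat{\br}_{n,p}^{\varepsilon}-r^{\varepsilon}\vert^2>V_{b_1,b_2}(\varepsilon)$; I would split this fluctuation through the intermediate barycenter $\br_n^{\varepsilon}$ of \eqref{eq:defrneps}, exactly as in \eqref{ineq:rate_Sinkhorn}, and treat the two independent sources of randomness separately. For $\vert r^{\varepsilon}-\br_n^{\varepsilon}\vert$ (sampling of the $n$ measures) I would use that, by the $\varepsilon$-strong convexity of Theorem \ref{th:strong_convexity_H} and the $L_{\rho,\varepsilon}$-Lipschitz property of Lemma \ref{lemma:Lipschitz}, the regularized $M$-estimator $\br_n^{\varepsilon}$ is stable of order $L_{\rho,\varepsilon}/(\varepsilon n)$ under replacement of a single $\bq_i$; a bounded-difference (McDiarmid) inequality around the expectation already controlled in \eqref{conv:Sinkhorn} then produces the tail $e^{-(\sqrt{b_1}-1)^2}$, with $b_1$ calibrating the deviation. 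For $\vert \br_n^{\varepsilon}-\hat{\br}_{n,p}^{\varepsilon}\vert$ (multinomial sampling of the $p_i$ observations) I would reproduce the Lipschitz bound \eqref{relation1} but keep the empirical average $\frac1n\sum_i\vert \bq_i-\tilde{\bq}_i^{p_i}\vert$ instead of passing to its expectation as in \eqref{relation2}, and concentrate this $\ell_2$ fluctuation of a centered multinomial: writing the norm as a supremum of linear functionals over a finite set of at most $2^N$ test vectors and applying Hoeffding coordinatewise, then union-bounding over the $n$ measures, gives the tail $e^{N(\log 2-b_2)+\log n}$. I expect this multinomial $\ell_2$-concentration, together with the bookkeeping that makes the two tails combine into exactly $V_{b_1,b_2}(\varepsilon)$, to be the main obstacle; it is the part deferred to Appendix \ref{app:concen}.
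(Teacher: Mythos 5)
Your proposal is correct and takes essentially the same route as the paper: the same good event $\bigl\{\vert \hat{\br}_{n,p}^{\varepsilon}-r^{\varepsilon}\vert \leq \sqrt{V(\varepsilon)} \ \forall \varepsilon\in\Lambda\bigr\}$, the same deterministic Goldenshluger--Lepski chaining (your bound $\vert \hat{\br}_{n,p}^{\hat{\varepsilon}}-\hat{\br}_{n,p}^{\varepsilon}\vert^2\leq 2B(\varepsilon)+6V(\varepsilon)$ via minimality of $\hat{\varepsilon}$ matches \eqref{eq:LacourMassar2}, and $B(\varepsilon)\leq 3V(\varepsilon)+3D^2(\varepsilon)$ matches \eqref{eq:boundBeps}, giving the same constant $1+2\sqrt{3}$), and the same two concentration arguments, which are exactly Propositions \ref{prop:concentration1} and \ref{prop:concentration2} of Appendix \ref{app:concen}: McDiarmid via the strong-convexity/Lipschitz stability of order $L_{\rho,\varepsilon}/(\varepsilon n)$ for the $n$-sample term, and a Bretagnolle--Huber--Carol-type multinomial bound for the $p$-sample term. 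One micro-remark: the $\ell_2$ norm itself is not a supremum over $2^N$ sign vectors; as in the paper you must first use $\vert \bq_i-\tilde{\bq}_i^{p_i}\vert \leq \sum_{k}|\bq_{i,k}-\tilde{\bq}_{i,k}^{p_i}|$ and then apply the $2^N$ union bound to the $\ell_1$ norm, which yields the stated tail $e^{N(\log 2-b_2)+\log n}$.
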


\begin{rmq}
For a fixed set $\Lambda$, we asymptotically (in $n$ and $p$) have that
$$V_{n,p}:= V_{b_1,b_2}(\varepsilon) \sim \frac{1}{n}+\sqrt{\frac{\log(n)}{p}}$$
meaning that for a well chosen $p$ depending on $n$, $V_{n,p}$ tends to zero when both $n,p$ tend to infinity. Therefore, when $n\to +\infty$, the probability that $\vert \hat{\br}_{n,p}^{\hat{\varepsilon}}-r^{0}\vert$ is close to zero is as arbitrarily small as $b_1$ and $b_2$ are chosen sufficiently large while satisfying \eqref{def:b1b2}.
\end{rmq}

\begin{proof}
We first start as in the proof of Proposition 1 in \cite{lacour2016minimal} by showing that for any $\varepsilon\in\Lambda$
\begin{equation}
\vert \hat{\br}_{n,p}^{\hat{\varepsilon}}-r^{0}\vert \leq \sqrt{2 B(\varepsilon) + 6 V(\varepsilon)}  + \vert \hat{\br}_{n,p}^{\varepsilon}-r^{\varepsilon}\vert + D(\varepsilon). \label{eq:LacourMassart}
\end{equation}
For completeness, we repeat the arguments in \cite{lacour2016minimal} yielding to inequality \eqref{eq:LacourMassart} as they allow to shed some lights on the basic principles of the GL method. For any fixed $\varepsilon\in\Lambda$  one has that
\begin{eqnarray}
\vert \hat{\br}_{n,p}^{\hat{\varepsilon}}-r^{0}\vert & \leq & \vert \hat{\br}_{n,p}^{\hat{\varepsilon}}-\hat{\br}_{n,p}^{\varepsilon} \vert + \vert \hat{\br}_{n,p}^{\varepsilon}-r^{0}\vert  \leq   \vert \hat{\br}_{n,p}^{\hat{\varepsilon}}-\hat{\br}_{n,p}^{\varepsilon} \vert +  \vert \hat{\br}_{n,p}^{\varepsilon}-r^{\varepsilon}\vert + \vert r^{\varepsilon}-r^{0}\vert \nonumber \\
& \leq &   \vert \hat{\br}_{n,p}^{\hat{\varepsilon}}-\hat{\br}_{n,p}^{\varepsilon} \vert +  \vert \hat{\br}_{n,p}^{\varepsilon}-r^{\varepsilon}\vert + D(\varepsilon).  \label{eq:LacourMassar1}
\end{eqnarray}
Then, for any $\tilde{\varepsilon}\leq\varepsilon$, the definition \eqref{eq:B_tradeoff} of the bias term implies that $\vert \hat{\br}_{n,p}^{\tilde{\varepsilon}}-\hat{\br}_{n,p}^{\varepsilon} \vert^2 \leq B(\varepsilon) +  3V(\tilde{\varepsilon}) $ which can also be written  as
$
\vert \hat{\br}_{n,p}^{\tilde{\varepsilon}}-\hat{\br}_{n,p}^{\varepsilon} \vert^2 \leq B(\max(\varepsilon,\tilde{\varepsilon})) +  3V(\min(\varepsilon,\tilde{\varepsilon})) 
$
for all $\varepsilon ,  \tilde{\varepsilon} \in \Lambda$. Therefore, by definition \eqref{eq:GL_method} of $\hat{\varepsilon}$, one obtains that
\begin{equation}
\vert \hat{\br}_{n,p}^{\hat{\varepsilon}}-\hat{\br}_{n,p}^{\varepsilon} \vert^2 \leq B(\max(\varepsilon,\hat{\varepsilon})) + 3 V(\min(\varepsilon,\hat{\varepsilon})) \leq B(\varepsilon) +  3V(\varepsilon) + \max(B(\varepsilon) , 3V(\varepsilon)). \label{eq:LacourMassar2}
\end{equation}
Hence, inserting inequality \eqref{eq:LacourMassar2} into \eqref{eq:LacourMassar1} finally yields inequality \eqref{eq:LacourMassart}.

Now that inequality \eqref{eq:LacourMassart} has been established,  the main steps in the proof are the control  of the stochastic terms $B(\varepsilon)$ and $\vert \hat{\br}_{n,p}^{\varepsilon}-r^{\varepsilon}\vert$. First, using the triangle inequality
$$
\vert \hat{\br}_{n,p}^{\varepsilon}-r^{\varepsilon}\vert \leq \vert \hat{\br}_{n}^{\varepsilon}-r^{\varepsilon}\vert  + \vert \hat{\br}_{n}^{\varepsilon} - \hat{\br}_{n,p}^{\varepsilon}\vert 
$$
we obtain by Proposition \ref{prop:concentration1} and Proposition \ref{prop:concentration2} that are presented in the appendix, that for any $u_1 > 0$ and $u_2 > 0$,
 $$
\P \left( \vert \hat{\br}_{n,p}^{\varepsilon}-r^{\varepsilon}\vert > (1+u_1)  \frac{2 \sqrt{2} L_{\rho,\varepsilon}}{\varepsilon \sqrt{n}} +  \sqrt{ \frac{2L_{\rho,\varepsilon}}{\varepsilon}\left(u_2 + \rho( N + \sqrt{N}) \right)} \right) \leq \exp\left(  -u_1^2 \right) + 2^N n \exp\left(  - p u_2^2 \right)
$$
where $p=\min_{1 \leq i \leq n} p_i$. Hence, choosing $u_1 > \sqrt{b_1} - 1$ and $u_2 >  \sqrt{b_2 \frac{N}{p}}$, implies that
\begin{equation}
\vert \hat{\br}_{n,p}^{\varepsilon}-r^{\varepsilon}\vert \leq \sqrt{V(\varepsilon)}, \quad \mbox{for all} \quad \varepsilon \in \Lambda, \label{eq:boundVarProba}
\end{equation}
with probability larger than $1 - |\Lambda| \left( e^{-(\sqrt{b_1} - 1)^2} + e^{N (\log(2)-b_2) + \log(n)} \right)$. To control $B(\varepsilon)$ with a bias term, we use the upper bound
$$
\vert \hat{\br}_{n,p}^{\varepsilon}-\hat{\br}_{n,p}^{\tilde{\varepsilon}}\vert^2 \leq \left( \vert \hat{\br}_{n,p}^{\varepsilon}-r^{\varepsilon}\vert + \vert  \hat{\br}_{n,p}^{\tilde{\varepsilon}} - r^{\tilde{\varepsilon}}\vert + \vert r^{\varepsilon}-r^{\tilde{\varepsilon}} \vert \right)^2
$$
combined with inequality \eqref{eq:boundVarProba} to obtain that
$$
\vert \hat{\br}_{n,p}^{\varepsilon}-\hat{\br}_{n,p}^{\tilde{\varepsilon}}\vert^2 \leq \left( \sqrt{V(\varepsilon)} + \sqrt{V(\tilde{\varepsilon})} + \vert r^{\varepsilon}-r^{\tilde{\varepsilon}} \vert \right)^2 \leq 3 \left(V(\varepsilon) +  V(\tilde{\varepsilon}) + \vert r^{\varepsilon}-r^{\tilde{\varepsilon}} \vert^2\right),
$$
for all $\varepsilon$ and $\tilde{\varepsilon}$ belonging to $\Lambda$, with probability $1 - |\Lambda| \left( e^{-(\sqrt{b_1} - 1)^2} + e^{N (\log(2)-b_2) + \log(n)} \right)$, which finally implies that (with the same probability) 
\begin{equation}
B(\varepsilon) \leq 3 V(\varepsilon)  + 3 D^2(\varepsilon). \label{eq:boundBeps}
\end{equation}
Combining inequalities \eqref{eq:LacourMassart}, \eqref{eq:boundVarProba} and  \eqref{eq:boundBeps}, we obtain that
\begin{eqnarray*}
\vert \hat{\br}_{n,p}^{\hat{\varepsilon}}-r^{0}\vert & \leq & \sqrt{6 D^2(\varepsilon) + 12 V(\varepsilon)}  + \sqrt{V(\varepsilon)} + D(\varepsilon), \\
& \leq &  \quad (1 + 2 \sqrt{3} ) \left(D(\varepsilon) +  \sqrt{V(\varepsilon)} \right), \quad \mbox{for all} \quad \varepsilon \in \Lambda,
\end{eqnarray*}
with probability larger than $1 - |\Lambda| \left( e^{-(\sqrt{b_1} - 1)^2} + e^{N (\log(2)-b_2) + \log(n)} \right)$, which completes the proof of Theorem  \ref{theo:oracle}.
\end{proof}

\section{Numerical experiments}\label{sec:expe}

In this section, we first illustrate with one-dimensional datasets the performances of the Goldenshluger-Lepski method to choose the regularization parameters $\gamma$ and $\varepsilon$. Then, we report the results from numerical experiments on simulated Gaussian mixtures and flow cytometry dataset in $\R^2$.

Unfortunately, in numerical experiments, we have found that the Lipschitz constant \eqref{eq:defL} leads to a too rough estimate  of the variance term $\E(\vert r^{\varepsilon}-\hat{r}^{\varepsilon}_{n,p}\vert^2)$ which has a magnitude much smaller than the upper bound \eqref{eq:Sinkhorn_rate}. Using the conditions of Theorem \ref{theo:oracle} on $b_1$ and $b_2$ leads  to a quantity $V(\varepsilon)$ which is  much larger than the bias term $B(\varepsilon)$ leading to always choosing the smallest value of $\varepsilon$ in $\Lambda$. To overcome this problem, it is necessary to either scale  the magnitude of $V(\varepsilon)$ by choosing small values of $b_1$ and $b_2$, 
or to improve the upper bound \eqref{eq:Sinkhorn_rate} using numerical methods. To this end, we use Monte-Carlo simulations to obtain the right order for the variance term, which allows to show that the Goldenshluger-Lepski principle leads to satisfactory choices of $\varepsilon$ in this setting.

\subsection{Simulated data: one-dimensional Gaussian mixtures}

We illustrate GL's principle for the one-dimensional example of random Gaussian mixtures that is displayed in Figure \ref{fig:ex_gaussian1D}. Our dataset consists of observations $(X_{i,j})_{1\leq i\leq n\ ; 1\leq p}$  sampled from  $n=15$ random distributions $\bnu_i$ that are mixtures of two Gaussian distributions with weights $(0.35,0.65)$, random means respectively belonging  to  the intervals $[-6,-2]$ and $[2,6]$ and random variances  both belonging to the interval $(0,2]$. For each random mixture distribution, we sample  $p=50$ observations. A first step is to perform Monte-Carlo experiments over $20$ regularized barycenters $\hat{r}^{\varepsilon}_{n,p}$ for each different values of $\varepsilon$ in the interval $[0.2,5]$. The results are presented in Figure \ref{fig:monte_carlo_gaussian1D} where we plot the estimation of the variance $\E(\vert r^{\varepsilon}-\hat{r}^{\varepsilon}_{n,p}\vert^2)$ with respect to the regularization parameter $\varepsilon\in [0.2,5]$ for different values of $N=2^6,2^7, 2^8$.
A first aspect is that the 
variance term estimated by Monte Carlo simultions
decreases as the regularization parameter increases.



\begin{figure}
\centering
\includegraphics[width=0.95 \textwidth,height=0.55\textwidth]{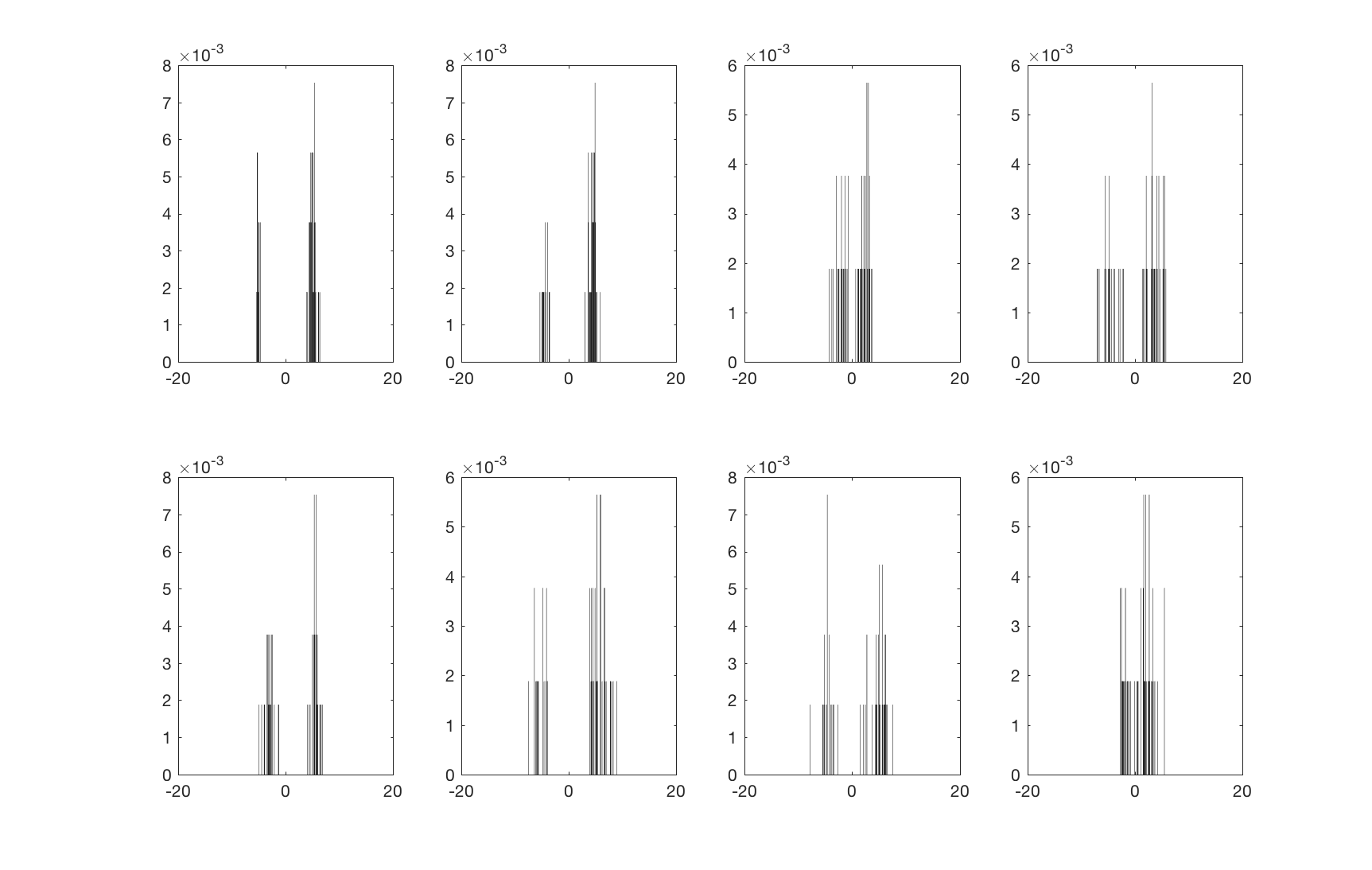} 
\caption{A subset of $8$ histograms (out of $n=15$) obtained with random variables sampled from  one-dimensional Gaussian mixtures distributions $\bnu_i$ (with random means and variances). Histograms are constructed by binning the data  $(X_{i,j})_{1\leq i\leq n\ ; 1\leq p}$ on a grid $\Omega_N$ of size $N=2^8$.}
\label{fig:ex_gaussian1D}
\end{figure}

\begin{figure}
\centering
\hspace{-1cm}\begin{tabular}{ccc}
\includegraphics[width=0.35 \textwidth,height=0.25\textwidth]{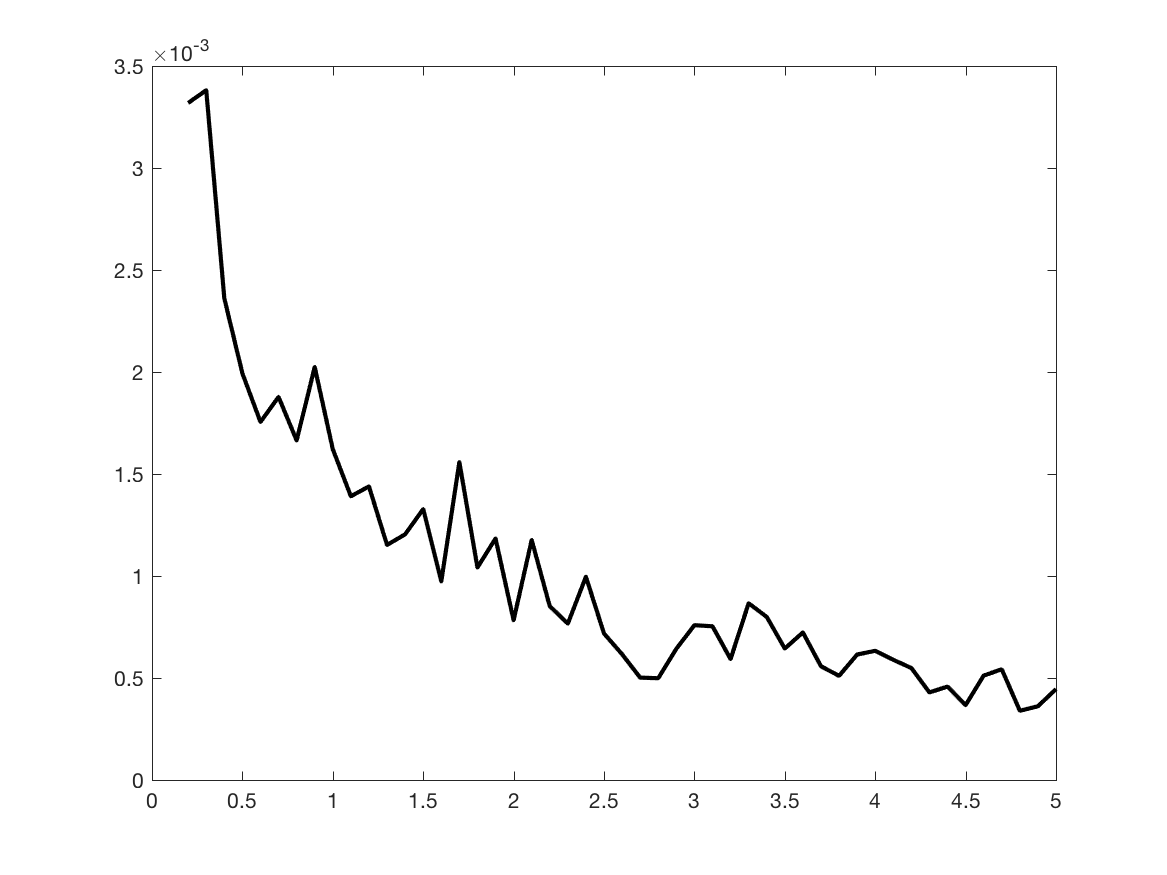} &\hspace{-1cm}
\includegraphics[width=0.35 \textwidth,height=0.25\textwidth]{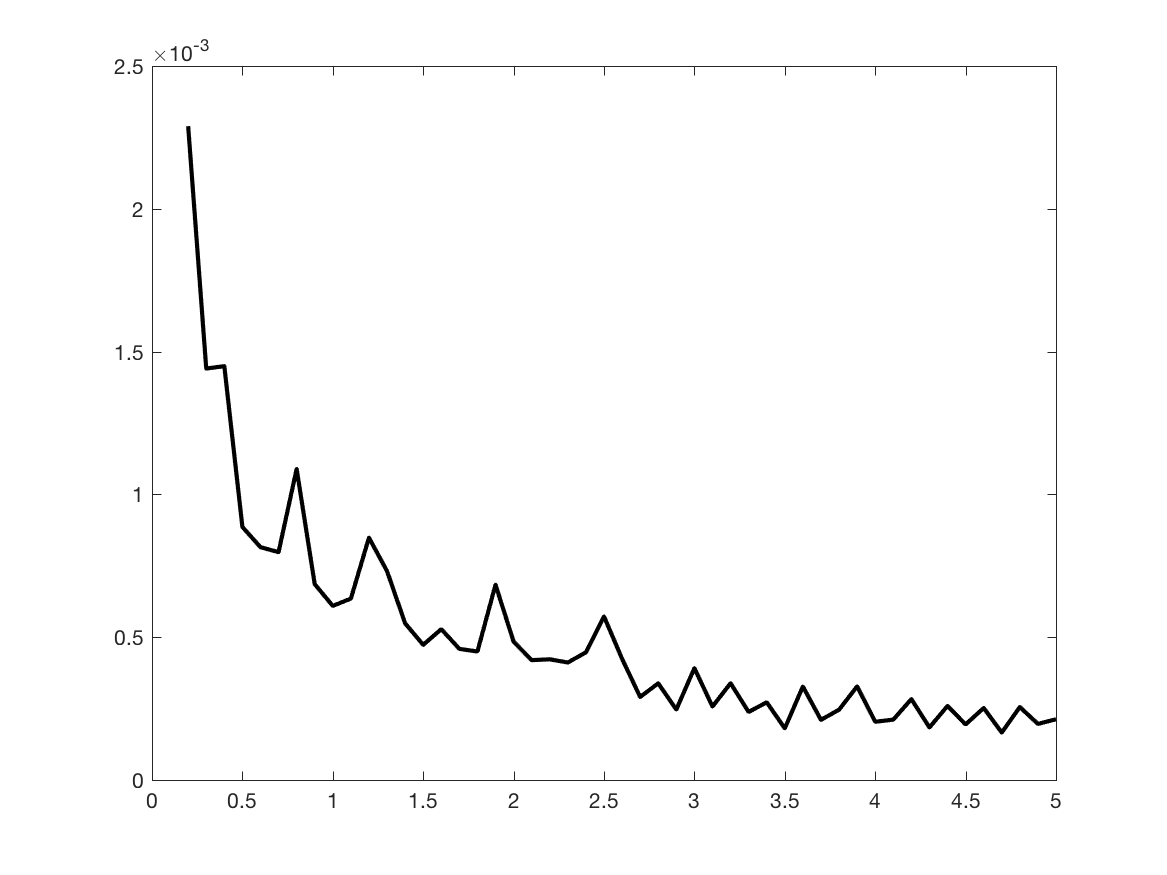} &\hspace{-1cm}
\includegraphics[width=0.35 \textwidth,height=0.25\textwidth]{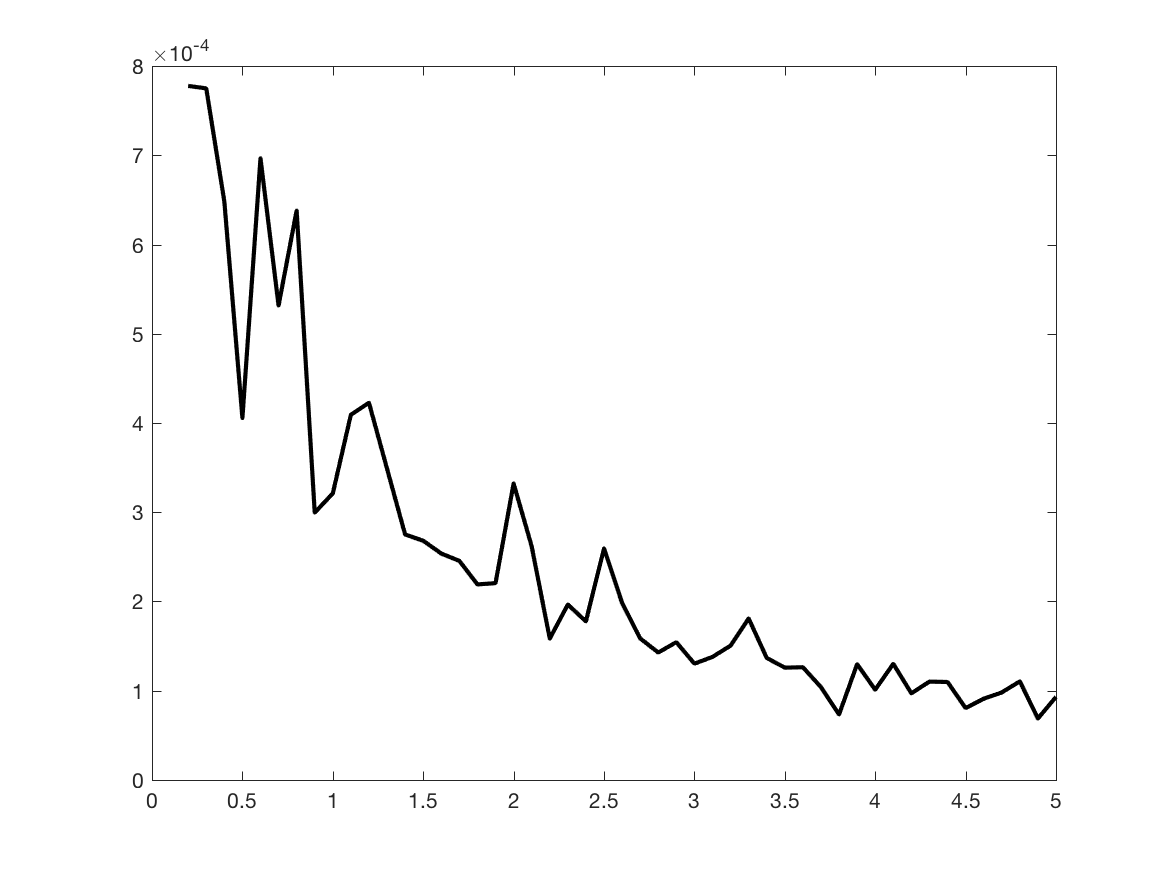} \\
(a) $N = 2^6$ &\hspace{-1cm} (b) $N = 2^7$ &\hspace{-1cm} (c) $N = 2^8$
\end{tabular}
\caption{A Monte-Carlo experiments for estimating the variance $\E(\vert r^{\varepsilon}-\hat{r}^{\varepsilon}_{n,p}\vert^2)$ for three values of the grid $N$.}
\label{fig:monte_carlo_gaussian1D}
\end{figure}

From now on, we  fix the size $N=2^8$ of the grid, and we comment on the choice of the parameters  $\hat{\varepsilon}$ and $\hat{\gamma}$. To obtain this data-driven choice of regularization,   we use the Goldenshluger-Lepski principle where the variance term $ V(\varepsilon)$ is given by the variance function displayed in Figure \ref{fig:monte_carlo_gaussian1D} obtained via Monte-Carlo simulations.
We display in Figure \ref{fig:Sinkhorn_gaussian1D}(a) the trade-off function $B(\varepsilon)+3V(\varepsilon)$, and we discuss the influence of  $\varepsilon$ on the smoothness the Sinkhorn barycenter.

From Figure \ref{fig:Sinkhorn_gaussian1D}(b), we observe that,  when the parameter $\varepsilon=0.2$ is small (dotted blue curve), then  the corresponding Sinkhorn barycenter $\hat{\br}_{n,p}^{\varepsilon}$  is irregular, and it presents spurious peaks. On the contrary, too much regularization, e.g.\ $\varepsilon=5$, implies that the barycenter (dashed green curve) is flattened and its mass is spread out. The optimal barycenter (solid red curve), that is $\hat{\br}_{n,p}^{\hat{\varepsilon}}$ for $\hat{\varepsilon}=3.8$ minimizing the  trade-off  function \eqref{eq:GL_method},  gives here a good compromise between under and over-smoothing.

\begin{figure}
\centering
\subfigure[]{\includegraphics[width=0.45 \textwidth,height=0.45\textwidth]{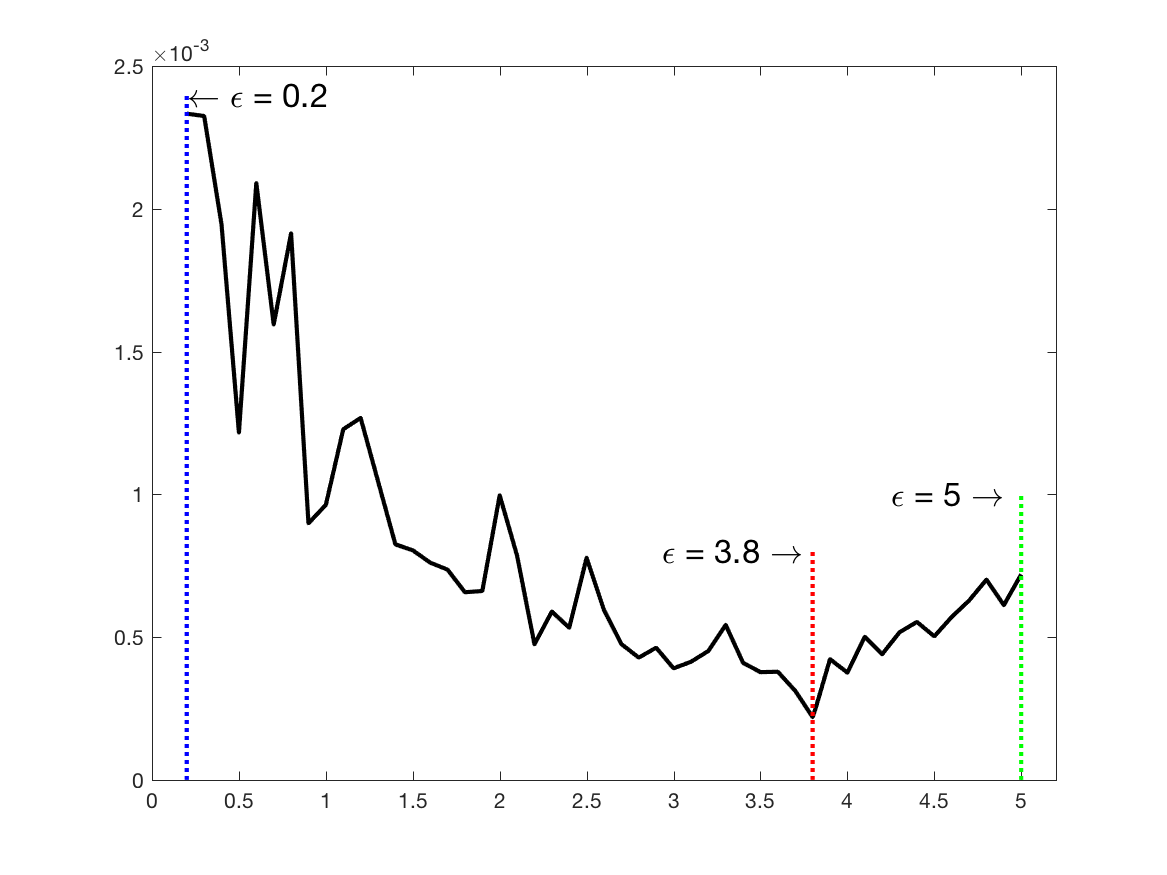}}
\subfigure[]{\includegraphics[width=0.45 \textwidth,height=0.45\textwidth]{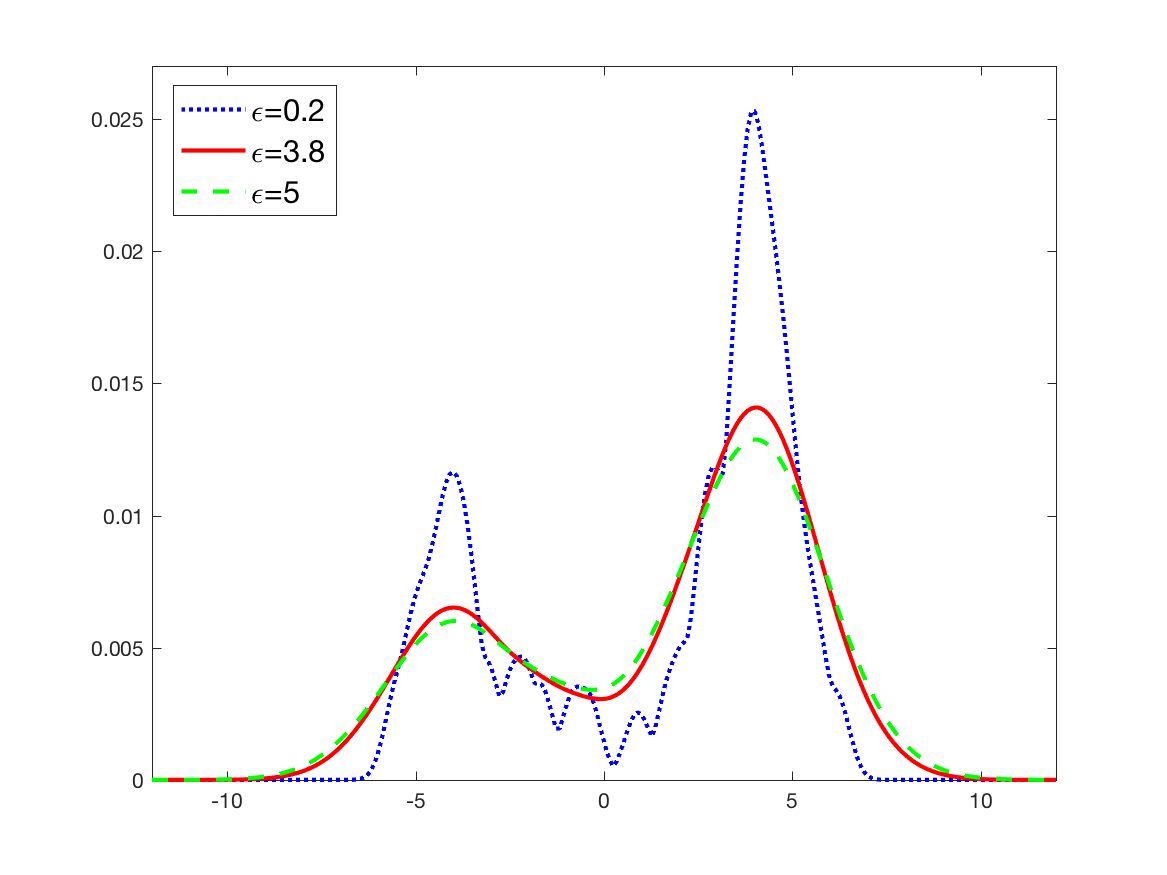}}
\caption{One dimensional Gaussian mixtures dataset and Sinkhorn barycenters. (a) The trade-off function $\varepsilon\mapsto \ B(\varepsilon)+3V(\varepsilon)$ which attains its optimum at $\hat{\varepsilon}=3.8$. (b) Three Sinkhorn barycenters $\hat{\br}_{n,p}^{\hat{\varepsilon}}$ associated to the parameters $\varepsilon=0.2, 3.8, 5$.}
\label{fig:Sinkhorn_gaussian1D}
\end{figure}

We repeat the same experiment for the penalized barycenter $\hat{\bfun}_{n,p}^{\gamma}$ of Section \ref{sec:Regbar} with the Sobolev norm $H^1(\Omega)$  in the penalization function $E$ \eqref{ex_penalty} . In particular, we remark that the size of the grid does not appear explicitly in the the upper bound of the variance function in inequality \eqref{eq:rate_Regbar}. The Monte-Carlo experiments are presented in Figure \ref{fig:monte_carlo_gaussian1D_pen} for $N=2^8$. This gives us an approximation for the variance term in Goldenshluger-Lepski.

\begin{figure}
\centering
\includegraphics[width=0.45 \textwidth,height=0.35\textwidth]{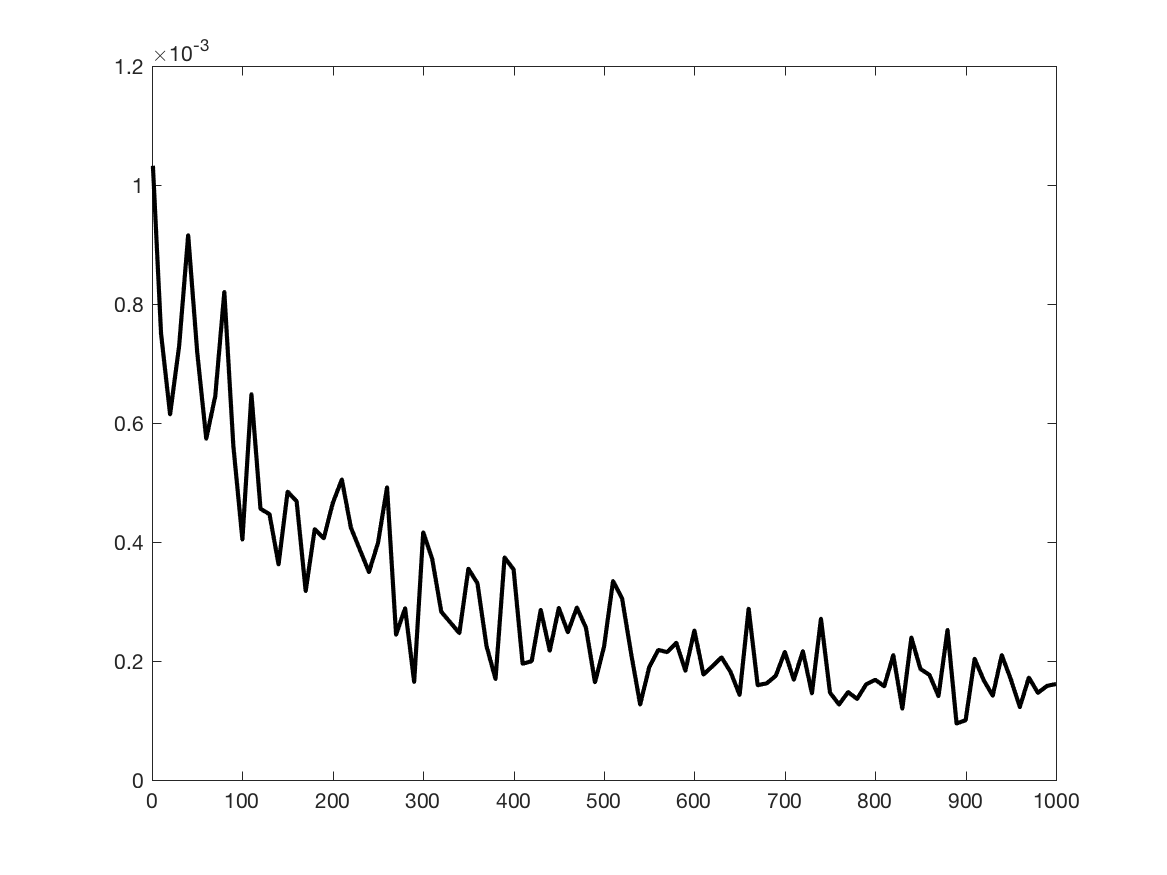}
\caption{A Monte-Carlo experiments for estimating the variance $\E\left(\Vert \hat{\bfun}_{n,p}^{\gamma}-f_{\P}^{\gamma}\Vert^2_{\mathbb{L}_2(\Omega)}\right)$.}
\label{fig:monte_carlo_gaussian1D_pen}
\end{figure}

The results of the Goldenshluger-Lepski technique are displayed in Figure \ref{fig:Regbar_gaussian1D}. The advantage of choosing a Sobolev penalty function over an entropy term is that the mass of the barycenter is overall less spread out and the spikes are  sharper. However, for a small regularization parameter $\gamma=20$ (dotted blue curve), the barycenter $\bfun_{n,p}^{\gamma}$ presents a lot of irregularities as the penalty function tries to minimize its $\L_2$-norm. When the regularization parameter increases in a significant way ($\gamma=1000$ associated to the dashed green curve), the irregularities disappear and the support of the penalized barycenter becomes wider. The GL's principle leads to the choice  $\hat{\gamma}=840$ which corresponds to a penalized barycenter (solid red curve) that is satisfactory.

\begin{figure}
\centering
\subfigure[]{\includegraphics[width=0.45 \textwidth,height=0.45\textwidth]{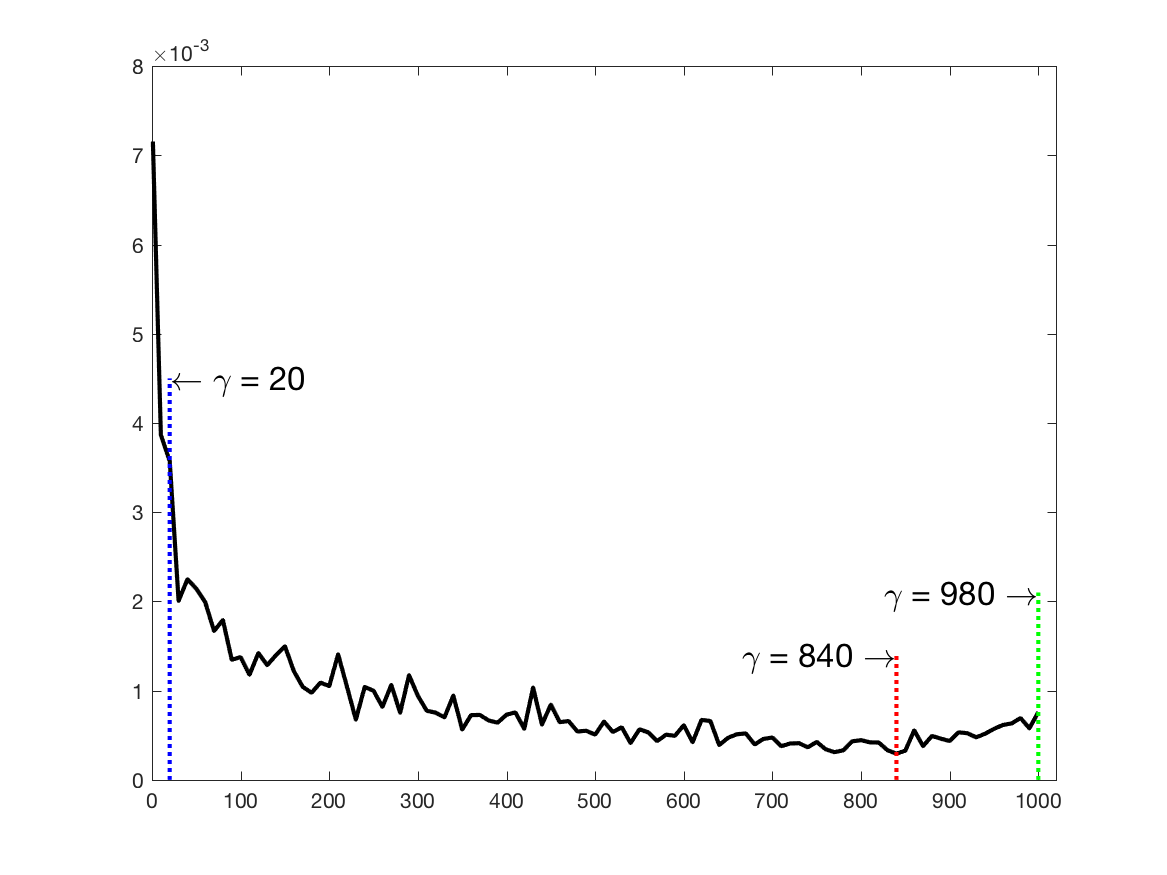}}
\subfigure[]{\includegraphics[width=0.45 \textwidth,height=0.45\textwidth]{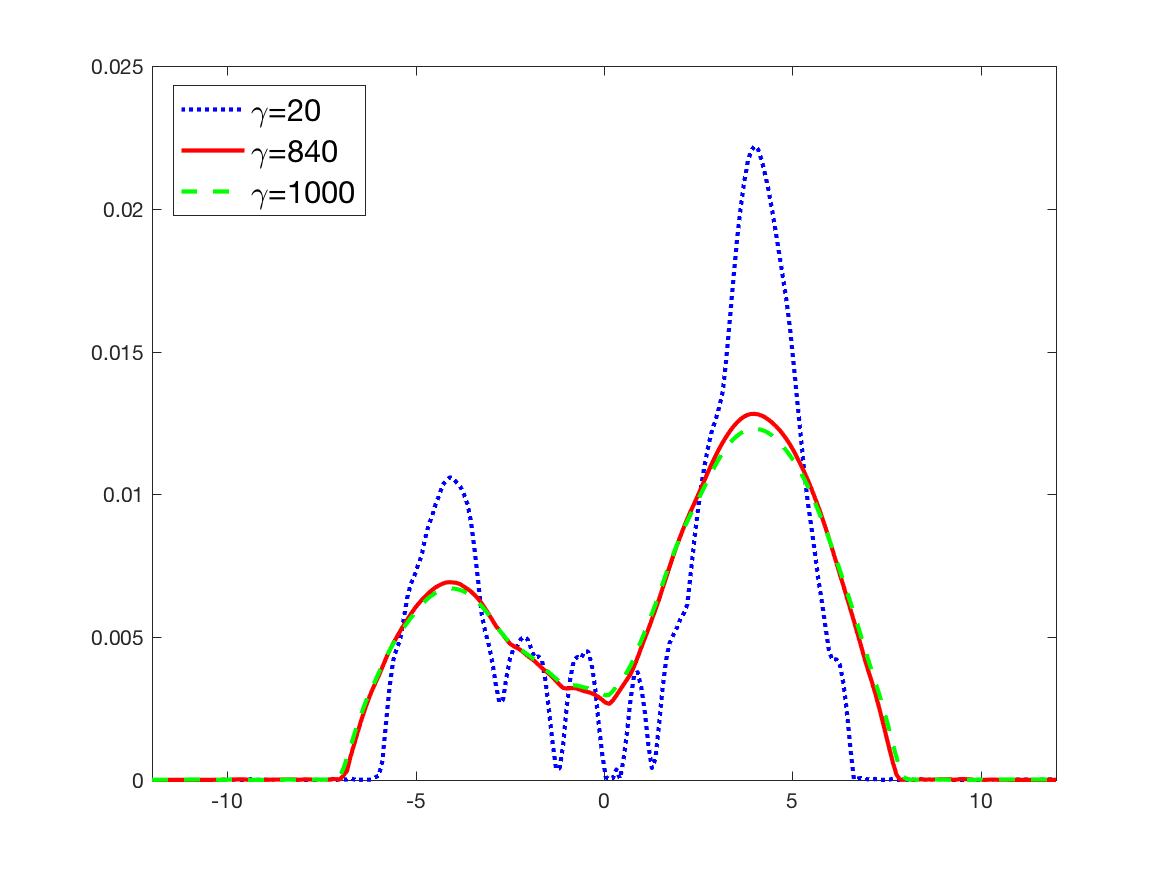}}
\caption{One dimensional Gaussian mixtures dataset and penalized barycenters. (a) The trade-off function $\gamma\mapsto \ B(\gamma)+3V(\gamma)$ which attains its optimum at $\hat{\gamma}=840$. (b) Three penalized barycenters $\bfun_{n,p}^{\gamma}$ associated to the parameters $\gamma=20, 840, 100$.}
\label{fig:Regbar_gaussian1D}
\end{figure}

We compare these Wasserstein barycenters to the  Euclidean mean $\bar{f}_{n,p}$  \eqref{eq:Eucli_mean},  obtained after  a  pre-smoothing step of the data for each subject using the kernel method. From Figure \ref{fig:mean_gaussian1D}, the density  $\bar{f}_{n,p}$  is very irregular and it suffers from mis-alignment issues. The irregularity of this estimator mainly comes from the low sample size per subject ($p=50$).

\begin{figure}
\centering
\includegraphics[width=0.45 \textwidth,height=0.45\textwidth]{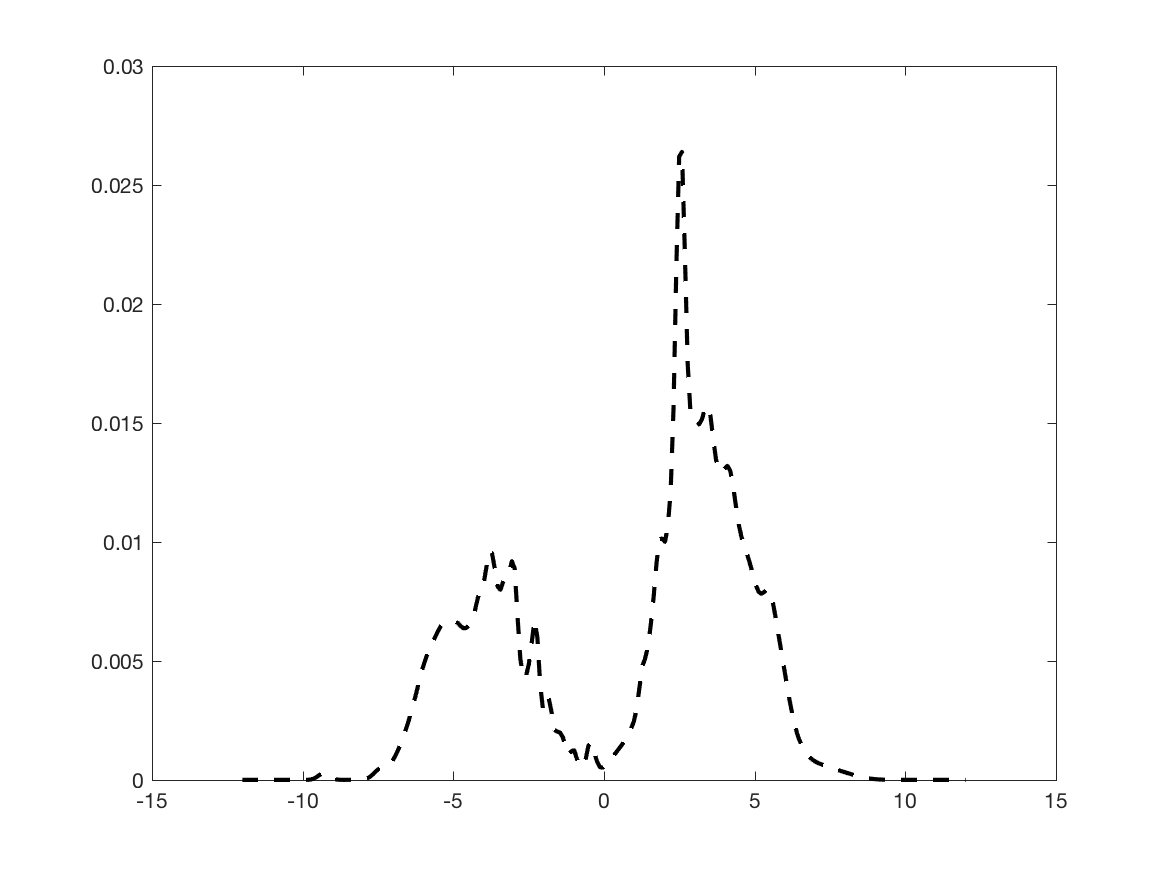}
\caption{ Euclidean mean density $\bar{f}_{n,p}$ of the one dimensional Gaussian mixtures dataset using a preliminary smoothing step of each subject with a Gaussian kernel.}
\label{fig:mean_gaussian1D}
\end{figure}

\subsection{Sinkhorn versus penalized  barycenters}

To conclude these numerical experiments with one-dimensional simulated data, we would like to point out that computing the Sinkhorn barycenter is much faster than computing the penalized barycenter. Indeed, entropy regularization of the transport plan in the Wasserstein distance has been first introduced in order to reduce the computational cost of a transport distance. The computation of an unregularized transport distance requires $\mathcal{O}(N^3\log N)$ operations for discrete probability measures with a support of size $N$ when the computation of a Sinkhorn divergence only takes $\mathcal{O}(N^2)$ operations at each iteration of a gradient descent (see e.g.\ \cite{cuturi2013sinkhorn}). We have also found that the Sinkhorn barycenter yields more satisfying estimators in terms of smoothness. Therefore, in the rest of this section, we do not consider the penalized barycenter anymore.

\subsection{Simulated data: two-dimensional Gaussian mixtures}
\label{sec:simu_2Dgaussian}
In this section, we illustrate  our methods for two-dimensional data. We consider a simulated example of observations $(X_{i,j})_{1\leq i\leq n\ ; 1\leq j\leq p}$  sampled from  $n=15$ random distributions $\bnu_i$ that are a mixture of three multivariate Gaussian distributions $\bnu_i =\sum_{j=1}^3\theta_j\mathcal{N}(\bm_j^i,\bGamma_j^i)$ with fixed weights $\theta=(1/6,1/3,1/2)$. The means $\bm_j^i$ and covariance matrices $\bGamma_j^i$ are random variables with expectation given by (for $j=1,2,3$)
$$m_1=\left(\begin{array}{c} 0 \\ 0 \end{array}\right), \quad m_2=\left(\begin{array}{c} 7 \\ 4 \end{array}\right), \quad m_3=\left(\begin{array}{c} 1 \\ 9 \end{array}\right), \quad \mbox{and} \quad \Gamma_{1}  = \Gamma_{2} = \Gamma_{3} =\left(\begin{array}{cc} 1 & 0\\ 0 & 1 \end{array}\right).$$
The covariances $\Gamma_i$ are chosen diagonal to ensure that the perturbed random covariances that form our dataset are positive semi-definite matrices to properly define the Gaussian distributions associated. 
For each $i=1,\ldots,n$, we simulate a sequence  $(X_{i,j})_{1\leq j\leq p}$ of $p=50$ iid random variables sampled from $\bnu_i  =\sum_{j=1}^3\theta_j\mathcal{N}(\bm_j^i,\bGamma_j^i)$ where $\bm_j^i$ (resp. $\bGamma_j^i$) are random vectors (resp. matrices) such that each of their coordinate follows a uniform law centered in $m_j$ with amplitude $\pm 2$ (resp. each of their diagonal elements follows a uniform law centered in $\Gamma_j$ with amplitude $\pm 0.95$). We display in Figure \ref{fig:ex_gaussian2D} the dataset $(X_{i,j})_{1\leq j\leq p\ ; 1\leq i\leq n}$. Each $X_{i,j}$ is then binned on a grid of size $64\times 64$ (thus $N = 4096$).

\begin{figure}
\centering
\includegraphics[width=0.95 \textwidth,height=0.55\textwidth]{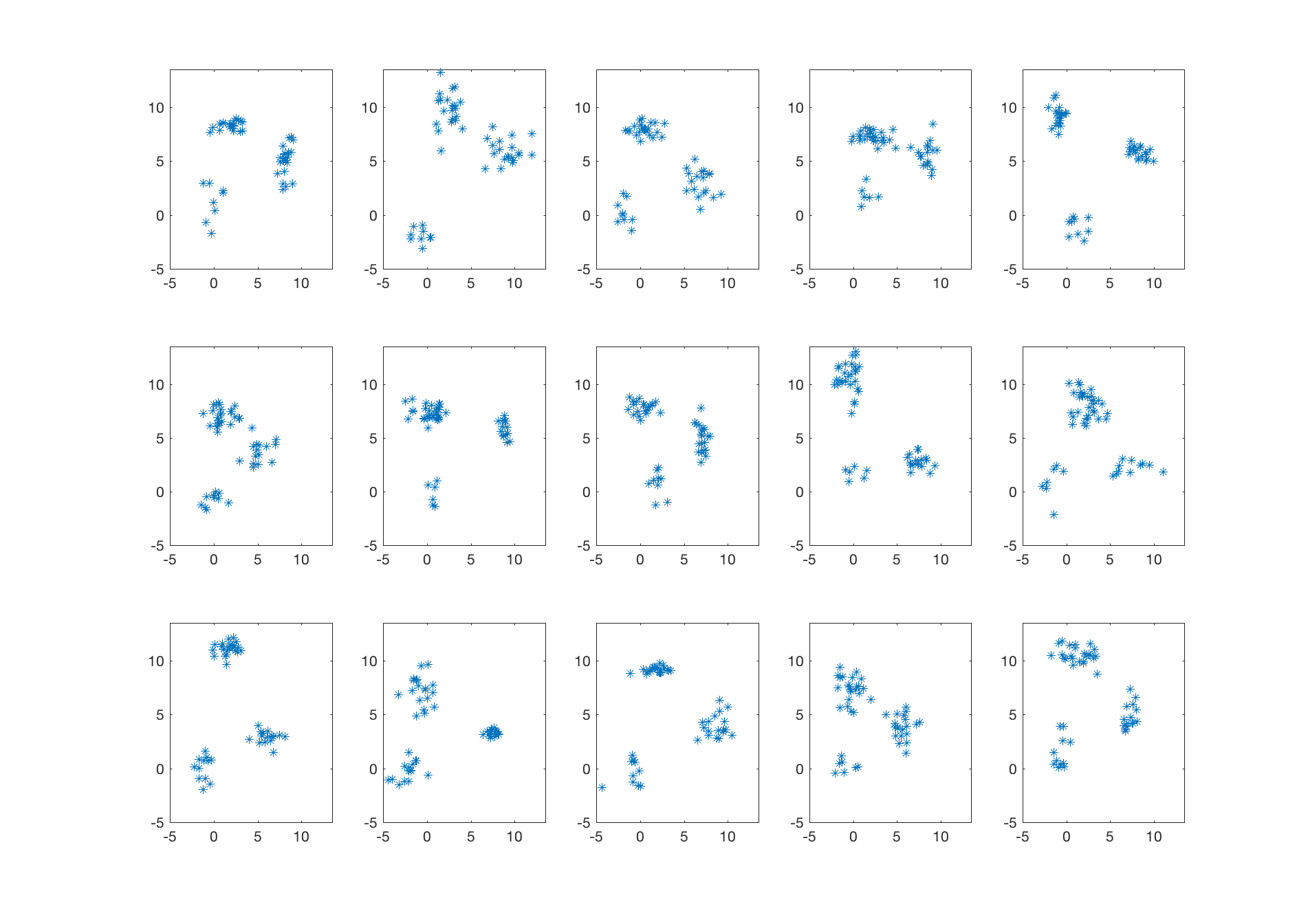} 
\caption{Dataset $(X_{i,j})_{1\leq j\leq p\ ; 1\leq i\leq n}$ generated from $n=15$ two-dimensional random Gaussian mixtures $\bnu_i$ with $p=50$.}
\label{fig:ex_gaussian2D}
\end{figure}

First, we compute $60$ Sinkhorn barycenters by letting $\varepsilon$ ranging from $0.1$ to $6$. We draw in Figure \ref{fig:Sinkhorn_gaussian2D}(a) the trade-off function that is also based on $10$ Monte-Carlo experiments to approximate the term $V$, with its minimizer at $\hat{\varepsilon}=3$. The corresponding Sinkhorn barycenter $\hat{\br}_{n,p}^{\hat{\varepsilon}}$  is displayed in Figure \ref{fig:Sinkhorn_gaussian2D}(b). We also present the Euclidean mean $\bar{f}_{n,p}$ (after a preliminary smoothing step) in Figure \ref{fig:mean_gaussian2D}(b). The Sinkhorn barycenter has three distinct modes. Hence, this approach handles in a very efficient way the scaling and translation variations in the dataset (which corresponds to the correction of the mis-alignment issue).  On the other hand, the Euclidean mean mixes the distinct modes of the Gaussian mixtures. It is thus less robust to outliers since the support of the barycenter is significantly spread out.

\begin{figure}
\centering
\subfigure[]{\includegraphics[width=0.45 \textwidth,height=0.45\textwidth]{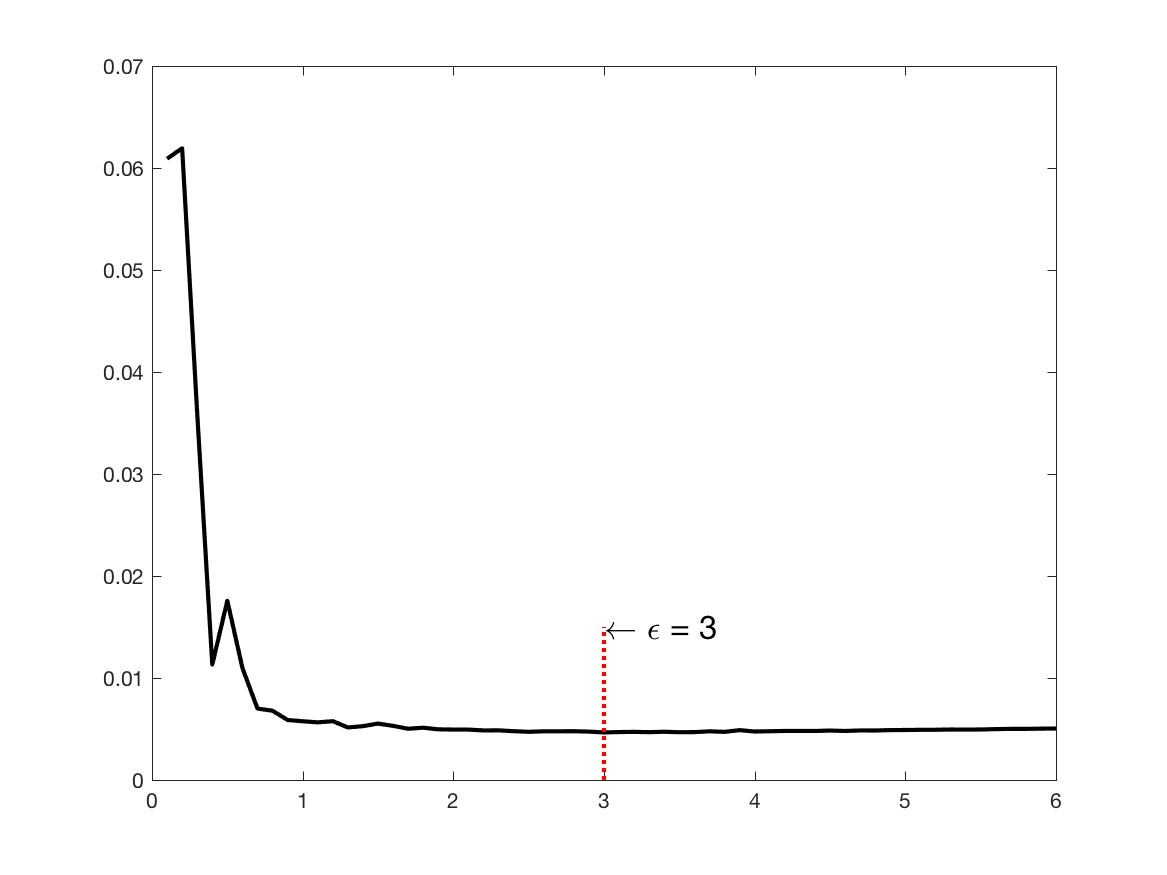}}
\subfigure[]{\includegraphics[width=0.45 \textwidth,height=0.45\textwidth]{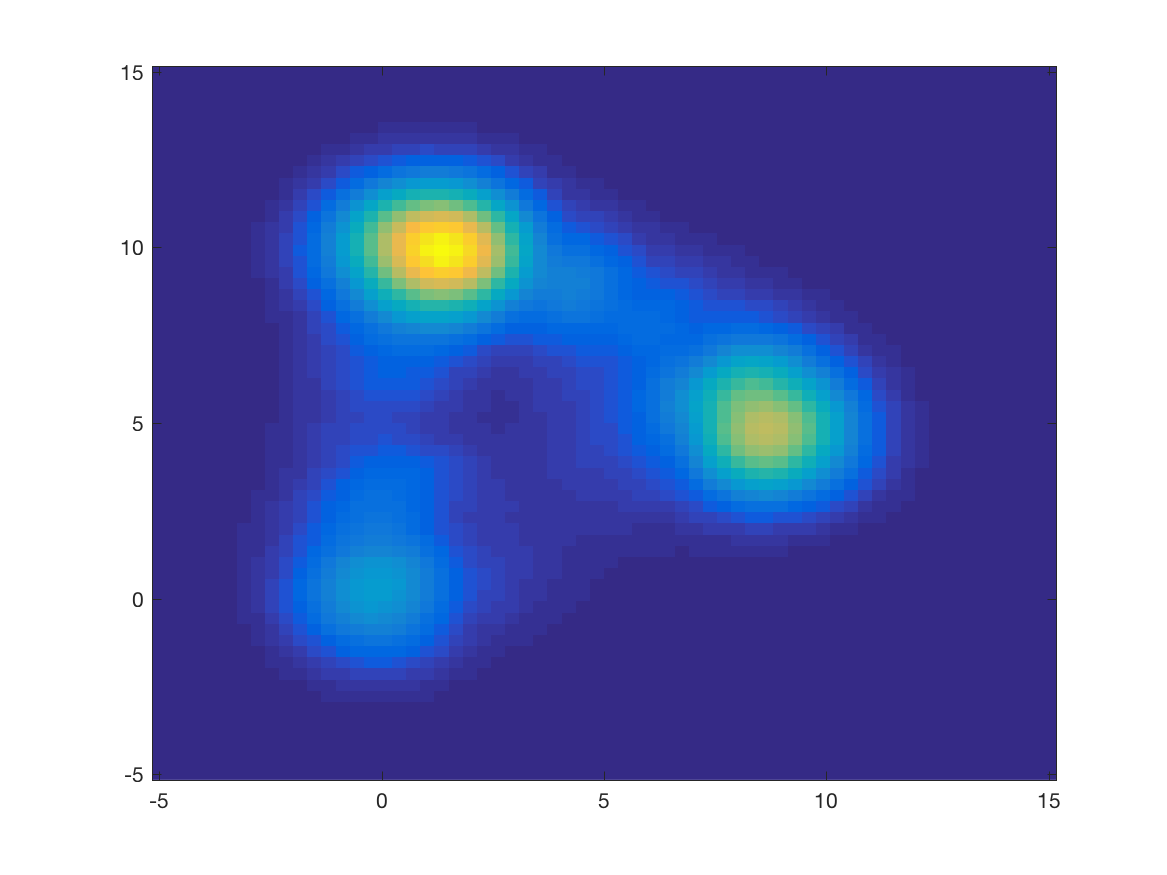}}
\caption{Two-dimensional Gaussian mixtures dataset. (a) The trade-off function $\varepsilon\mapsto \ B(\varepsilon)+3V(\varepsilon)$ which attains its optimum at $\hat{\varepsilon}=3$. (b) The Sinkhorn barycenter $\hat{\br}_{n,p}^{\hat{\varepsilon}}$ for $\hat{\varepsilon}=3$ chosen by the GL's principle.}
\label{fig:Sinkhorn_gaussian2D}
\end{figure}

\begin{figure}
\centering
\includegraphics[width=0.45 \textwidth,height=0.45\textwidth]{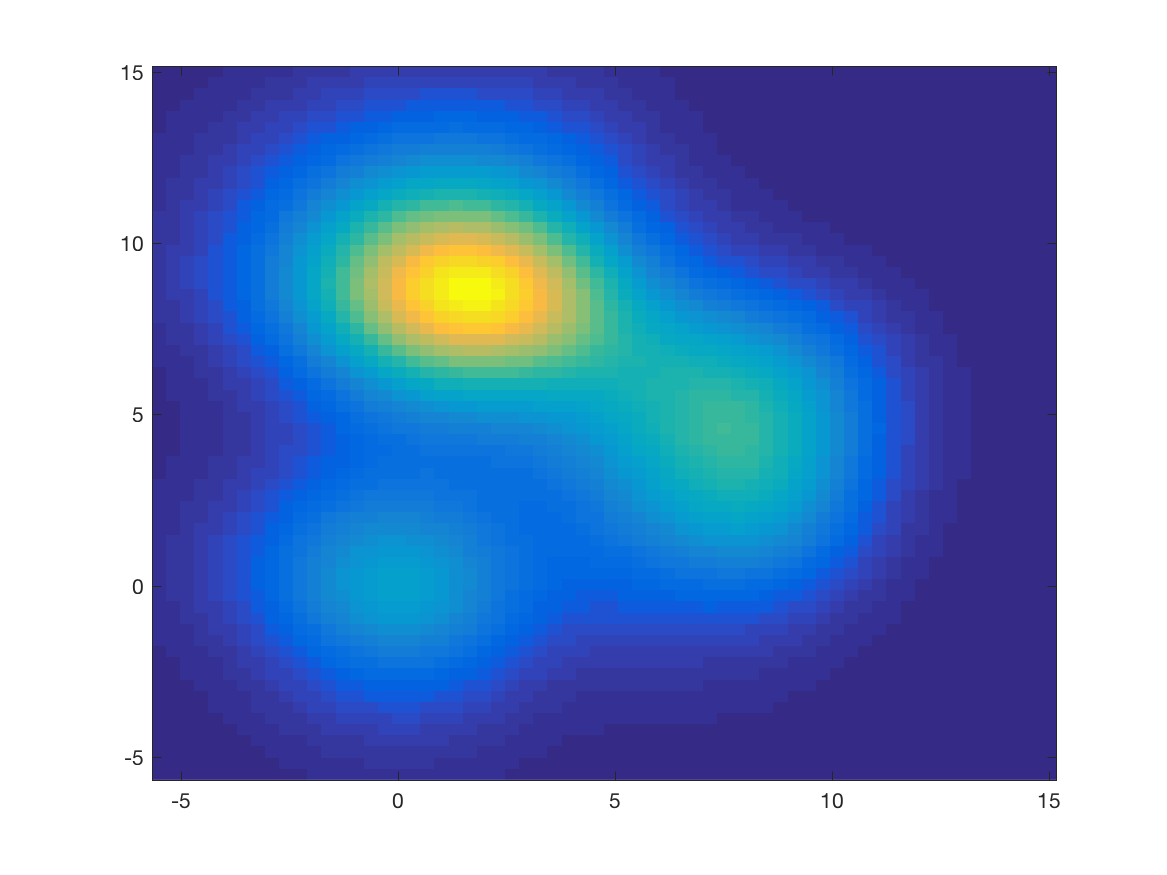}
\caption{Two-dimensional Gaussian mixtures dataset. The Euclidean mean $\bar{f}_{n,p}$ (after a preliminary smoothing step).}
\label{fig:mean_gaussian2D}
\end{figure}

\subsection{Real data: flow cytometry}

We have at our disposal data from flow cytometry that have been described in Section \ref{sec:flow}, and we focus on the FSC and SSC cell markers resulting in the dataset that is displayed  in Figure \ref{fig:ex_cytometry2D}. We again apply a binning of the data on a two-dimensional grid of size $N = 64 \times 64$. In Figure \ref{fig:Sinkhorn_cytometry2D}(a) we plot the trade-off function related to the Sinkhorn barycenters. To that end, we use the results on the Monte-Carlo estimation of the variance in the gaussian case (see the previous Section \ref{sec:simu_2Dgaussian}).
Indeed, as the upper-bound of the variance in \eqref{eq:Sinkhorn_rate} depends explicitly on parameters of the problem, we can compare the parameters from the 2D gaussian case to the parameters of the real cytometry data case. Let us first remark that in both experiments, the size of the grid is chosen as $N=64^2$, the number of distributions is equal to $n=15$, the minimum number of observations per measure is approximately $p\sim 50$, and the collection of parameters $\varepsilon=[0.1,6]$ that we test is the same. However, the grid differs in these two experiments. For the gaussian case, the grid is taken as the square $[-5,15]^2$;  in the cytometry case, the measurements of the cells are within the rectangle $[200,600]\times [0,250]$. Therefore, in the Lipschitz constant term, the quantity $$(\Delta_m)_m := (\adjustlimits\inf_{1 \leq k \leq N} \sup_{1 \leq \ell\leq N} |C_{m\ell} - C_{k \ell}|)_m,$$ which is a vector of length $64$, involving the cost function will differ. For the gaussian case grid we have $\max_m \Delta_m=139.6825$, for the cytometry case grid we have $\max_m \Delta_m=37587$. For the reason, we thus choose to scale the variance term in the GL method by $139.6825/37 587 \simeq 1/269$.
The regularization parameter $\varepsilon$ still ranges from $1$ to $6$. The minimum of the trade-off function is reached  for $\hat{\varepsilon}=2.6$. We  display the corresponding Sinkhorn barycenter  in Figure \ref{fig:Sinkhorn_cytometry2D}(b). This barycenter clearly corrects mis-alignment issues of the data.

\begin{figure}
\centering
\subfigure[]{\includegraphics[width=0.45 \textwidth,height=0.45\textwidth]{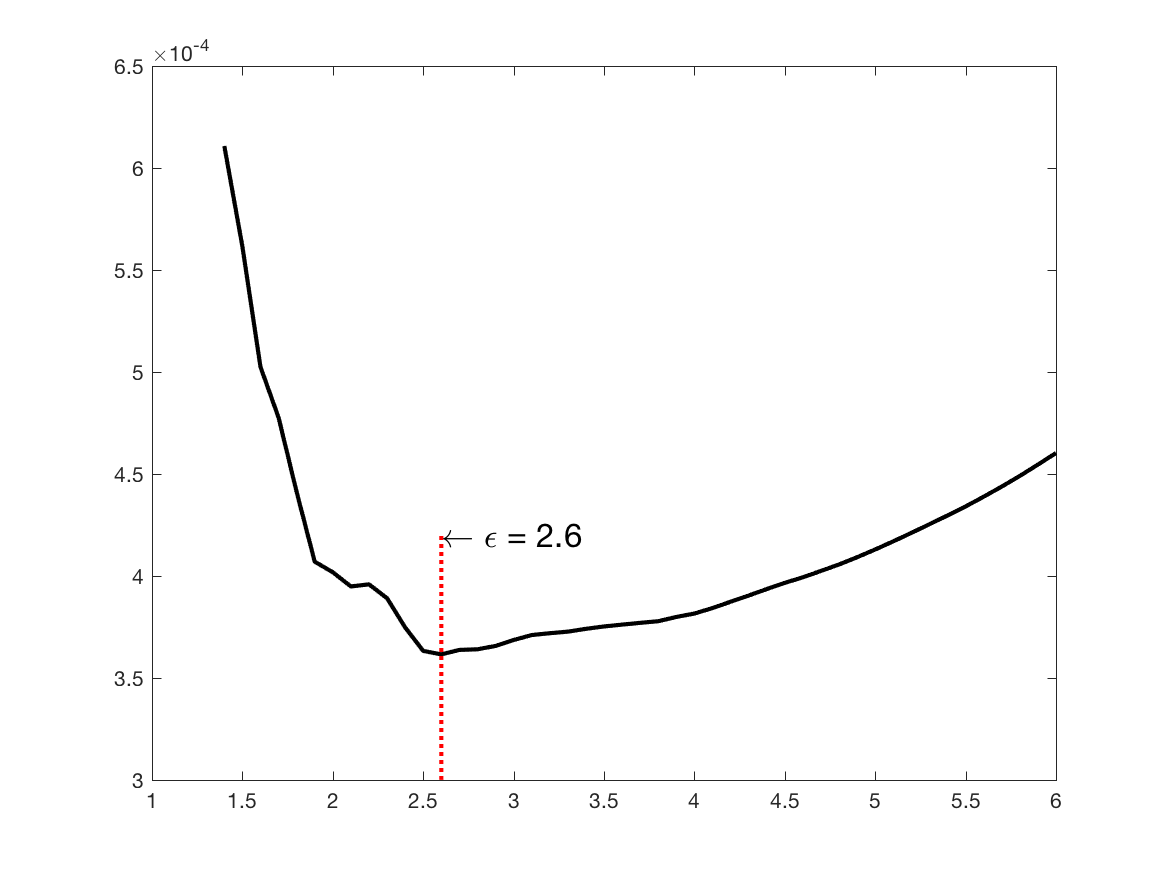}}
\subfigure[]{\includegraphics[width=0.45 \textwidth,height=0.45\textwidth]{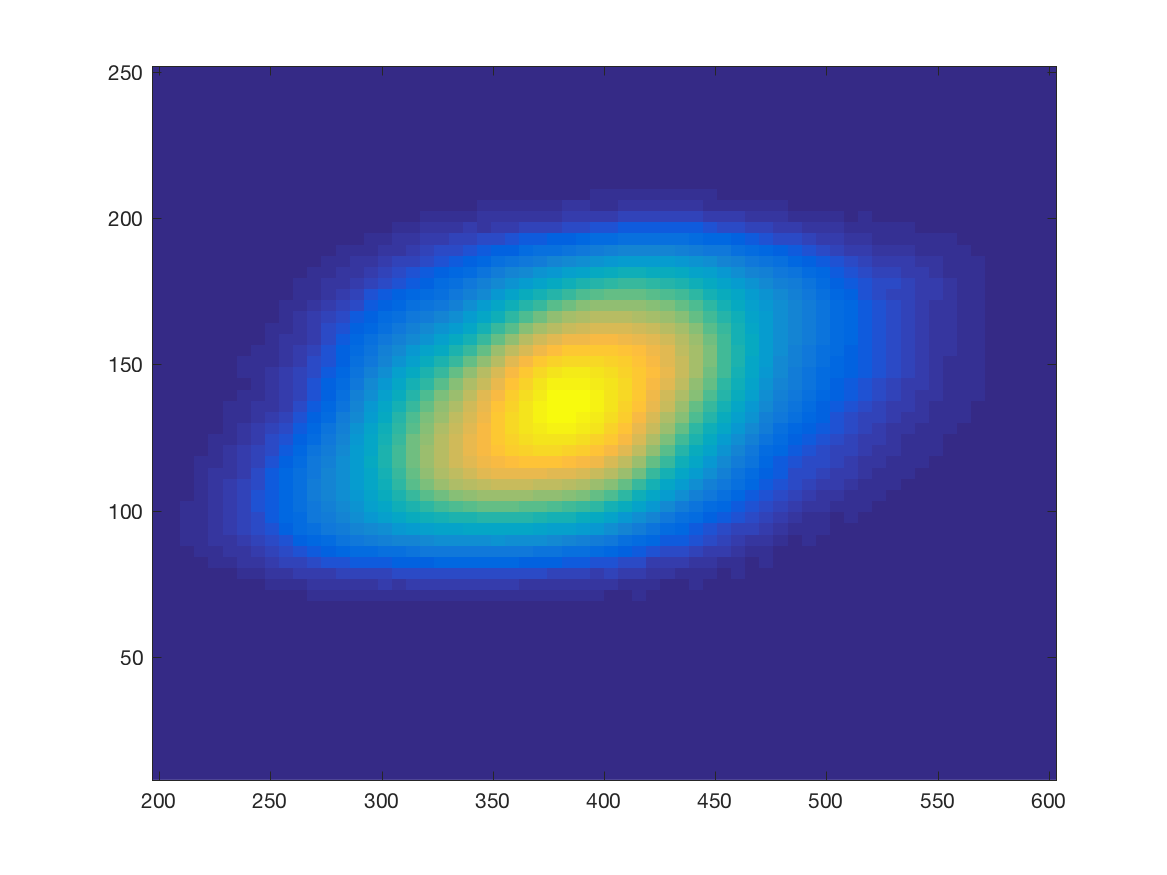}}
\caption{Two dimensional flow cytometry dataset and Sinkhorn barycenter. (a) The trade-off function $\varepsilon\mapsto \ B(\varepsilon)+3V(\varepsilon)$ which attains its optimum at $\hat{\varepsilon}=2.6$. (b) Sinkhorn barycenter $\hat{\br}_{n,p}^{\hat{\varepsilon}}$ associated to the parameter $\hat{\varepsilon}=2.6$.}
\label{fig:Sinkhorn_cytometry2D}
\end{figure}

To analyze the relevance of this result, we present in Figure \ref{fig:mean_cytometry2D}(a) the Euclidean mean $\bar{f}_{n,p}$ of this dataset (after kernel smoothing of the data for each patient). The support of this estimator is again  spread out due to the presence of a strong translation variance  in the data which clearly need to be registered. We also compare our method to the more relevant one proposed in  \cite{hahne2010per} which consists in approximating each of the $15$ subjects with a two dimensional kernel density estimate (with an automatic choice of the bandwidth parameter). The densities obtained are then projected onto a one dimensional space.  Landmarks are estimated by identifying peaks of the resulting one-dimensional densities. Then, these landmarks are registered across the whole dataset in order to finally align the densities. The $\mathbb{L}_2$-mean density obtained after this pre-processing step is displayed in Figure \ref{fig:mean_cytometry2D}(b). This method leads to results that are  similar to the one obtained with a data-driven Sinkhorn barycenter. However, contrary to regularized Wasserstein barycenters that can handle automatically non registered multi-modal densities,  the  procedure  in \cite{hahne2010per} suffers from two main drawbacks: (i) the registration of densities is performed by one-dimensional projections, (ii) the number of peaks to align is chosen manually.
\begin{figure}
\centering
\subfigure[]{\includegraphics[width=0.45 \textwidth,height=0.45\textwidth]{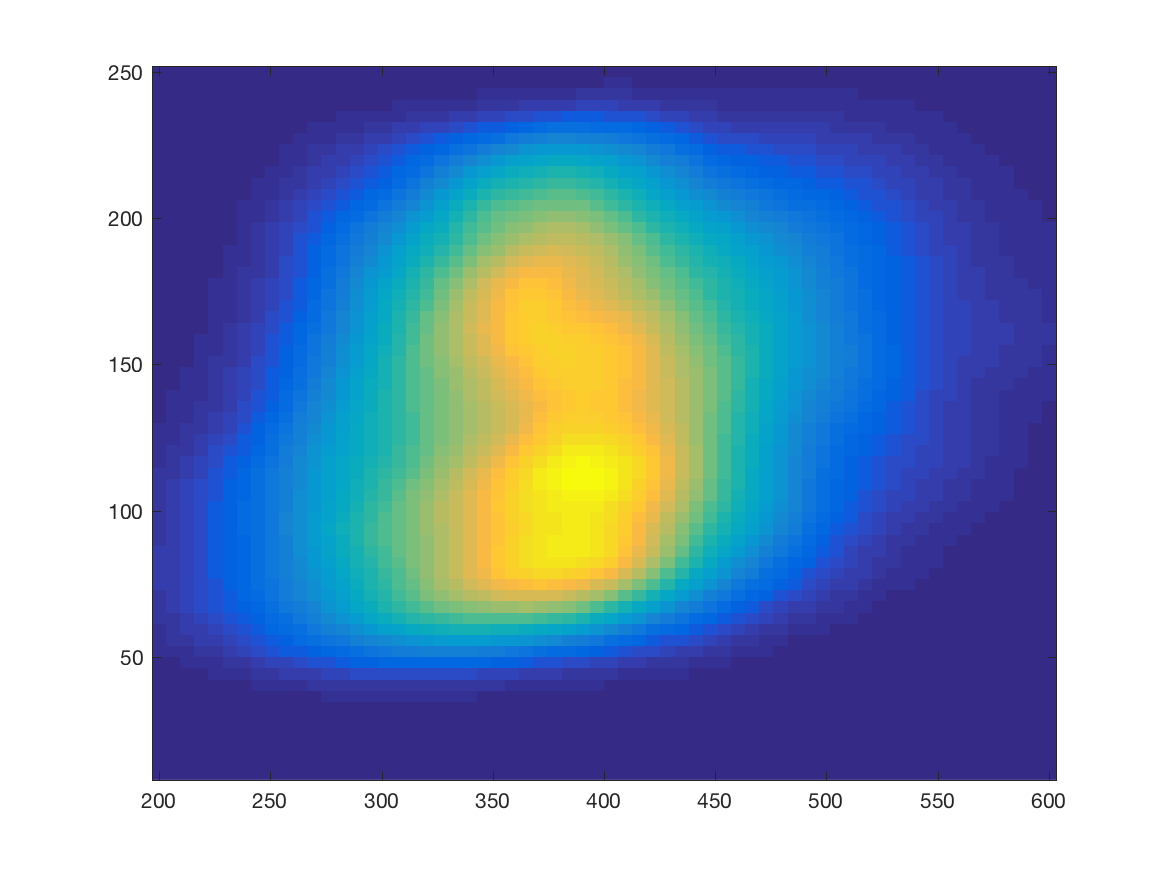}}
\subfigure[]{\includegraphics[width=0.45 \textwidth,height=0.45\textwidth]{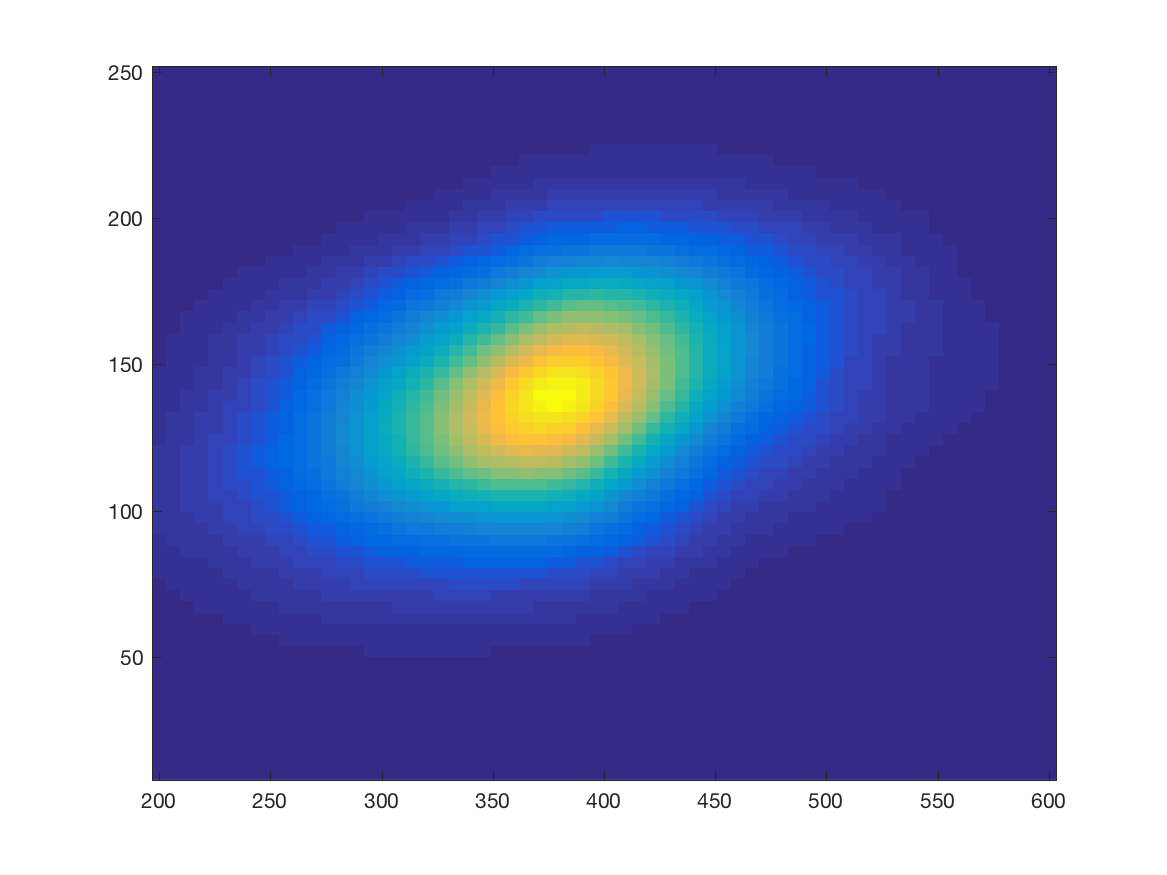}}
\caption{Two dimensional flow cytometry dataset. (a) Euclidean mean $\bar{f}_{n,p}$ of the data (after smoothing but without registration), (b)  $\mathbb{L}_2$-mean  of pre-processed data  using kernel smoothing and density registration by landmark alignment with the method in \cite{hahne2010per}.}
\label{fig:mean_cytometry2D}
\end{figure}
Notice that we have also conducted experiments for Sinkhorn barycenters with  non-equal weights,  corresponding to the proportion of measurements for each  patient. The result being analogous, we do not report them.

\section{Conclusion and perspectives} \label{sec:conclusion}

It would be interesting to derive more accurate upper bounds of the variance term for the Sinkhorn barycenter that would be of a magnitude of the same order than the one found in numerical experiments using Monte Carlo simulations. An additional difficulty is the possibility of using Sinkhorn barycenters with data-driven choice of the regularization parameter for the registration of multiple point clouds beyond the dimension $d \geq 3$, e.g.\ for applications in flow cytometry where the dimension $d$ can be 30 or 40. This is clearly a challenging task.

\paragraph{Acknowlegements.}  This work has been carried out with financial support from the French
State, managed by the French National Research Agency (ANR) in the frame
of the GOTMI project (ANR-16-CE33-0010-01).

\appendix

\section{Strong convexity of the Sinkhorn divergence - Proof of Theorem  \ref{th:strong_convexity_H}} \label{sec:strong_convexity_H}

The proof of Theorem \ref{th:strong_convexity_H} relies on the analysis of the eigenvalues of the Hessian matrix $\nabla^2H_q^{\ast}(g)$ of the functional $H_q^{\ast}$. 

\begin{proposition}
\label{prop:eigenvector_vn}
For all $g\in\R^N$, $\nabla^2H_q^{\ast}(g)$ admits $\lambda_N=0$ as eigenvalue with its associated normalized eigenvector $v_N:=\frac{1}{\sqrt{N}}\mathds{1}_N\in\R^N$, which means that $\mbox{rank}(\nabla^2H_q^{\ast}(g))\leq N-1$ for all $g\in\R^N$ and $q\in\Sigma_N$.
\end{proposition}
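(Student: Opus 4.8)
The plan is to verify directly that the constant vector lies in the kernel of the Hessian, which immediately yields both claims of the proposition. I would start from the closed-form expression for $\nabla^2 H_q^{\ast}(g)$ furnished by Theorem \ref{th:diffHq*}, write $\alpha = \exp(g/\varepsilon)$ and $K = \exp(-C/\varepsilon)$ as there, and abbreviate $u := \nabla H_q^{\ast}(g) = \diag(\alpha)K\frac{q}{K\alpha}$. Recall that by the same theorem $u \in \Sigma_N$, and that the first summand of the Hessian is exactly $\frac{1}{\varepsilon}\diag(u)$. The goal is then to show that $\nabla^2 H_q^{\ast}(g)\,\mathds{1}_N = 0$.

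The computation splits according to the two summands of the Hessian. The diagonal term gives $\frac{1}{\varepsilon}\diag(u)\mathds{1}_N = \frac{1}{\varepsilon}u$ at once. For the second term, the key observation is that applying it to $\mathds{1}_N$ reproduces $u$ through a telescoping of the diagonal factors:
\begin{align*}
\diag(\alpha)K\diag\!\left(\tfrac{q}{(K\alpha)^2}\right)K\diag(\alpha)\,\mathds{1}_N
&= \diag(\alpha)K\diag\!\left(\tfrac{q}{(K\alpha)^2}\right)(K\alpha) \\
&= \diag(\alpha)K\,\tfrac{q}{K\alpha} \;=\; u,
\end{align*}
where the middle equality uses that the component-wise product of $\frac{q}{(K\alpha)^2}$ with $K\alpha$ equals $\frac{q}{K\alpha}$. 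Hence the second summand contributes $-\frac{1}{\varepsilon}u$, and the two contributions cancel, giving $\nabla^2 H_q^{\ast}(g)\,\mathds{1}_N = 0$.

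From here the conclusion is immediate: $\mathds{1}_N$, and therefore its normalization $v_N = \frac{1}{\sqrt N}\mathds{1}_N$, is an eigenvector of $\nabla^2 H_q^{\ast}(g)$ for the eigenvalue $\lambda_N = 0$. Since the kernel contains a nonzero vector, $\nabla^2 H_q^{\ast}(g)$ cannot have full rank, so $\mathrm{rank}(\nabla^2 H_q^{\ast}(g)) \leq N-1$ for every $g \in \R^N$ and every $q \in \Sigma_N$.

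I do not expect a genuine obstacle here; the only care required is in tracking the component-wise division notation $\frac{q}{r}$ through the diagonal-matrix products and recognizing that the rightmost factor $K\alpha$ cancels one power of $(K\alpha)$ in the denominator. It is worth noting that no symmetry of $C$ (or $K$) is needed for this direction, and that the result admits a transparent conceptual explanation that serves as a sanity check: since $\langle q,\mathds{1}_N\rangle = 1$, the formula $H_q^{\ast}(g) = \varepsilon(E(q)+\langle q,\log(K\alpha)\rangle)$ shows that $c \mapsto H_q^{\ast}(g + c\,\mathds{1}_N)$ is affine, so the second directional derivative of $H_q^{\ast}$ along $\mathds{1}_N$ vanishes identically.
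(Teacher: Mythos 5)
Your proof is correct and follows essentially the same route as the paper: a direct computation showing $\nabla^2 H_q^{\ast}(g)\,\mathds{1}_N = 0$ via the cancellation $\diag\!\left(\frac{q}{(K\alpha)^2}\right)(K\alpha) = \frac{q}{K\alpha}$, which is precisely the paper's argument (applied there to $v_N$, equivalent by linearity). Your closing observation that $c \mapsto H_q^{\ast}(g + c\,\mathds{1}_N)$ is affine because $\langle q,\mathds{1}_N\rangle = 1$ is a nice conceptual sanity check not made explicit in the paper, but it does not change the substance of the proof.
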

\begin{proof}
Let $g\in\R^N$, then by Theorem \ref{th:diffHq*}
\begin{align*}
\nabla^2H_q^{\ast}(g)v_N &=\frac{1}{\varepsilon}\diag(\alpha)K\frac{q}{K\alpha}-\frac{1}{\varepsilon}\diag(\alpha)K\diag\left(\frac{q}{(K\alpha)^2}\right)K\alpha\\
&=\frac{1}{\varepsilon}\diag(\alpha)K\frac{q}{K\alpha}-\frac{1}{\varepsilon}\diag(\alpha)K\frac{q}{K\alpha}=0,
\end{align*}
and $\lambda_N=0$ is an eigenvalue of $\nabla^2H_q^{\ast}(g)$.
\end{proof}

Let $(v_k)_{1\leq k\leq N}$ be the eigenvectors of $\nabla^2H_q^{\ast}(g)$, depending on both $q$ and $g$, with their respective eigenvalues $(\lambda_k)_{1\leq k\leq N}$. As the Hessian matrix is symmetric and diagonalisable, 
let us now prove that the eigenvalues associated to the eigenvectors $(v_k)_{1\leq k\leq N-1}$ of $\nabla^2 H_q^{\ast}(g)$ are all positive.
\begin{proposition}
\label{prop:eigenvalues}
For all $q\in\Sigma_N$ and $g\in\R^N$, we have that
$$0=\lambda_N<\lambda_{k} \qquad \mbox{for all} \quad 1\leq k \leq N-1.$$
\end{proposition}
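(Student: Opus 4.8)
The plan is to establish positivity by analyzing the quadratic form $z^\top \nabla^2 H_q^{\ast}(g) z$ directly, rather than by diagonalizing. Since Proposition \ref{prop:eigenvector_vn} already provides the eigenvalue $\lambda_N = 0$ with eigenvector $v_N \propto \mathds{1}_N$, and since $\nabla^2 H_q^{\ast}(g)$ is symmetric (hence orthogonally diagonalizable with real eigenvalues), it suffices to prove that $z^\top \nabla^2 H_q^{\ast}(g) z \geq 0$ for every $z \in \R^N$, with equality only when $z$ is a scalar multiple of $\mathds{1}_N$. This is equivalent to saying that the kernel of the Hessian is exactly one-dimensional, which forces all the remaining eigenvalues to be strictly positive.

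First I would rewrite the Hessian from Theorem \ref{th:diffHq*} in terms of the entropic transport plan. Writing $\alpha = \exp(g/\varepsilon)$ and $K = \exp(-C/\varepsilon)$, I set
$$P_{m\ell} = \frac{\alpha_m K_{m\ell}\, q_\ell}{(K\alpha)_\ell}, \qquad 1 \leq m,\ell \leq N.$$
Because $C$ is symmetric, $K$ is symmetric, so the row sums of $P$ equal $r_m := \big(\diag(\alpha) K \tfrac{q}{K\alpha}\big)_m = \big(\nabla H_q^{\ast}(g)\big)_m$ while the column sums satisfy $\sum_m P_{m\ell} = q_\ell$. A componentwise computation of the product term then gives
$$\Big(\diag(\alpha) K \diag\big(\tfrac{q}{(K\alpha)^2}\big) K \diag(\alpha)\Big)_{mm'} = \sum_{\ell=1}^N \frac{P_{m\ell} P_{m'\ell}}{q_\ell},$$
so that, for any $z \in \R^N$,
$$\varepsilon\, z^\top \nabla^2 H_q^{\ast}(g) z = \sum_{m=1}^N r_m z_m^2 - \sum_{\ell=1}^N \frac{1}{q_\ell}\Big(\sum_{m=1}^N z_m P_{m\ell}\Big)^2 = \sum_{\ell=1}^N \Big[\sum_{m=1}^N z_m^2 P_{m\ell} - \frac{1}{q_\ell}\Big(\sum_{m=1}^N z_m P_{m\ell}\Big)^2\Big].$$

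For each $\ell$ with $q_\ell > 0$, the weights $P_{m\ell}/q_\ell$ form a probability distribution on $\{1,\dots,N\}$, and the $\ell$-th bracket is exactly $q_\ell$ times its variance of $z$; by the Cauchy--Schwarz (Jensen) inequality every bracket is non-negative, which re-derives $z^\top \nabla^2 H_q^{\ast}(g) z \geq 0$ and hence $\lambda_k \geq 0$ for all $k$, consistent with $\lambda_N = 0$. The decisive and most delicate step is strictness: equality in the variance bound for a given $\ell$ forces $z$ to be constant on the support $\{m : P_{m\ell} > 0\}$. Here the key observation is that the entries of $K = \exp(-C/\varepsilon)$ are all strictly positive (and finite, since $\Omega$ is bounded), so $P_{m\ell} > 0$ for every $m$ as soon as $q_\ell > 0$, i.e. the support is the whole index set. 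As $q \in \Sigma_N$ has at least one positive coordinate, vanishing of the quadratic form forces $z$ to be constant, $z \in \mathrm{span}(\mathds{1}_N)$. Therefore the kernel of $\nabla^2 H_q^{\ast}(g)$ is exactly one-dimensional, and together with Proposition \ref{prop:eigenvector_vn} and symmetry this yields $0 = \lambda_N < \lambda_k$ for all $1 \leq k \leq N-1$. I expect the only laborious point to be the algebraic identity for the product term $\diag(\alpha) K \diag(q/(K\alpha)^2) K \diag(\alpha)$, which must be verified entry by entry; everything afterwards reduces to the strict positivity of $K$ and the one-line convexity argument.
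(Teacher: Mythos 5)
Your proof is correct, and it takes a genuinely different route from the paper's. The paper establishes strict positivity by a discrete maximum-principle argument applied directly to an eigenvector: for an eigenvector $v$ orthogonal to $v_N$ (hence non-constant), it evaluates the Hessian--vector product at an index $j$ where $v^{(j)}$ is maximal with $v^{(j)}>v^{(i)}$ for some $i$, and shows $[\nabla^2H_q^{\ast}(g)v]_j>0$ by replacing each $v^{(i)}$ by $v^{(j)}$ (strictly somewhere), using the strict positivity of the entries of $K=\exp(-C/\varepsilon)$ and the cancellation $\sum_i\alpha_iK_{im}=[K\alpha]_m$; from $\lambda v^{(j)}>0$ it concludes $\lambda>0$. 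You instead work with the quadratic form: writing the Hessian through the entropic plan $P_{m\ell}=\alpha_mK_{m\ell}q_\ell/(K\alpha)_\ell$ and the Gram identity $\bigl(\diag(\alpha)K\diag(q/(K\alpha)^2)K\diag(\alpha)\bigr)_{mm'}=\sum_\ell P_{m\ell}P_{m'\ell}/q_\ell$, you exhibit $\varepsilon\,z^\top\nabla^2H_q^{\ast}(g)z$ as a $q$-mixture of conditional variances of $z$ under the column distributions $P_{\cdot\ell}/q_\ell$, so semidefiniteness is Jensen's inequality and the equality case pins the kernel down exactly: full support of each column (again strict positivity of $K$) forces $z$ constant, so $\ker\nabla^2H_q^{\ast}(g)=\mathrm{span}(\mathds{1}_N)$ and symmetry does the rest. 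Both proofs use the symmetry of $C$ (you in the column sums $\sum_mP_{m\ell}=q_\ell$ and in the Gram identity, the paper in its cancellation step) and the strict positivity of $K$. Your route buys a structural interpretation --- $\varepsilon\nabla^2H_q^{\ast}(g)$ is the conditional covariance operator of the optimal entropic coupling --- and a cleaner identification of the kernel; the paper's route buys brevity, with no algebraic identity to verify, at the price of the eigenvector case analysis. Two small tidiness points, neither affecting correctness: the sums carrying a factor $1/q_\ell$ should be restricted to $\{\ell : q_\ell>0\}$ (when $q_\ell=0$ the whole $\ell$-th column of $P$ vanishes, so those terms drop out), which your next sentence implicitly does; and symmetry of $K$ is what gives the \emph{column} sums $q_\ell$, while the \emph{row} sums equal $r_m$ by the mere definition of $\nabla H_q^{\ast}(g)$.
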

\begin{proof}
The eigenvalue $\lambda_N=0$ associated to $v_N$ has been treated in Proposition \ref{prop:eigenvector_vn}. Let $v\in V=(\mbox{Vect}(v_N))^{\perp}$ (i.e. $v$ does not have constant coordinates) an eigenvector of $\nabla^2H_q^{\ast}(g)$. Hence we can suppose that, let say  $v^{(j)}$, is its larger coordinate, and that their exists $i\neq j$ such that $v^{(j)}>v^{(i)}$. Without loss of generality, we can assume that $v^{(j)}>0$. Then
\begin{align*}
[\nabla^2H_q^{\ast}(g)v]_j &=\left[\frac{1}{\varepsilon}\left(\diag\left(\diag(\alpha)K\frac{q}{K\alpha}\right)\right)v\right]_j-\left[\frac{1}{\varepsilon}\diag(\alpha)K\diag\left(\frac{q}{(K\alpha)^2}\right)K\diag(\alpha)v\right]_j\\
&=\frac{1}{\varepsilon}\alpha_jv^{(j)}\sum_{i=1}^NK_{ji}\frac{q_i}{[K\alpha]_i}-\frac{1}{\varepsilon}\sum_{i=1}^N\sum_{m=1}^N\alpha_jK_{jm}\frac{q_m}{[K\alpha]_m^2}\alpha_iK_{mi}v^{(i)}\\
&> \frac{1}{\varepsilon}\alpha_jv^{(j)}\sum_{i=1}^NK_{ji}\frac{q_i}{[K\alpha]_i}-\frac{1}{\varepsilon}\sum_{i=1}^N\sum_{m=1}^N\alpha_jK_{jm}\frac{q_m}{[K\alpha]_m^2}\alpha_iK_{mi}v^{(j)} \ \mbox{since} \ v^{(j)}\geq v^{(i)}, \forall i \\
&=0 \qquad \mbox{since} \quad \sum_{i=1}^N\alpha_iK_{im}=[K\alpha]_m.
\end{align*}
Thus $\lambda v^{(j)}=[\nabla^2H_q^{\ast}(g)v]_j>0$, and  we necessarily  have that $\lambda>0$.
\end{proof}
The set of eigenvalues of $\nabla^2 H_q^{\ast}(g)$ is also bounded from above.

\begin{proposition}
\label{prop:trace}
For all $q\in\Sigma_N$ and $g\in\R^N$ we have that
$\mbox{Tr}(\nabla^2H_q^{\ast}(g))\leq\frac{1}{\varepsilon}$
and thus $\lambda_k\leq 1/\varepsilon$ for all $k=1,\ldots,N$.
\end{proposition}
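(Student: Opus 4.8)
The plan is to compute the trace of the Hessian directly from the explicit formula provided in Theorem \ref{th:diffHq*}, and then to combine the resulting trace bound with the positivity of the eigenvalues established in Propositions \ref{prop:eigenvector_vn} and \ref{prop:eigenvalues}. Writing $\alpha = \exp(g/\varepsilon)$ and $K = \exp(-C/\varepsilon)$ as in Theorem \ref{th:diffHq*}, and using linearity of the trace, I would first split
$$
\mbox{Tr}(\nabla^2 H_q^{\ast}(g)) = \frac{1}{\varepsilon}\mbox{Tr}\left(\diag\left(\diag(\alpha)K\frac{q}{K\alpha}\right)\right) - \frac{1}{\varepsilon}\mbox{Tr}\left(\diag(\alpha)K\diag\left(\frac{q}{(K\alpha)^2}\right)K\diag(\alpha)\right).
$$

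For the first term, I would observe that $\diag(\alpha)K\frac{q}{K\alpha} = \nabla H_q^{\ast}(g)$ which, again by Theorem \ref{th:diffHq*}, belongs to $\Sigma_N$. Hence the trace of the associated diagonal matrix equals the sum of its entries, which is exactly $1$, so the first term contributes precisely $\frac{1}{\varepsilon}$. For the second term, setting $B := \diag(\alpha)K\diag\left(\frac{q}{(K\alpha)^2}\right)K\diag(\alpha)$, a direct expansion of the matrix products gives the diagonal entries
$$
B_{jj} = \alpha_j^2 \sum_{m=1}^N K_{jm}^2 \, \frac{q_m}{(K\alpha)_m^2},
$$
where I have used the symmetry $K_{mj} = K_{jm}$, itself a consequence of the symmetry of the cost matrix $C$. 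Since $\alpha$, $K$ and $q$ all have non-negative entries, every $B_{jj} \geq 0$, so $\mbox{Tr}(B) \geq 0$. Subtracting a non-negative quantity then yields $\mbox{Tr}(\nabla^2 H_q^{\ast}(g)) \leq \frac{1}{\varepsilon}$.

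Finally, for the per-eigenvalue bound I would invoke Propositions \ref{prop:eigenvector_vn} and \ref{prop:eigenvalues}, which together show that $\nabla^2 H_q^{\ast}(g)$ is positive semi-definite, i.e.\ $\lambda_k \geq 0$ for all $k$. Since the trace is the sum of the eigenvalues, $\sum_{k=1}^N \lambda_k = \mbox{Tr}(\nabla^2 H_q^{\ast}(g)) \leq \frac{1}{\varepsilon}$, and non-negativity of each term forces $\lambda_k \leq \sum_{l=1}^N \lambda_l \leq \frac{1}{\varepsilon}$ for every $k$, as claimed. The computation is essentially routine; the only point requiring a little care is confirming that $\mbox{Tr}(B) \geq 0$, which is where the symmetry of $K$ and the non-negativity of all the quantities involved are genuinely used.
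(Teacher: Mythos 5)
Your proof is correct and follows essentially the same route as the paper: the paper also bounds $\mbox{Tr}(\nabla^2H_q^{\ast}(g))$ by the trace of its first (diagonal) term, which equals $\frac{1}{\varepsilon}$ since $\diag(\alpha)K\frac{q}{K\alpha}=\nabla H_q^{\ast}(g)\in\Sigma_N$, and then deduces the per-eigenvalue bound from the non-negativity of the spectrum. Your only addition is to spell out explicitly that the subtracted matrix $B$ has non-negative diagonal entries (a step the paper leaves implicit, and which in fact needs only the positivity of the entries of $K=\exp(-C/\varepsilon)$, not its symmetry), which is a welcome clarification rather than a different approach.
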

\begin{proof}
We directly get from Theorem \ref{th:diffHq*} that
$$\mbox{Tr}(\nabla^2H_q^{\ast}(g))\leq \frac{1}{\varepsilon}\mbox{Tr}\left(\diag\underbrace{\left(\diag(\alpha)K\frac{q}{K\alpha}\right)}_{\in\ \Sigma_N}\right)=\frac{1}{\varepsilon}.$$
\end{proof}

We can now provide the proof of  Theorem \ref{th:strong_convexity_H}.  Since $H_q$ is convex, proper and lower-semicontinuous, we know by the Fenchel-Moreau theorem that $H_q^{\ast\ast}=H_q$. Hence by  Corollary 12.A in the Rockafellar's book \cite{rockafellar1974conjugate}, we have  that
\begin{equation}
\nabla H_q=(\nabla H_q^{\ast})^{-1}, \label{eq:rock}
\end{equation}
in the sense that $\nabla H_q^{\ast} \circ \nabla H_q   (r) = r$ for any $r \in \Sigma_N$.

To continue the proof, we focus on a definition of the function $H_q$ restricted to the linear subspace $V$. Let $(v_1,\ldots,v_{N-1})$ be an orthonormal basis of $V=(\mbox{Vect}(v_N))^{\perp}$ and $P=[v_1\ \cdots \ v_{N-1}]\in\R^{N\times (N-1)}$ the matrix of the basis. Remark that $PP^T$ is the matrix of the orthogonal projection onto $V$, and that $PP^T=I_N-v_Nv_N^T.$ If we define $\tS_{N-1}:=P^T\Sigma_N\in\R^{N-1}$, then for $r\in\Sigma_N$, there exists $\tr\in\tS_{N-1}$ such that $r=P\tr+\frac{1}{\sqrt{N}}v_N.$
Hence we can introduce the functional $\tH_q:\tS_{N-1} \rightarrow \R$ defined by
$$\tH_q(\tr):=H_q\left(P\tr+\frac{1}{\sqrt{N}}v_N\right).$$
For $\tg\in\R^{N-1}$ we have that
\begin{align*}
\tH_q^{\ast}(\tg) &=\max_{\tr\in\tS_{N-1}}\ \langle \tg,\tr\rangle-\tH_q(\tr)\\
&=\max_{r\in\Sigma_N}\ \langle \tg,P^Tr-u_N\rangle - H_q(r) \quad \mbox{where} \ u_N=\frac{1}{N}\left(\sum_{i=1}^N v_1^{(i)},\ldots,\sum_{i=1}^Nv_{N-1}^{(i)}\right) \\
&= H_q^{\ast}(P\tg)-\langle \tg,u_N\rangle.
\end{align*}
Since $H_q^{\ast}$ is $C^{\infty}$ (see Theorem \ref{th:diffHq*}), we can differentiate $\tH_q^{\ast}$ with respect to $\tilde{g}$ to obtain that
\begin{align*}
\nabla \tH_q^{\ast}(\tg)& =P^T\nabla H_q^{\ast}(P\tg)-u_N\\
\nabla^2\tH_q^{\ast}(\tg)& = P^T\nabla^2H_q^{\ast}(P\tg)P.
\end{align*}
By Proposition \ref{prop:eigenvalues}, we know that $\nabla^2H_q^{\ast}(P\tg)\in\R^{N\times N}$ admits a unique eigenvalue equals to $0$ which is associated to the eigenvector $v_N$. All  other eigenvalues are  positive (Proposition \ref{prop:eigenvalues}) and bounded from above by $1/\varepsilon$ (Proposition \ref{prop:trace}). Since $\nabla\tH_q^{\ast}  : \R^{(N-1)} \to \R^{(N-1)}$ is a $C^{\infty}$-diffeomorphism, using equality \eqref{eq:rock} (that is also valid for $\tH_q$), we have   that
\begin{eqnarray*}
\nabla^2 \tilde{H}_q(\tilde{r}) & = &  \nabla \left(  (\nabla \tH_q^{\ast})^{-1}(\tilde{r}) \right)\\
& = & [\nabla^2 \tilde{H}_q^{\ast}((\nabla \tH_q^{\ast})^{-1}(\tilde{r}))]^{-1} \\
& = & [\nabla^2 \tilde{H}_q^{\ast}(\nabla \tilde{H}_q(\tilde{r}))]^{-1},
\end{eqnarray*}
where the second equality follows from the global inversion theorem, and the last one again uses  equality \eqref{eq:rock}.
Thus we get
$$\lambda_{\min}(\nabla^2\tH_q(\tr))\geq\varepsilon.$$
The above inequality implies  the strong convexity of $\tH_q$ which reads for $\tr_0,\tr_1\in\tS_{n-1}$
$$\tH_q(\tr_1)\geq \tH_q(\tr_0)+\nabla\tH_q(\tr_0)^T(\tr_1-\tr_0)+\frac{\varepsilon}{2}\|\tr_1-\tr_0\|^2,$$
and this translates for $H_q$ and $r_0,r_1\in\Sigma_N$ to
$$H_q(r_1)\geq H_q(r_0)+\nabla H_q(r_0)^TPP^T(r_1-r_0)+\frac{\varepsilon}{2}\| PP^T(r_1-r_0)\|^2.$$
 To conclude, we remark that  $(r_1-r_0)\in V$ (indeed one has that $r_1-r_0=\sum_{j=1}^{N-1}\langle v_j,r_1-r_0\rangle v_j$ since $\langle v_N,r_1-r_0\rangle=0$ and thus $PP^T(r_1-r_0)=r_1-r_0$). Hence, we finally obtain the strong convexity of $H_q$
$$H_q(r_1)\geq H_q(r_0)+\nabla H_q(r_0)^T(r_1-r_0)+\frac{\varepsilon}{2}\| (r_1-r_0)\|^2.$$
This completes the proof of Theorem \ref{th:strong_convexity_H}.

\section{Lipschitz constant of \texorpdfstring{$H_q$}{H_q} -  Proof of Lemma \ref{lemma:Lipschitz}}
\label{sec:dual_bounded}
The dual version of the minimization problem \eqref{def:primal_Sinkhorn} is given in \cite{cuturi2013fast} by
\begin{equation}
\label{eq:dual_Sinkhorn}
W_{2,\varepsilon}^2(r,q)=\underset{\alpha,\beta\in\R^N}{\max} \ \alpha^Tr+\beta^Tq-\sum_{1\leq m,\ell \leq N}\varepsilon e^{-\frac{1}{\varepsilon}(C_{m\ell}-\alpha_m-\beta_\ell)}
\end{equation}
where $C_{m\ell}$ are the entries of the matrix cost $C$. We recall the notation
$$
\Sigma_N^{\rho} = \left\{ r \in \Sigma_N \; : \; \min_{1 \leq \ell \leq N} r_{\ell} \geq \rho \right\}  \mbox{ for some } 0 < \rho < 1. 
$$
We now recall the Lemma \ref{lemma:Lipschitz}.
\begin{lemma}\label{lemma:Lipschitz0}
Let $q \in \Sigma_N$ and $0 < \rho < 1$. Then, one has that $r \mapsto H_q(r)$ is $L_{\rho,\varepsilon}$-Lipschitz on $\Sigma_N^{\rho}$ with
\begin{equation}
L_{\rho,\varepsilon}= \left( \sum_{1 \leq m \leq N} \left(2 \varepsilon \log(N) +\adjustlimits\inf_{1 \leq k \leq N} \sup_{1 \leq \ell\leq N} |C_{m\ell} - C_{k \ell}|    - \varepsilon \log(\rho) \right)^2  \right)^{1/2}. \label{eq:defL2}
\end{equation}
\end{lemma}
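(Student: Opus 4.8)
The plan is to reduce the Lipschitz estimate to a uniform bound on the gradient of $H_q$ and then to bound that gradient componentwise. First I would record that, by the dual representation \eqref{eq:dual_Sinkhorn}, $H_q$ is a supremum of functions affine in $r$, hence convex, and that its gradient is an optimal dual potential: combining the Legendre formula of Theorem~\ref{th:diffHq*} with the Fenchel relation \eqref{eq:rock} gives $\nabla H_q(r)=\alpha^\star(r)$, where $(\alpha^\star(r),\beta^\star(r))$ maximizes \eqref{eq:dual_Sinkhorn} at $r$. Since any two elements of $\Sigma_N^{\rho}$ differ by a vector orthogonal to $\mathds{1}_N$, the inner product $\langle \alpha^\star, r_1-r_0\rangle$ is unchanged when $\alpha^\star$ is shifted by a multiple of $\mathds{1}_N$; writing $H_q(r_1)-H_q(r_0)=\int_0^1\langle \alpha^\star(r_t),r_1-r_0\rangle\,dt$ along the segment $r_t\in\Sigma_N^{\rho}$ and applying Cauchy--Schwarz, it suffices to exhibit, for each $r$, a representative of $\alpha^\star(r)$ with Euclidean norm at most $L_{\rho,\varepsilon}$. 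As $L_{\rho,\varepsilon}$ in \eqref{eq:defL2} is exactly the $\ell_2$-norm of the vector with entries $B_m:=2\varepsilon\log(N)+\inf_{k}\sup_{\ell}|C_{m\ell}-C_{k\ell}|-\varepsilon\log(\rho)$, the whole statement reduces to the componentwise bound $|\alpha_m^\star|\le B_m$.

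For this componentwise bound I would use the first-order (soft-min) optimality relation obtained by setting $\nabla H_q^\ast(g)=r$ in Theorem~\ref{th:diffHq*}, namely $\alpha_m^\star=\varepsilon\log r_m-\varepsilon\log s_m$ with $s_m=\sum_{\ell}e^{(\beta_\ell^\star-C_{m\ell})/\varepsilon}$. The term $\varepsilon\log r_m$ is controlled directly by $r\in\Sigma_N^{\rho}$, since $\varepsilon\log\rho\le\varepsilon\log r_m\le 0$, and this furnishes the $-\varepsilon\log(\rho)$. For the log-sum-exp term I would compare row $m$ with an arbitrary reference row $k$: rewriting $s_m=\sum_\ell e^{(C_{k\ell}-C_{m\ell})/\varepsilon}\,e^{(\beta_\ell^\star-C_{k\ell})/\varepsilon}$ shows that the quotient $s_m/s_k$ is a weighted average of the numbers $e^{(C_{k\ell}-C_{m\ell})/\varepsilon}$, so $|\varepsilon\log s_m-\varepsilon\log s_k|\le \sup_{\ell}|C_{m\ell}-C_{k\ell}|$, which produces the cost term and, after optimizing over $k$, the $\inf_k$. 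It then remains to anchor the additive scale: fixing the shift representative of $(\alpha^\star,\beta^\star)$ and using that the optimal plan $U^\star$ is a probability matrix with $N^2$ entries (hence its largest entry lies in $[N^{-2},1]$) pins the reference quantities $s_k$ within a factor controlled by $N^2$, which is the source of the $2\varepsilon\log(N)$. Collecting the two estimates two-sidedly yields $|\alpha_m^\star|\le B_m$.

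Summing $B_m^2$ over $m$ then gives $\|\alpha^\star(r)\|_2\le L_{\rho,\varepsilon}$ for every $r\in\Sigma_N^{\rho}$, and the reduction of the first paragraph concludes the proof. I expect the anchoring step to be the main obstacle: one must choose a single additive normalization of the dual potentials that simultaneously realizes the per-row optimal reference (the $\inf_k\sup_\ell$ structure) and the $N^2$-scale control giving $2\varepsilon\log(N)$, while relying only on the lower bound $r\ge\rho$. This one-sidedness is essential, since $q\in\Sigma_N$ may have vanishing entries and the $\beta$-potential cannot be controlled symmetrically through $q$; it is precisely why the Lipschitz property is asserted in the variable $r$ on $\Sigma_N^{\rho}$ with $q$ unconstrained, and why in the proof of Theorem~\ref{th:Sinkhorn_rate} the symmetry of the divergence is separately invoked to transfer the estimate to the $q$-variable.
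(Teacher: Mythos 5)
Your overall route is the same as the paper's: both proofs reduce the Lipschitz estimate to a Euclidean bound on an optimal dual potential $\alpha^{q,r}$ of \eqref{eq:dual_Sinkhorn} (the paper via the two-point inequality $\vert H_q(r)-H_q(s)\vert \leq \sup \vert \langle \alpha, r-s\rangle\vert$ with $\alpha \in \{\alpha^{q,r},\alpha^{q,s}\}$, you via the gradient and an integral along the segment -- equivalent up to a Danskin/subgradient technicality), and both exploit the stationarity relation $\alpha_m = \varepsilon\log r_m - \varepsilon \log s_m$ with $s_m = \sum_{\ell} e^{(\beta_\ell - C_{m\ell})/\varepsilon}$, extracting $-\varepsilon\log\rho$ from $r \in \Sigma_N^{\rho}$ and the cost term from a comparison of rows $m$ and $k$. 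Your weighted-average bound $\vert \varepsilon\log s_m - \varepsilon \log s_k\vert \leq \sup_{\ell}\vert C_{m\ell}-C_{k\ell}\vert$ is in fact cleaner and tighter than the paper's detour through the auxiliary minima $\tilde{\alpha}_m$: the paper pays two softmin-versus-min gaps of $\varepsilon\log N$ each (inequalities \eqref{eq:softmax1}--\eqref{eq:softmax3}), which is precisely its source of the $2\varepsilon\log N$, whereas your comparison produces no such loss.

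The genuine gap is the anchoring step, exactly where you predicted trouble, and the mechanism you propose cannot work. The optimal plan $U^{\star}_{m\ell}=\exp\bigl((\alpha_m+\beta_\ell-C_{m\ell})/\varepsilon\bigr)$ is invariant under the gauge $(\alpha,\beta)\mapsto(\alpha+c\mathds{1}_N,\beta-c\mathds{1}_N)$, so no property of $U^{\star}$ -- in particular the (true but useless) fact that its largest entry lies in $[N^{-2},1]$ -- can pin the shift-dependent quantities $s_k$; the $2\varepsilon\log N$ cannot be generated this way. The legitimate anchor is the gauge freedom itself, which is what the paper invokes (``dual variables defined up to an additive constant''): normalize once, say $s_{k_0}=1$, i.e.\ $\alpha_{k_0}=\varepsilon\log r_{k_0}$, for a single fixed $k_0$; your own estimates then give $\vert\alpha_m\vert \leq \sup_{\ell}\vert C_{m\ell}-C_{k_0\ell}\vert - \varepsilon\log\rho$, with no $\varepsilon\log N$ term at all. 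But note what this buys: the reference index is the same for every $m$, so the resulting Lipschitz constant is $\inf_{k_0}\bigl(\sum_m \bigl(\sup_{\ell}\vert C_{m\ell}-C_{k_0\ell}\vert - \varepsilon\log\rho\bigr)^2\bigr)^{1/2}$, with the infimum \emph{outside} the sum over $m$. Your target componentwise bound $\vert\alpha^{\star}_m\vert\leq B_m$ places an $m$-dependent infimum \emph{inside}, and no single gauge constant can realize the optimal reference row for every $m$ simultaneously; indeed, since $k=m$ is admissible in the infimum, $B_m = 2\varepsilon\log N - \varepsilon\log\rho$, and the claimed bound would force the whole potential (modulo constants) to vanish as $\varepsilon\to 0$, contradicting the convergence of entropic potentials to generically nonconstant Kantorovich potentials. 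To be fair, the paper's own proof is loose at this very spot -- it sets $\alpha^{q,r}_k=0$ ``choosing the $k$ minimizing $\sup_{\ell}\vert C_{m\ell}-C_{k\ell}\vert$'', a $k$ that depends on $m$ -- so the repaired, fixed-$k_0$ version of your argument is the defensible statement; but as written, your anchoring is not a proof, and the constant \eqref{eq:defL2} with the infimum inside the sum is not reached by it.
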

\begin{proof}
Let $r,s,q\in\Sigma_N$. We denote by $(\alpha^{q,r},\beta^{q,r})$ a pair of optimal dual variables in the problem \eqref{eq:dual_Sinkhorn}. Then, we have that
\begin{align}
\vert H_q(r)-H_q(s)\vert &=(H_q(r)-H_q(s))\mathds{1}_{H_q(r)\geq H_q(s)}+(H_q(s)-H_q(r))\mathds{1}_{H_q(r)\leq H_q(s)} \nonumber \\
& \leq \left(\langle\alpha^{q,r},r\rangle+\langle\beta^{q,r},q\rangle -\sum_{m,\ell}\varepsilon e^{-\frac{1}{\varepsilon}(C_{m\ell}-\alpha_m^{q,r}-\beta_\ell^{q,r})}-\langle\alpha^{q,r},s\rangle-\langle\beta^{q,r},q\rangle +\right. \nonumber \\
& \qquad \left. \sum_{m,\ell}\varepsilon e^{-\frac{1}{\varepsilon}(C_{m\ell}-\alpha_m^{q,r}-\beta_\ell^{q,r})}\right)\mathds{1}_{(H_q(r)\geq H_q(s))} \nonumber\\
&+\left(\langle\alpha^{q,s},s\rangle+\langle\beta^{q,s},q\rangle -\sum_{m,\ell}\varepsilon e^{-\frac{1}{\varepsilon}(C_{m\ell}-\alpha_m^{q,s}-\beta_\ell^{q,s})}-\langle\alpha^{q,s},r\rangle-\langle\beta^{q,s},q\rangle +\right. \nonumber\\
& \qquad \left. \sum_{m,\ell}\varepsilon e^{-\frac{1}{\varepsilon}(C_{m\ell}-\alpha_m^{q,s}-\beta_\ell^{q,s})}\right)\mathds{1}_{(H_q(r)\leq H_q(s))} \nonumber\\
& \leq \underset{\alpha\in\{\alpha^{q,r},\alpha^{q,s}\}}{\sup}\vert \langle \alpha,r-s\rangle\vert  \leq \underset{\alpha \in\{\alpha^{q,r},\alpha^{q,s}\}}{\sup} \vert \alpha\vert \ \vert r-s\vert. \label{eq:Lip}
\end{align}
Let us now prove that the norm of the dual variable $\alpha^{q,r}$ (resp.\  $\alpha^{q,s}$)  is bounded by a constant not depending on $q$ and $r$ (resp.\ $q$ and $s$). To this end, we follow some of the arguments in the proof of Proposition A.1 in \cite{2016-genevay-nips}. Since the dual variable $\alpha^{q,r}$ achieves the maximum in equation \eqref{eq:dual_Sinkhorn}, we have that for any $1 \leq m \leq N$
$$
r_m-\sum_{1\leq \ell \leq N} e^{-\frac{1}{\varepsilon}(C_{m\ell}-\alpha^{q,r}_m-\beta^{q,r}_\ell)} = 0.
$$
Let $r \in \Sigma_N^{\rho}$. Hence, $r_m \neq 0$, and thus one may define $\lambda_{m} = \varepsilon \log(r_{m})$. Then, it follows from the above equality that $\sum_{1\leq \ell \leq N} e^{-\frac{1}{\varepsilon}(C_{m\ell} + \lambda_{m}-\alpha^{q,r}_m-\beta^{q,r}_\ell)} = 1$ which implies that
$$
\alpha^{q,r}_m = -\varepsilon \log \left( \sum_{1\leq \ell \leq N} e^{-\frac{1}{\varepsilon}(C_{m\ell} + \lambda_{m}-\beta^{q,r}_{\ell})} \right) .
$$
Now, for each $1 \leq m \leq N$, we define
\begin{equation}
\tilde{\alpha}^{q,r}_m = \min_{1 \leq \ell \leq N}\left\{ C_{m\ell} + \lambda_{m}-\beta^{q,r}_{\ell} \right\} =  \min_{1 \leq \ell \leq N}\left\{ C_{m\ell} -\beta^{q,r}_{\ell} \right\} + \lambda_{m}, \label{eq:ctransform}
\end{equation}
and we consider the inequality
\begin{equation}
|\alpha^{q,r}_m- \alpha^{q,r}_k| \leq |\alpha^{q,r}_m - \tilde{\alpha}^{q,r}_m| + |\tilde{\alpha}^{q,r}_m - \tilde{\alpha}^{q,r}_k| + |\tilde{\alpha}^{q,r}_k - \alpha^{q,r}_k |. \label{eq:decompineq}
\end{equation}
By equation \eqref{eq:ctransform} one has that $\tilde{\alpha}^{q,r}_m + \beta^{q,r}_{\ell} - C_{m\ell} - \lambda_{m} \leq 0$. Hence we get
\begin{equation}
- \alpha^{q,r}_m = \varepsilon \log \left( \sum_{1\leq \ell \leq N} e^{-\frac{1}{\varepsilon} \tilde{\alpha}^{q,r}_m }e^{\frac{1}{\varepsilon}(\tilde{\alpha}^{q,r}_m + \beta^{q,r}_{\ell} - C_{m\ell} - \lambda_{m})} \right) \\
\leq - \tilde{\alpha}^{q,r}_m +  \varepsilon \log(N). \label{eq:softmax1}
\end{equation}
On the other hand, using the inequality
$$
\sum_{1\leq \ell \leq N} e^{-\frac{1}{\varepsilon}(C_{m\ell} + \lambda_{m}-\beta^{q,r}_{\ell})} \geq e^{-\frac{1}{\varepsilon}(C_{m \ell_{\ast}} + \lambda_{m}-\beta^{q,r}_{\ell_{\ast}})} = e^{-\frac{1}{\varepsilon} \tilde{\alpha}^{q,r}_m},
$$
where $\ell_{\ast}$ is a value of $1 \leq \ell \leq N$ achieving the minimum in \eqref{eq:ctransform}, we obtain that
\begin{equation}
- \alpha^{q,r}_m \geq - \tilde{\alpha}^{q,r}_m.  \label{eq:softmax2}
\end{equation}
By combining inequalities \eqref{eq:softmax1} and \eqref{eq:softmax2}, we finally have
\begin{equation}
|\tilde{\alpha}^{q,r}_m - \alpha^{q,r}_m| \leq \varepsilon \log(N). \label{eq:softmax3}
\end{equation}
To conclude, it remains to remark that, by equation \eqref{eq:ctransform}, the vector $(\tilde{\alpha}^{q,r}_m - \lambda_{m})_{1 \leq m \leq N}$ is the c-transform of the vector $(\beta^{q,r}_{\ell})_{1 \leq \ell \leq N}$ for the cost matrix $C$. Therefore, by using  standard results  in optimal transport which relate c-transforms to the modulus of continuity of the cost (see e.g.\ \cite{Santam15}, p.\ 11) one obtains that
$$
|\tilde{\alpha}^{q,r}_m - \tilde{\alpha}^{q,r}_k + \lambda_{k} - \lambda_{m}| \leq \sup_{1 \leq \ell \leq N} |C_{m\ell} - C_{k\ell}|,
$$
which implies that
\begin{equation}
|\tilde{\alpha}^{q,r}_m - \tilde{\alpha}^{q,r}_k| \leq \sup_{1 \leq \ell \leq N} |C_{m\ell} - C_{k \ell}| +  \varepsilon | \log(r_{m}) - \log(r_{k}) |. \label{eq:modulus}
\end{equation}
By combining the upper bounds \eqref{eq:softmax3} and \eqref{eq:modulus} with the decomposition \eqref{eq:decompineq} we finally come to the inequality 
$$
|\alpha^{q,r}_m - \alpha^{q,r}_k| \leq 2 \varepsilon \log(N) + \sup_{1 \leq \ell \leq N} |C_{m\ell} - C_{k \ell}|   + \varepsilon | \log(r_{m}) - \log(r_{k}) |.
$$
Since this inequality remains true for each $k$, and that the dual variables achieving the maximum in equation \eqref{eq:dual_Sinkhorn} are  defined up to an additive constant, one may assume that $\alpha^{q,r}_k = 0$ choosing the $k$ minimizing $\sup_{1 \leq \ell \leq N} |C_{m\ell} - C_{k \ell}|$. Under such a condition, we finally obtain that
\begin{equation*}
|\alpha|  \leq  \left( \sum_{1 \leq m \leq N} \left(2 \varepsilon \log(N) +  \inf_{1 \leq k \leq N} \left\{ \sup_{1 \leq \ell \leq N} |C_{m\ell} - C_{k \ell}|  + \varepsilon | \log(r_{m}) - \log(r_{k}) |  \right\}\right)^2   \right)^{1/2} .
\end{equation*}

Using inequality \eqref{eq:Lip} and the assumption that $r \in \Sigma_N^{\rho}$ (in particular $\rho\leq r_m/r_k\leq 1/\rho$), we can thus conclude that $r \mapsto H_q(r)$ is $L_{\rho,\varepsilon}$-Lipschitz on $\Sigma_N^{\rho}$ for
\begin{equation}
L_{\rho,\varepsilon}= \left( \sum_{1 \leq m \leq N} \left(2 \varepsilon \log(N) + \adjustlimits\inf_{1 \leq k \leq N} \sup_{1 \leq \ell\leq N} |C_{m\ell} - C_{k \ell}|    -  \varepsilon \log(\rho) \right)^2  \right)^{1/2}.
\end{equation}
\end{proof}

\section{Useful concentration inequalities} \label{app:concen}

We state and prove below the various concentration inequalities that have been used in the proof of Theorem \ref{theo:oracle}.

\begin{proposition} \label{prop:concentration1} 
For any $u > 0$
\begin{equation}
\P \left( \vert \hat{\br}_{n}^{\varepsilon}-r^{\varepsilon}\vert > (1+u)  \frac{2 \sqrt{2} L_{\rho,\varepsilon}}{\varepsilon \sqrt{n}}  \right) \leq \exp\left(  -u^2\right), \label{eq:ineqconcen1}
\end{equation}
where $\br_n^{\varepsilon}$ is the Sinkhorn barycenter of the iid random measures $\bq_1,\ldots,\bq_n$ (assumed to belong to  $\Sigma_N^{\rho}$) as defined by \eqref{eq:defrneps}.
\end{proposition}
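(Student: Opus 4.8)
The plan is to regard the deviation $g(\bq_1,\ldots,\bq_n) := |\br_n^{\varepsilon}-r^{\varepsilon}|$ as a function of the iid sample $\bq_1,\ldots,\bq_n\in\Sigma_N^{\rho}$ (note that both $\br_n^{\varepsilon}$ and $r^{\varepsilon}$ are uniquely defined thanks to the $\varepsilon$-strong convexity of Theorem \ref{th:strong_convexity_H}) and to apply a bounded differences (McDiarmid) inequality. The centering constant $\frac{2\sqrt{2}L_{\rho,\varepsilon}}{\varepsilon\sqrt{n}}$ appearing in \eqref{eq:ineqconcen1} is exactly the square root of the $\L_2$-bound \eqref{conv:Sinkhorn}; hence I will bound $\E(g)$ by this quantity through Jensen's inequality and then control the fluctuations of $g$ around its expectation.

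The key step is a stability estimate showing that $g$ has bounded differences with constant $\frac{2L_{\rho,\varepsilon}}{\varepsilon n}$ in each coordinate. Fix an index $i$ and replace $\bq_i$ by an independent copy $\bq_i'\in\Sigma_N^{\rho}$; denote by $\check{\br}_n^{\varepsilon}$ the associated empirical barycenter and by $F_n$, $\check{F}_n$ the two empirical objectives from \eqref{eq:defrneps}. Both $F_n$ and $\check{F}_n$ are $\varepsilon$-strongly convex on the convex set $\Sigma_N^{\rho}$, being averages of the $\varepsilon$-strongly convex maps $r\mapsto H_{\bq_j}(r)$. Combining the variational inequality satisfied by a minimizer over a convex set with strong convexity gives
\[
F_n(\check{\br}_n^{\varepsilon}) \geq F_n(\br_n^{\varepsilon}) + \tfrac{\varepsilon}{2}|\br_n^{\varepsilon}-\check{\br}_n^{\varepsilon}|^2, \qquad \check{F}_n(\br_n^{\varepsilon}) \geq \check{F}_n(\check{\br}_n^{\varepsilon}) + \tfrac{\varepsilon}{2}|\br_n^{\varepsilon}-\check{\br}_n^{\varepsilon}|^2.
\]
Adding these two inequalities and noticing that $F_n-\check{F}_n = \frac{1}{n}(H_{\bq_i}-H_{\bq_i'})$ is $\frac{2L_{\rho,\varepsilon}}{n}$-Lipschitz on $\Sigma_N^{\rho}$ by Lemma \ref{lemma:Lipschitz} yields $\varepsilon|\br_n^{\varepsilon}-\check{\br}_n^{\varepsilon}|^2 \leq \frac{2L_{\rho,\varepsilon}}{n}|\br_n^{\varepsilon}-\check{\br}_n^{\varepsilon}|$, hence $|\br_n^{\varepsilon}-\check{\br}_n^{\varepsilon}| \leq \frac{2L_{\rho,\varepsilon}}{\varepsilon n}$. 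Since $r^{\varepsilon}$ does not depend on the sample, the reverse triangle inequality then gives the bounded differences property for $g$ with increment $c_i = \frac{2L_{\rho,\varepsilon}}{\varepsilon n}$.

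With these increments, McDiarmid's inequality gives $\P(g \geq \E(g)+t) \leq \exp\bigl(-2t^2/\sum_{i=1}^n c_i^2\bigr) = \exp\bigl(-\varepsilon^2 n t^2/(2L_{\rho,\varepsilon}^2)\bigr)$. Jensen's inequality together with \eqref{conv:Sinkhorn} gives $\E(g) \leq \sqrt{\E(g^2)} \leq \frac{2\sqrt{2}L_{\rho,\varepsilon}}{\varepsilon\sqrt{n}}$, so taking $t = u\,\frac{2\sqrt{2}L_{\rho,\varepsilon}}{\varepsilon\sqrt{n}}$ and bounding $\E(g)+t$ from above by $(1+u)\frac{2\sqrt{2}L_{\rho,\varepsilon}}{\varepsilon\sqrt{n}}$ produces an exponent of $4u^2$, which dominates $u^2$ and therefore establishes \eqref{eq:ineqconcen1} (with room to spare). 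I expect the main obstacle to be the constrained stability estimate: one has to make sure that both minimizers stay in $\Sigma_N^{\rho}$ so that the Lipschitz bound of Lemma \ref{lemma:Lipschitz} is applicable, and to replace the usual vanishing-gradient optimality condition by the variational inequality valid for minimization over the convex set $\Sigma_N^{\rho}$.
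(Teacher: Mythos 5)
Your proof is correct and follows essentially the same route as the paper's: a bounded-differences (McDiarmid) argument for $\vert \br_n^{\varepsilon}-r^{\varepsilon}\vert$, with the increment obtained by playing the $\varepsilon$-strong convexity of the empirical objective (Theorem \ref{th:strong_convexity_H}) against the $L_{\rho,\varepsilon}$-Lipschitz property of $H_q$ (Lemma \ref{lemma:Lipschitz}), combined with the bound \eqref{conv:Sinkhorn} and Jensen's inequality to control the expectation. Your symmetric two-sided stability estimate even yields the sharper increment $\frac{2L_{\rho,\varepsilon}}{\varepsilon n}$ where the paper's one-sided argument gives $\frac{4L_{\rho,\varepsilon}}{\varepsilon n}$, so you obtain $\exp(-4u^2)$ in place of the paper's exact $\exp(-u^2)$, which of course implies \eqref{eq:ineqconcen1}.
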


\begin{proof}
To derive a concentration inequality for the random variable
$$
Z = \vert \hat{\br}_{n}^{\varepsilon}- r^{\varepsilon} \vert,
$$
we write it as  $Z = f(\bq_1,\ldots,\bq_n)$ where $f : \Sigma_N^{\rho} \times \ldots \times \Sigma_N^{\rho}  \to \R$ is a measurable function, and we shall prove that $f$ satisfies the following bounded difference inequality
\begin{equation}
| f(\bq_1,\ldots,\bq_{i},\ldots,\bq_n) -  f(\bq_1,\ldots,\bq_{i}',\ldots,\bq_n) | \leq  \frac{4 L^2_{\rho,\varepsilon}}{\varepsilon n}, \quad   \mbox{ for any } 1 \leq i \leq n, \label{eq:bounddiff}
\end{equation}
where $\bq_{i}'$ denotes an independent random measure of the sample $\bq_1,\ldots,\bq_n$ that is distributed as $\bq$. To this end, we introduce the following empirical versions of  the function
\begin{equation}
r \mapsto F(r) := \E_{\bq\sim\P}[W_{2,\varepsilon}^2(r,\bq)] \label{eq:defF}
\end{equation}
that are defined as
$$
\hat{F}(r)  = \frac{1}{n}\sum_{i=1}^n W_{2,\varepsilon}^2(r,{\bq}_i)  \quad \mbox{and} \quad \hat{F}^{(i)}(r)  = \frac{1}{n} \left( \sum_{j=1, j \neq i }^n W_{2,\varepsilon}^2(r,{\bq}_j) +  W_{2,\varepsilon}^2(r,{\bq}_{i}')\right),
$$
and we introduce the quantities
$$
 \br_n^{\varepsilon,(i)} =\uargmin{r\in\Sigma_N^{\rho}} \hat{F}^{(i)}(r) \quad \mbox{and} \quad Z^{(i)} = \vert \br_n^{\varepsilon,(i)} - r^{\varepsilon} \vert.
$$
Then, we proceed as in the proof  of Theorem 6 in \cite{shalev2009stochastic}. First, we remark that
\begin{eqnarray}
\hat{F}(\br_n^{\varepsilon,(i)}) - \hat{F}(\br_n^{\varepsilon}) & = & \frac{W_{2,\varepsilon}^2(\br_n^{\varepsilon,(i)},{\bq}_i) - W_{2,\varepsilon}^2(\br_n^{\varepsilon},{\bq}_i) }{n} + \frac{1}{n} \sum_{j=1, j \neq i }^n W_{2,\varepsilon}^2(\br_n^{\varepsilon,(i)},{\bq}_j) - W_{2,\varepsilon}^2(\br_n^{\varepsilon},{\bq}_j) \nonumber \\
& = & \frac{W_{2,\varepsilon}^2(\br_n^{\varepsilon,(i)},{\bq}_i) - W_{2,\varepsilon}^2(\br_n^{\varepsilon},{\bq}_i) }{n}  + \frac{W_{2,\varepsilon}^2(\br_n^{\varepsilon},{\bq}_{i}') - W_{2,\varepsilon}^2(\br_n^{\varepsilon,(i)},{\bq}_{i}') }{n} \nonumber  \\
& & + \hat{F}^{(i)}(\br_n^{\varepsilon,(i)}) - \hat{F}^{(i)}(\br_n^{\varepsilon})\nonumber   \\
& \leq & \frac{|W_{2,\varepsilon}^2(\br_n^{\varepsilon,(i)},{\bq}_i) - W_{2,\varepsilon}^2(\br_n^{\varepsilon},{\bq}_i) |}{n}  + \frac{|W_{2,\varepsilon}^2(\br_n^{\varepsilon},{\bq}_{i}') - W_{2,\varepsilon}^2(\br_n^{\varepsilon,(i)},{\bq}_{i}') |}{n} \nonumber  \\
& \leq & \frac{2 L_{\rho,\varepsilon}}{n}  \vert \br_n^{\varepsilon,(i)} - \br_n^{\varepsilon} \vert, \label{eq:bound1}
\end{eqnarray}
where the first inequality follows from the fact that $ \br_n^{\varepsilon,(i)} $ is a minimizer of $\hat{F}^{(i)}$, and the second one from Lemma \ref{lemma:Lipschitz} on the Lipschitz continuity of $r \mapsto W_{2,\varepsilon}^2(r,q)$. Now, using Theorem \ref{th:strong_convexity_H}, one has that the function
$\hat{F}$ is $\varepsilon$-strongly convex, which implies that
\begin{equation}
\hat{F}(\br_n^{\varepsilon,(i)}) - \hat{F}(\br_n^{\varepsilon}) \geq \frac{\varepsilon}{2}  \vert \br_n^{\varepsilon,(i)} - \br_n^{\varepsilon} \vert^2. \label{eq:bound2}
\end{equation}
Combining \eqref{eq:bound1} and \eqref{eq:bound2}, we obtain that $  \vert \br_n^{\varepsilon,(i)} - \br_n^{\varepsilon} \vert \leq  \frac{4 L_{\rho,\varepsilon}}{\varepsilon n} $, and note that by the Lipschitz continuity of $r \mapsto W_{2,\varepsilon}^2(r,q)$, it follows that, for any $q \in \Sigma_N^{\rho}$,
\begin{equation}
|W_{2,\varepsilon}^2(\br_n^{\varepsilon,(i)},q) - W_{2,\varepsilon}^2(\br_n^{\varepsilon},q) | \leq \frac{4 L^2_{\rho,\varepsilon}}{\varepsilon n}.  \label{eq:bound3}
\end{equation}
By the triangle inequality we finally obtain that
$$
|Z - Z^{(i)}| = | \vert \br_n^{\varepsilon} - r^{\varepsilon} \vert -  \vert \br_n^{\varepsilon,(i)} - r^{\varepsilon} \vert | \leq \vert \br_n^{\varepsilon,(i)} - \br_n^{\varepsilon} \vert \leq  \frac{4 L_{\rho,\varepsilon}}{\varepsilon n},
$$
which proves that inequality \eqref{eq:bounddiff} holds.
Now, using concentration of measure for a function of random variables satisfying the bounded difference inequality \eqref{eq:bounddiff}, we obtain that, for any $t > 0$ (see e.g. Theorem 6.2 in \cite{MR3185193})
\begin{equation}
\P \left( \vert \hat{\br}_{n}^{\varepsilon}- r^{\varepsilon} \vert - \E [ \vert \hat{\br}_{n}^{\varepsilon}- r^{\varepsilon} \vert ] > t \right) \leq \exp\left(  -\frac{n\varepsilon^2 t^2}{8 L^2_{\rho,\varepsilon}} \right). \label{eq:concen1}
\end{equation}
To conclude the proof it remains to obtain an upper bound on $ \E [ \vert \hat{\br}_{n}^{\varepsilon}- r^{\varepsilon} \vert ]$. We start with the basic inequality
\begin{equation}
 \E [ \vert \hat{\br}_{n}^{\varepsilon}- r^{\varepsilon} \vert ] \leq \sqrt{ \E [ \vert \hat{\br}_{n}^{\varepsilon}- r^{\varepsilon} \vert^2 ]}. \label{eq:basic}
\end{equation}
By Theorem \ref{th:strong_convexity_H}, it follows that the function $F$ defined in \eqref{eq:defF} is $\varepsilon$-strongly convex for the Euclidean 2-norm. Hence, as $r^{\varepsilon}$ is by definition a minimizer of $F$, one has that
\begin{equation}
\vert \hat{\br}_{n}^{\varepsilon}-r^{\varepsilon}\vert^2 \leq \frac{2}{\varepsilon} \left( F(\hat{\br}_{n}^{\varepsilon}) - F(r^{\varepsilon}) \right), \label{eq:Lip1}
\end{equation}
Then, for any $1 \leq i \leq n$, since ${\bq}_i'$ is an independent copy of ${\bq}_i$,  we clearly have that
$$
 \E [ F(\hat{\br}_{n}^{\varepsilon}) ] =  \E [ F(\br_n^{\varepsilon,(i)}) ] =  \E [  W_{2,\varepsilon}^2(\br_n^{\varepsilon,(i)},{\bq}_i)  ] 
$$
Hence, we may write
$
 \E [ F(\hat{\br}_{n}^{\varepsilon}) ] = \frac{1}{n} \sum_{i=1}^{n}  \E [  W_{2,\varepsilon}^2(\br_n^{\varepsilon,(i)},{\bq}_i)  ],
$
and since one also has that
$
 \E [ \hat{F}(\hat{\br}_{n}^{\varepsilon}) ] = \frac{1}{n} \sum_{i=1}^{n}  \E [  W_{2,\varepsilon}^2(\hat{\br}_{n}^{\varepsilon},{\bq}_i)  ],
$
inequality \eqref{eq:bound3} yields
$$
\E [F(\hat{\br}_{n}^{\varepsilon}) -  \hat{F}(\hat{\br}_{n}^{\varepsilon})] \leq \frac{4 L^2_{\rho,\varepsilon}}{\varepsilon n}.
$$
Finally, as $F(r^{\varepsilon}) = \E [ \hat{F}(r^{\varepsilon}) ] \geq \E [ \hat{F}(\hat{\br}_{n}^{\varepsilon}) ]$ we obtain from the above inequality that
\begin{equation}
 \E [ F(\hat{\br}_{n}^{\varepsilon}) ] - F(r^{\varepsilon}) \leq \frac{4 L^2_{\rho,\varepsilon}}{\varepsilon n}. \label{eq:concen2}
\end{equation}
Combining inequalities  \eqref{eq:concen1},  \eqref{eq:basic}  \eqref{eq:Lip1} and  \eqref{eq:concen2} allows to conclude the proof of Proposition \ref{prop:concentration1}. 

\end{proof}

\begin{proposition} \label{prop:concentration2} 
For any $u > 0$
\begin{equation}
\P \left(  \vert {\br}_n^{\varepsilon}-\hat{{\br}}^{\varepsilon}_{n,p}\vert^2 > \frac{2L_{\rho,\varepsilon}}{\varepsilon}\left(u + \rho( N + \sqrt{N}) \right)  \right) \leq 2^N \sum_{i=1}^{n} \exp\left(  - p_i u^2 \right), \label{eq:ineqconcen2}
\end{equation}
where $\br_n^{\varepsilon}$ is the Sinkhorn barycenter of the iid random measures $\bq_1,\ldots,\bq_n$ (assumed to belong to  $\Sigma_N^{\rho}$) as defined by \eqref{eq:defrneps}. 
\end{proposition}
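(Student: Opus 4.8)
The plan is to upgrade the variance bound \eqref{conv:Sinkhorn2} obtained inside the proof of Theorem~\ref{th:Sinkhorn_rate} into a tail bound, the only genuinely new ingredient being a concentration inequality for each empirical measure $\tilde{\bq}_i^{p_i}$ around $\bq_i$.

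First I would reproduce, verbatim, the deterministic core of the proof of Theorem~\ref{th:Sinkhorn_rate}. Since $\br_n^{\varepsilon}$ minimises $\frac1n\sum_i H_{\bq_i}$ and $\hat{\br}^{\varepsilon}_{n,p}$ minimises $\frac1n\sum_i H_{\hat{\bq}_i^{p_i}}$ on $\Sigma_N^{\rho}$, the $\varepsilon$-strong convexity of each $H_q$ (Theorem~\ref{th:strong_convexity_H}) yields the same two one-sided inequalities used there; adding them and invoking the $L_{\rho,\varepsilon}$-Lipschitz continuity of $q\mapsto H_q(r)$ (Lemma~\ref{lemma:Lipschitz} together with the symmetry of the Sinkhorn divergence), followed by $|\bq_i-\hat{\bq}_i^{p_i}|\le |\bq_i-\tilde{\bq}_i^{p_i}|+\rho(N+\sqrt N)$, gives the \emph{pointwise} (non-averaged) bound
\[
\vert \br_n^{\varepsilon}-\hat{\br}^{\varepsilon}_{n,p}\vert^2 \le \frac{2L_{\rho,\varepsilon}}{\varepsilon}\left(\frac1n\sum_{i=1}^n |\bq_i-\tilde{\bq}_i^{p_i}| + \rho(N+\sqrt N)\right).
\]
Consequently the event whose probability we must bound is contained in $\{\frac1n\sum_i |\bq_i-\tilde{\bq}_i^{p_i}| > u\}$. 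As an arithmetic mean exceeds $u$ only if at least one summand does, a union bound reduces the whole statement to the per-measure estimate $\P(|\bq_i-\tilde{\bq}_i^{p_i}| > u)\le 2^N e^{-p_i u^2}$ for each $1\le i\le n$, after which summing over $i$ closes the argument.

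The heart of the proof is thus this per-measure multinomial concentration, which I would establish conditionally on $\bq_i$, where $\tilde{\bq}_i^{p_i}=\frac1{p_i}\sum_{j=1}^{p_i}e_{k_j}$ is an average of i.i.d.\ random standard basis vectors with mean $\bq_i$. Bounding the Euclidean norm by the $\ell_1$-norm,
\[
|\bq_i-\tilde{\bq}_i^{p_i}| \le \sum_{k=1}^N|\bq_{i,k}-\tilde{\bq}_{i,k}^{p_i}| = \max_{s\in\{-1,1\}^N}\langle s,\bq_i-\tilde{\bq}_i^{p_i}\rangle,
\]
and applying a union bound over the $2^N$ sign vectors $s$ produces exactly the combinatorial factor $2^N$. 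For each fixed $s$, the quantity $\langle s,\bq_i-\tilde{\bq}_i^{p_i}\rangle=\frac1{p_i}\sum_{j=1}^{p_i}(\langle s,\bq_i\rangle-s_{k_j})$ is an average of i.i.d.\ centred random variables taking values in an interval of length $2$, so Hoeffding's inequality delivers a sub-Gaussian tail of the required exponential form; taking the expectation over $\bq_i$ (the bound being uniform in $\bq_i$) then yields the per-measure estimate.

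The only delicate point, and the main obstacle, is precisely this last step: one must marry the combinatorial factor $2^N$ arising from the $\ell_1$/sign-vector linearisation of the Euclidean norm with the Hoeffding tail while keeping the constant in the exponent sharp enough to reach $e^{-p_i u^2}$ — the two-point variance proxy of the centred summands $\langle s,\bq_i\rangle-s_{k_j}$ being the quantity that has to be controlled carefully. Everything else is a transcription of the deterministic estimates already invoked for Theorem~\ref{th:Sinkhorn_rate} together with a routine union bound, entirely parallel in spirit to the companion Proposition~\ref{prop:concentration1}.
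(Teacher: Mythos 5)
Your proposal follows the paper's proof essentially step for step: the same pointwise version of the strong-convexity/Lipschitz argument from Theorem~\ref{th:Sinkhorn_rate}, the same reduction via a union bound to a per-measure multinomial tail, and then that tail itself, which the paper simply cites as the Bretagnolle--Huber--Carol inequality (Proposition A6.6 in \cite{MR1385671}) whose standard proof is exactly your sign-vector linearisation of the $\ell_1$-norm plus Hoeffding. Your closing worry about the exponent is well founded but is a flaw in the statement rather than in your argument: both your Hoeffding computation and the cited inequality yield $2^N e^{-p_i u^2/2}$, which is precisely what the paper's own intermediate bound \eqref{eq:mulinom2} records, so the factor $e^{-p_i u^2}$ in the proposition's display overstates the exponent by a factor of $2$ --- a constant slip in the paper that neither route can repair.
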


\begin{proof}
By arguing as in the proof of Theorem \ref{th:Sinkhorn_rate}, we have the following inequality
\begin{equation}
 \vert {\br}_n^{\varepsilon}-\hat{{\br}}^{\varepsilon}_{n,p}\vert^2  \leq \frac{2L_{\rho,\varepsilon}}{\varepsilon} \left( \frac{1}{n}\sum_{i=1}^n \vert \bq_i-\tilde{\bq}_i^{p_i}\vert +  \rho( N + \sqrt{N}) \right). \label{eq:mulinom1}
\end{equation}
Then, conditionally on $\bq_i$, we recall that $p_i \tilde{\bq}_i^{p_i}$ is a random vector following a multinomial distribution $\mathcal{M}(p_i,\bq_i)$. Hence, using the fact that the Euclidean 2-norm satisfies
$
\vert \bq_i-\tilde{\bq}_i^{p_i}\vert \leq \sum_{k=1}^{N} |\bq_{i,k} - \tilde{\bq}_{i,k}^{p_i}|,
$
it follows from the so-called Bretagnolle-Huber-Carol inequality (see Proposition A6.6 in \cite{MR1385671}) and by conditioning on $\bq_i$ that, for any $u > 0$,
\begin{equation}
\P \left( \vert \bq_i-\tilde{\bq}_i^{p_i}\vert  \geq u  \right) \leq 2^N \exp \left( - \frac{p_i u^2}{2}\right).
\label{eq:mulinom2}
\end{equation}
Hence, combining inequalities \eqref{eq:mulinom1} and \eqref{eq:mulinom2} with the union bound $\P \left(  \frac{1}{n}\sum_{i=1}^n \vert \bq_i-\tilde{\bq}_i^{p_i}\vert \geq u\right) \leq \sum_{i=1}^n \P \left(  \vert \bq_i-\tilde{\bq}_i^{p_i}\vert \geq u\right)  $ allows to complete the proof of Proposition \ref{prop:concentration2}.

\end{proof}

\section{Algorithms to compute penalized Wasserstein barycenters of Section \ref{sec:Regbar}}\label{sec:algo}
In this section we describe how the minimization problem 
\begin{equation}
\label{eq:regbar}
\min_{\mu}  \frac{1}{n} \sum_{i=1}^{n} W_2^2(\mu,\nu_{i}) +\gamma E(\mu) \mbox{ over } \mu \in \PP_{2}(\Omega),
\end{equation}
can be solved numerically by using an appropriate discretization to compute a numerical approximation of a regularized Wasserstein barycenter and the work of \cite{CuturiPeyre}.
In our numerical experiments, we focus on the case  where $E(\mu) = + \infty$ if $\mu$ is not a.c.\ to enforce the regularized Wasserstein barycenter to have a smooth pdf (we write $E(f) = E(\mu_{f})$ if $\mu$ has a density $f$).  In this setting, if the grid of points is of sufficiently large size, then the weights $f^k$ yield a good approximation of this pdf.
A discretization of the minimization problem \eqref{eq:regbar} is used to compute a numerical approximation of a regularized Wasserstein barycenter $\bmu_{\P_n}^{\gamma}$. It consists of using a fixed grid $\{x^k\}_{k=1}^N$  of equally spaced points $x^k\in\mathbb{R}^d$, and to approximate $\bmu_{\P_n}^{\gamma}$ by the discrete measure $\sum_{k=1}^{N} f^k \delta_{x^k}$ where the $f^k$ are positive weights summing to one which minimize a discrete version of the optimization problem  \eqref{eq:regbar}.

In what follows, we first describe an algorithm that is specific to the one-dimensional case, and then we propose another algorithm that is valid for any $d \geq 1$.

\paragraph{Discrete algorithm for $d=1$ and data defined on the same grid} \label{sec:algo1D}

We first propose to compute a regularized empirical Wasserstein barycenter for a dataset made of discrete measures $\nu_1,\ldots,\nu_{n}$ (or one-dimensional histograms) defined on the same grid of reals $\{x^k\}_{k=1}^N$ that the one chosen to approximate $\bmu_{\P_n}^{\gamma}$. Since the grid is fixed, we identify a discrete measure $\nu$ with the vector of weights $\nu=(\nu(x^1),\ldots,\nu(x^N))$ in $\R^N_{+}$ (with entries that sum up to one) of its values on this grid.

 The estimation of the regularized barycenter onto this grid can be formulated as:
\begin{equation}
\label{eq:regdens_app}
\min_f  \frac{1}{n} \sum_{i=1}^{n} W_2^2({f},\nu_{i}) +\gamma E({f}) \mbox{ s.t } \sum_k f^k=1,\mbox{ and } f^k = f(x^{k})\geq 0,
\end{equation}
with the obvious abuse of notation $W_2^2({f},\nu_{i})  = W_2^2(\mu_{f},\nu_{i})$ and $E(f) = E(\mu_{f})$.

Then, to compute a minimizer of the convex optimization problem \eqref{eq:regdens_app}, we perform a subgradient descent. We denote by $(f^{(\ell)})_{\ell \geq 1}$ the resulting sequence of discretized regularized barycenters  in $\R^N$ along the descent. Hence, given an initial value $f^{(1)} \in \R^N_{+}$ and for $\ell\ge 1$, we thus have
\begin{equation}
f^{{(\ell+1)}}=\Pi_S \left(f^{(\ell)}-\tau^{(\ell)}\left[\gamma\nabla E(f^{(\ell)})+\frac{1}{n}\sum_{i=1}^n\nabla_1 W_2^2({f^{(\ell)}},{\nu_i})\right]\right) \label{eq:algo}
\end{equation}
where $\tau^{(\ell)}$ is the $\ell$-th step time, and $\Pi_S$ stands for the projection on the simplex
$S=\{y \in\R^N_{+} \ \text{such that} \ \sum_{j = 1}^{N} y^j =1\}.$
Thanks to Proposition 5 in \cite{peyre2012wasserstein}, we are able to compute a sub-gradient of the squared Wasserstein distance $W_2^2({f^{(\ell)}},{\nu_i})$ with respect to its first argument (for discrete distributions). For that purpose, we denote by $R_f(s)=\sum_{x^j\le s}f(x^j)$ the cdf of $\mu_{f} = \sum_{k=1}^{N} f(x^k) \delta_{x^k}$ and by $R_f^-(t)=\inf\{s\in\R: R_f(s)\ge t\}$ its pseudo-inverse.
\begin{proposition}[\cite{peyre2012wasserstein}]
Let $f=(f(x^1),f(x^2),\, \ldots,\, f(x^N))$ and $\nu=(\nu(x^1), \nu(x^2),\,\ldots,\,\nu(x^N))$ be two discrete distributions defined on the same grid of values $x^1,\ldots,x^N$ in $\R$. For $p\ge 1$, the subgradients of $f\mapsto W_p^p(f,\nu)$ can be written as
\begin{equation}
\nabla_1W_p^p(f,\nu):x_j\mapsto\sum_{m\ge j}\vert x^m-\tilde{x}^m\vert^p-\vert x^{m+1}-\tilde{x}^m\vert^p
\end{equation}
where
\[\left\{
\begin{array}{ll}
& \tilde{x}^m=x^k \ \text{if} \ R_g(x^{k-1})<R_f(x^m)<R_\nu(x^k)\\
& \tilde{x}^m\in [x^{k-1},x^k] \ \text{if} \ R_f(x^m)=R_\nu(x^k)
\end{array}
\right.\]
\end{proposition}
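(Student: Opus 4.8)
The plan is to reduce everything to the closed-form expression of the one-dimensional $p$-Wasserstein distance through quantile functions, and then to differentiate it explicitly in the weights $f(x^j)$. Since the cost $|x-y|^p$ with $p\ge 1$ is a convex function of $x-y$, the optimal transport on the real line is the monotone rearrangement, so that (see e.g.\ \cite{Santam15})
\[
W_p^p(f,\nu) = \int_0^1 \left| R_f^-(t) - R_\nu^-(t) \right|^p \, dt ,
\]
where $R_f^-$ and $R_\nu^-$ are the pseudo-inverses of the cdfs $R_f$ and $R_\nu$. First I would record this identity, which turns the problem into differentiating an explicit integral with respect to the atomic masses of $f$.

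Because $f$ lives on the fixed grid $\{x^1,\ldots,x^N\}$, its quantile function is piecewise constant: writing $F_m := R_f(x^m) = \sum_{j\le m} f(x^j)$ (with $F_0=0$ and $F_N=1$), one has $R_f^-(t) = x^m$ for $t\in(F_{m-1},F_m]$, so that
\[
W_p^p(f,\nu) = \sum_{m=1}^N \int_{F_{m-1}}^{F_m} \left| x^m - R_\nu^-(t) \right|^p \, dt .
\]
Then I would differentiate this finite sum. Treating the cumulative masses $F_1,\ldots,F_{N-1}$ as intermediate variables, the fundamental theorem of calculus gives, for each interior level $1\le k\le N-1$,
\[
\frac{\partial}{\partial F_k} W_p^p(f,\nu) = \left| x^k - R_\nu^-(F_k) \right|^p - \left| x^{k+1} - R_\nu^-(F_k) \right|^p ,
\]
the two contributions arising from the upper limit of the $m=k$ term and the lower limit of the $m=k+1$ term. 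Using $\partial F_k/\partial f(x^j)=\mathds{1}_{\{k\ge j\}}$ in the chain rule and recognising $R_\nu^-(F_k)=R_\nu^-(R_f(x^k))=\tilde{x}^k$, summation over $k\ge j$ produces exactly the announced expression for $[\nabla_1 W_p^p(f,\nu)]_j$.

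The delicate point, which is what makes the result a \emph{subgradient} rather than a gradient, is the behaviour at levels where $R_f(x^m)$ equals a jump value $R_\nu(x^k)$ of the cdf of $\nu$: there $R_\nu^-$ is effectively set-valued, and $\tilde{x}^m$ may be taken to be any point of the closed interval delimited by the two neighbouring support points of $\nu$, which is precisely the second case of the statement. To justify that every such choice gives a genuine element of the subdifferential, I would first establish convexity of $f\mapsto W_p^p(f,\nu)$ from the Kantorovich formulation — for couplings $\pi_0,\pi_1$ transporting $f_0,f_1$ onto $\nu$, the interpolant $(1-\lambda)\pi_0+\lambda\pi_1$ is admissible for $(1-\lambda)f_0+\lambda f_1$ and has linearly interpolated cost — and then verify the subgradient inequality directly along feasible directions $e_a-e_b$, matching the one-sided directional derivatives of the piecewise-defined integrand with the candidate vector. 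I expect this last verification, namely reconciling the ambiguity of the quantile at critical levels with the one-sided derivatives so that the interval of admissible $\tilde{x}^m$ exactly sweeps out the subdifferential, to be the main obstacle; the differentiation away from those levels is routine.
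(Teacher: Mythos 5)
The paper contains no proof of this proposition: it is imported verbatim from \cite{peyre2012wasserstein} and used as a black box to justify the subgradient descent of Appendix~D, so there is no internal argument to compare yours against; I can only judge the proposal on its merits. On those merits your route is the right one, and it is the standard route for this result: the monotone-rearrangement identity $W_p^p(f,\nu)=\int_0^1 \vert R_f^-(t)-R_\nu^-(t)\vert^p\,dt$, the piecewise-constant quantile of $f$, differentiation in the cumulative variables $F_k=R_f(x^k)$, and the chain rule through the lower-triangular map $f\mapsto F$ reproduce exactly the telescoped formula with $\tilde{x}^k=R_\nu^-(F_k)$ in the first case (the statement's $R_g$ is a typo for $R_\nu$, the indexing in the equality case depends on the convention taken for the pseudo-inverse, and the dangling $x^{N+1}$ term is immaterial since on the simplex subgradients are only determined modulo the direction $\mathds{1}_N$).

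The one point you flag as the ``main obstacle'' --- whether \emph{independent} choices of the $\tilde{x}^m$ at critical levels all yield genuine subgradients, given that subdifferentials of convex functions on $\R^N$ are generally not boxes --- dissolves once you notice that the objective is \emph{separable} in the $F_k$'s. Abel summation of your decomposition gives
\begin{equation*}
W_p^p(f,\nu) \;=\; \int_0^1 \vert x^N-R_\nu^-(t)\vert^p\,dt \;+\; \sum_{k=1}^{N-1}\Phi_k(F_k),
\qquad \Phi_k(s):=\int_0^s h_k\bigl(R_\nu^-(t)\bigr)\,dt,
\end{equation*}
where $h_k(t)=\vert x^k-t\vert^p-\vert x^{k+1}-t\vert^p$. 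For $p\ge 1$ and $x^k<x^{k+1}$ a direct check shows $h_k$ is continuous and nondecreasing on $\R$, and $R_\nu^-$ is nondecreasing, so each $\Phi_k$ is an antiderivative of a nondecreasing function, hence convex in one variable, with $\partial\Phi_k(s)$ equal to the interval between the left and right limits of $h_k\circ R_\nu^-$ at $s$; by continuity and monotonicity of $h_k$ this interval is exactly $\{h_k(\tilde{x}):\tilde{x}\in[a,b]\}$ with $[a,b]$ the closed interval between the two neighbouring quantile values, i.e.\ the second case of the statement. Since the subdifferential of a separable convex sum is the product of the one-dimensional subdifferentials, pulling back through the linear map $F=Af$ gives $A^T\prod_k\partial\Phi_k(F_k)\subseteq\partial\bigl(W_p^p(\cdot,\nu)\bigr)(f)$, which is precisely the family parametrized by arbitrary, independent choices of the $\tilde{x}^m$. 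This two-line argument replaces both your anticipated direction-by-direction verification and your coupling-interpolation convexity lemma (which is correct, but no longer needed). With that substitution your outline closes into a complete proof.
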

Even if subgradient descent is only shown to converge with diminishing time steps \cite{Boyd08}, we observed  that using a small fixed step time (of order $10^{-5}$) is sufficient to obtain in practice a convergence of the iterates $(f^{(\ell)})_{\ell \geq 1}$. Moreover, we have noticed that the principles of FISTA (Fast Iterative Soft Thresholding, see e.g.\ \cite{Beck2009FISTA})  accelerate the speed of convergence of the above described algorithm.

\paragraph{Discrete algorithm for $d \geq 1$ in the general case} \label{sec:algo2D}

We assume that data $\nu_1,\ldots,\nu_n$ are given in the form of $n$ discrete probability measures (histograms) supported on $\R^{d}$ (with $d \geq 1$) that are not necessarily defined on the same grid. More precisely, we assume that
$$
\nu_i =  \sum_{j=1}^{p_{i}} \nu_i^j \delta_{y_{i}^{j}}
$$
for $1 \leq i \leq n$ where the $y_{i}^{j}$'s are arbitrary locations in $\Omega \subset \R^{d}$, and the $\nu_i^j$'s are positive weights (summing up to one for each $i$).

The estimation of the regularized barycenter onto a given grid $\{x^k\}_{k=1}^N$  of  $\mathbb{R}^d$ can then be formulated as the following minimization problem:
\begin{equation}
\label{eq:regdens_app2}
\min_f  \frac{1}{n} \sum_{i=1}^{n} W_2^2({f},\nu_{i}) +\gamma E({f}) \mbox{ s.t } \sum_k f^k=1,\mbox{ and } f^k\geq 0,
\end{equation}
with the notation $f=(f^1,f^2,\, \ldots,\, f^N)$ and the convention that $W_2^2({f},\nu_{i})$ denotes the squared Wasserstein distance between $\mu_{f} = \sum_{k=1}^{N} f^k \delta_{x^k}$ and $\nu_i$.

Problem \eqref{eq:regdens_app2} could be exactly solved by considering the discrete $p_{i} \times N$ transport matrices $S_i$ between  the barycenter $\mu_{f}$ to estimate and the data $\nu_i$. Indeed, problem \eqref{eq:regdens_app2} is equivalent to the convex problem 
\begin{equation}
\label{eq:regdens_app0}
\min_f \min_{S_1 \cdots S_n} \frac{1}{n} \sum_{i=1}^{n} \sum_{j=1}^{p_{i}}\sum_{k=1}^N ||y_{i}^{j}-x^k||^2 S_i^{j,k} +\gamma E({f}) 
\end{equation}
 under the linear constraints
$$\forall i=1,\ldots,n,\,  \sum_{j=1}^{p_{i}} S_i^{j,k}= f^k,\, \sum_{k=1}^N S_i^{j,k}= \nu^j_i,\mbox{ and } S_i^{j,k}\geq 0.$$
   However, optimizing over the $p_{i} \times N$ transport matrices $S_i$ for $1 \leq i \leq n$ involves memory issues when using an accurate discretization grid $\{x^k\}_{k=1}^N$ with a  large value of $N$. For this reason, we consider subgradient descent algorithms that allow dealing directly with problem \eqref{eq:regdens_app2}.

To this end, we rely on the dual approach introduced in \cite{carlier2014numerical} and the numerical optimisation scheme proposed in \cite{CuturiPeyre}.
Following these works, one can show that the dual problem  of \eqref{eq:regdens_app2} with a regularization of the form $E(Kf)$ and $K$ a discrete linear operator reads as
\begin{equation}
\label{eq:regdens2}
\min_{\phi_0,\cdots\phi_{n}} \sum_{i=1}^n H_{\nu_i}(\phi_i)+E^*_\gamma(\phi_{0})\,\,  \mbox{ s.t } K^T\phi_0+\sum_{i=1}^{n}\phi_i=0,
\end{equation}
where the $\phi_i$'s are dual variables (vectors in $\R^N$) defined on the discrete grid $\{x^k\}_{k=1}^N$, $E^*_\gamma$ is the Legendre transform of $\gamma E$ and $H_{\nu_i}(.)$ is the Legendre transform of $W_2^2(.,\nu_{i})$ that reads:
$$H_{\nu_i}(\phi_i)=\sum_{j=1}^{p_{i}} \nu_i^j  \min_{k=1\cdots N} \left(\frac12 ||y^{j}_{i}-x^k||^2-\phi_i^k\right).$$
Barycenter estimations $f_i$ can finally be recovered from the optimal dual variables $\phi_i$ solution of \eqref{eq:regdens2}  as:
\begin{equation}\label{eq:bar_reconstr}f_i\in\partial H_{\nu_i}(\phi_i),\,\,\textrm{for } i=1\cdots n.\end{equation}
Following \cite{carlier2014numerical}, one value of the above subgradient can be  obtained at point $x^k$ as:
\begin{equation}
\label{eq:subgrad}\partial H_{\nu_i}(\phi_i)_k=\sum_{j=1}^{p_{i}}  \nu^j_iS_i^{j,k},\end{equation}
where $S_i^{j,k}$ is any row stochastic matrix of size $p_{i} \times N$ checking:
$$S_i^{j,k}\neq 0\textrm{ iff }k\in\uargmin{k=1\cdots N} \left(\frac12 ||y^{j}_{i}-x^k||^2-\phi_i^k\right).$$
From the previous expressions, we see that $f_i^k=\sum_{j=1}^{p_{i}} \nu^j_iS_i^{j,k}$ corresponds to the discrete pushforward of data $\nu_i$ with the transport matrix $S_i$ with the associated cost:
$$ H_{\nu_i}(\phi_i)= \sum_{j=1}^{p_{i}}  \sum_{k=1}^{N} \left(\frac12 ||y^{j}_{i}-x^k||^2-\phi_i^k\right) S_i^{j,k}\nu^j_i.$$
\paragraph{Numerical optimization}
Following \cite{CuturiPeyre}, the dual problem \eqref{eq:regdens2}, can be simplified by removing one variable and thus  discarding the linear constraint  $ K^T\phi_0+\sum_{i=1}^{n}\phi_i=0$. In order to inject the regularity given by $\phi_0$ in all the reconstructed barycenters obtained by $\phi_i$, $i=1\cdots n$, we modified the change of variables of \cite{CuturiPeyre} by setting $\psi_{i}=\phi_i+K^T\phi_{0}/n$ for $i=1\cdots n$ and $\psi_{0}=\phi_0$, leading to $\sum_{i=1}^{n}\psi_{i}=0$. One  variable, say $\psi_n$, can then be directly obtained from the other ones. Observing that $\phi_n=-K^T\psi_{0}-\sum_{i=1}^{n-1}\psi_{i}/n$, we thus obtain:
\begin{equation}
\label{eq:regdens3}
\min_{\psi_{0},\cdots\psi_{n-1}} \sum_{i=1}^{n-1} H_{\nu_i}(\psi_{i}-K^T\psi_{0}/n)+H_{\nu_n}(-K^T\psi_{0}-\sum_{i=1}^{n-1}\psi_{i}/n)+E^*_\gamma(\psi_{0}).
\end{equation}
The subgradient \eqref{eq:subgrad} can then be used in a descent algorithm over the dual problem \eqref{eq:regdens3}.
For differentiable penalizers $E$, we consider the  L-BFGS algorithm \cite{Zhu:1997,lbfgs_implem} that integrates a line search method (see e.g.\ \cite{boyd2004convex}) to select the best time step $\tau^{(\ell)}$ at each iteration $\ell$ of  the subgradient descent:
\begin{equation}\label{alg:algo_2d}\left\{\begin{array}{lll}
\psi_0^{(\ell+1)}&= \psi_0^{(\ell)}-\tau^{(\ell)} (\nabla E^*_\gamma(\psi_0^{(\ell)})+d_0^\ell)\\
 \psi_i^{(\ell+1)}&= \psi_i^{(\ell)}-\tau^{(\ell)} d_i^\ell&i=1\cdots n-1,\\
 \end{array}\right.\end{equation}
where:{\small
\begin{equation*}\begin{array}{ll}
d^\ell_0&=K\left(\partial H_{\nu_n}\left(-K^T\psi_0^{(\ell)}/n-\sum_{i=1}^{n-1} \psi_i^{(\ell)}\right)-\sum_{i=1}^{n-1}\partial H_{\nu_i}\left(\psi_i^{(\ell)}-K^T\psi_0^{(\ell)}/n\right)\right)\\
d_i^\ell&=\partial H_{\nu_i}\left(\psi_i^{(\ell)}-K^T\psi_0^{(\ell)}/n\right)-\partial H_{\nu_n}\left(-K^T\psi_0^{(\ell)}/n-\sum_{i=1}^{n-1} \psi_i^{(\ell)}\right).
 \end{array}
 \end{equation*}
 }
 
 \noindent
  The barycenter is finally given by \eqref{eq:bar_reconstr}, taking $\phi_i=\psi_{i}-K^T \psi_{0}/n$. 
Even if we only treated differentiable functions $E$ in the theoretical part of this paper, we can numerically consider non differentiable penalizers $E$, such as Total Variation $(K=\nabla$,  $E=|.|_1$). In this case, we make use of the Fista algorithm. This  just modifies the update of $\psi_{0}$ in \eqref{alg:algo_2d}, by changing the explicit scheme involving $\nabla E^*_\gamma$ onto  an implicit one through the proximity operator of $E^*_\gamma$:
{\small\begin{equation*}
\begin{split}
\psi_0^{(\ell+1)}&=\textrm{Prox}_{\tau^{(\ell)} E^*_\gamma}\left( \psi_0^{(\ell)}-\tau^{(\ell)} d_0^\ell\right)=\uargmin{\psi}\frac1{2\tau^{(\ell)}}||\psi_0^{(\ell)}-\tau^{(\ell)} d_0^\ell-\psi ||^2+E^*_\gamma(\psi).
\end{split}
\end{equation*}}
 
\paragraph{Algorithmic issues and stabilization} As detailed in \cite{carlier2014numerical}, the computation of one subgradient in \eqref{eq:subgrad} relies on the look for Euclidean nearest neighbors between vectors $(y^{j}_{i},0)$ and $(x^k,\sqrt{c-\phi_i^k})$, with $c=\max_k\phi_i^k$. 
Selecting only one nearest neighbor leads to bad numerical results in practice as subgradient descent may not be stable. For this reason, we considered the $K=10$ nearest neighbors for each $j$ to build the row stochastic matrices $S_i$ at each iteration as:
$S_i^{j,k}={w_i^{jk}}/{\sum_{k'}w_i^{jk'}}$, with $w_i^{jk}=\exp(-(\frac12 \Vert y^{j}_{i}-x^k\Vert^2-\phi_i^k)/\varepsilon)$ if $k$ is within the $K$ nearest neighbors for $j$ and data $i$ and $w_i^{jk}=0$ otherwise.

\end{document}